\documentclass[a4paper,UKenglish,cleveref,autoref,thm-restate]{lipics-v2021}

\title{Fast Shortest Path in Graphs With Sparse Signed Tree Models and Applications}
\titlerunning{Fast Shortest Path in Graphs With Sparse Signed Tree Models and Applications}

\author{\'{E}douard Bonnet}{CNRS, ENS de Lyon, Université Claude Bernard Lyon 1, LIP UMR 5668, 69342 Lyon, France \and \url{http://perso.ens-lyon.fr/edouard.bonnet}}{edouard.bonnet@ens-lyon.fr}{https://orcid.org/0000-0002-1653-5822}{}
\author{Colin Geniet}{Discrete Mathematics Group, Institute for Basic Science (IBS), Daejeon, South Korea \and \url{https://colingeniet.com/}}{research@colingeniet.com}{https://orcid.org/0000-0003-4034-7634}{}
\author{Eun Jung Kim}{School of Computing, KAIST, Daejeon, South Korea \and \url{https://ssimplexity.github.io/}}{eunjungkim78@gmail.com}{https://orcid.org/0000-0002-6824-0516}{}
\author{Sungmin Moon}{School of Computing, KAIST, Daejeon, South Korea}{msm241_23@kaist.ac.kr}{}{}

\authorrunning{\'E. Bonnet, C. Geniet, E. J. Kim, S. Moon}

\Copyright{Édouard Bonnet, Colin Geniet, Eun Jung Kim, Sungmin Moon}

\category{}

\relatedversion{}

\supplement{}

\funding{}

\acknowledgements{We thank Hung Le for clarifying answers regarding \cite[Lemma 7.4]{Chan25}.}

\nolinenumbers 

\hideLIPIcs  

\EventEditors{John Q. Open and Joan R. Access}
\EventNoEds{2}
\EventLongTitle{42nd Conference on Very Important Topics (CVIT 2016)}
\EventShortTitle{CVIT 2016}
\EventAcronym{CVIT}
\EventYear{2016}
\EventDate{December 24--27, 2016}
\EventLocation{Little Whinging, United Kingdom}
\EventLogo{}
\SeriesVolume{42}
\ArticleNo{23}

\usepackage[utf8]{inputenc}  

\usepackage[T1]{fontenc}
\usepackage[colorinlistoftodos,bordercolor=orange,backgroundcolor=orange!20,linecolor=orange,textsize=normalsize]{todonotes}
\usepackage{amsmath}
\usepackage{amssymb}
\usepackage{bbm}
\usepackage{accents}
\usepackage{complexity}
\usepackage{booktabs}
\usepackage{bm}
\usepackage{fixmath}
\usepackage{tcolorbox}

\definecolor{amber}{rgb}{1.0, 0.75, 0.0}

\usepackage{xspace}
\usepackage{tikz}
\usepackage[ruled,vlined,linesnumbered]{algorithm2e}

\usetikzlibrary{fit}
\usetikzlibrary{arrows,arrows.meta}
\usetikzlibrary{patterns}
\usetikzlibrary{calc}
\usetikzlibrary{shapes}
\usetikzlibrary{positioning}
\usetikzlibrary{math}
\usetikzlibrary{shapes.symbols}
\usetikzlibrary{decorations.pathreplacing,calligraphy}
\usetikzlibrary{decorations.pathmorphing}
\usepackage[scr=boondox,scrscaled=1.05]{mathalfa}
\usetikzlibrary{shapes.geometric}

\renewcommand{\geq}{\geqslant}
\renewcommand{\leq}{\leqslant}
\renewcommand{\preceq}{\preccurlyeq}

\renewcommand{\le}{\leq}
\renewcommand{\ge}{\geq}

\newcommand{\defproblem}[3]{
  \vspace{1mm}
  \begin{tcolorbox}[
    colframe=black,        
    colback=white,         
    boxrule=0.5pt,         
    arc=4pt,               
    left=6pt, right=6pt,   
    top=6pt, bottom=6pt    
  ]    
    #1 \\
    {\bf{Input:}} #2 \\
    {\bf{Output:}} #3
  \end{tcolorbox}
  \vspace{1mm}
}

\newtheorem{question}{Question}

\crefname{claim}{Claim}{Claims}

\newcommand\abs[1]{\lvert #1\rvert}

\newcommand\sd{\text{sd}}

\newcommand\adj{\text{Adj}}

\newcommand{\ceil}[1]{\lceil #1 \rceil}

\newcommand{\Nn}{\mathbb{N}}

\newcommand{\Bc}{\mathcal{B}}

\newcommand{\Fc}{\mathcal{F}}

\newcommand{\Pc}{\mathcal{P}}
\newcommand{\Rc}{\mathcal{R}}

\newcommand{\Tc}{\mathcal{T}}

\newcommand{\dist}{\mathsf{dist}}
\newcommand{\Ball}{\mathsf{Ball}}

\newcommand{\mw}{\mathsf{mw}}

\newcommand{\qgt}{\succ}

\begin{document}

\maketitle

\begin{abstract}
  A~signed tree model of a~graph $G$ is a~compact binary structure consisting of a~rooted binary tree whose leaves are bijectively mapped to the vertices of $G$, together with 2-colored edges $xy$, called transversal pairs, interpreted as bicliques or anti-bicliques whose sides are the leaves of the subtrees rooted at $x$ and at $y$. 
  We design an algorithm that, given such a~representation of an unweighted $n$-vertex graph~$G$ with $p$ transversal pairs, and given a~source $v \in V(G)$, computes a~shortest-path tree rooted at $v$ in $G$ in time $O(p \log n)$.
  A~wide variety of graph classes are such that for all $n$, their $n$-vertex graphs admit signed tree models with $O(n)$ transversal pairs: for instance, those of bounded symmetric difference (hence, in particular, those of bounded flip-width, merge-width, twin-width, and degeneracy), more generally of bounded sd-degeneracy, as well as interval graphs.
  
  As applications of our \textsc{Single-Source Shortest Path} algorithm and new techniques, we 
  \begin{itemize}
  \item improve the runtime of the fixed-parameter algorithm for first-order model checking on graphs given with a~witness of low merge-width from cubic [Dreier \& Toruńczyk, STOC '25] to quadratic;
  \item give an $O(n^2 \log n)$-time algorithm for \textsc{All-Pairs Shortest Path} on graphs given with a~witness of low merge-width, generalizing a~result known for twin-width [Twin-Width III, SICOMP '24];  
  \item significantly extend and simplify an $O(n^2 \log n)$-time algorithm for multiplying two $n \times n$ matrices $A, B$ of bounded twin-width in [Twin-Width V, STACS '23]: now $A$ solely has to be \emph{an} adjacency matrix of a~graph of bounded twin-width and $B$ can be arbitrary;
  \item give an $O(n^2 \log^2 n)$-time algorithm for \textsc{All-Pairs Shortest Path} on graphs of bounded twin-width, bypassing the need for contraction sequences in [Twin-Width III, SICOMP '24; Bannach et al. STACS '24];
  \item give an $O(n^{7/3} \log^2 n)$-time algorithm for \textsc{All-Pairs Shortest Path} on graphs of symmetric difference $O(n^{1/3})$.
  \end{itemize}
  The second and the last two items imply the same for \textsc{Diameter}, \textsc{Radius}, \textsc{Eccentricity}, \textsc{Wiener Index}, etc.
  The last three items do \emph{not} assume any witness to be given as part of the input.  
\end{abstract}

\section{Introduction}\label{sec:intro}

Shortly after twin-width was introduced~\cite{twin-width1}, it was observed that graphs $G$ of bounded twin-width admit natural sparse representations called \emph{twin-decompositions}~\cite{twin-width3} or \emph{tree models}~\cite{BonnetNMST24}: a~rooted binary tree $T$ whose $n$ leaves are in one-to-one correspondence with the vertices of~$G$, together with relatively few extra edges called \emph{transversal edges} such that two ``leaves'' $\ell$ and $\ell'$ are adjacent in~$G$ whenever an ancestor of $\ell$ and an ancestor of $\ell'$ are linked by a~transversal edge, and in which case, this pair of adjacent ancestors is unique.
Every graph admits a~tree model if no constraint is imposed on the number of transversal edges or on the graph that they induce.
Not every graph class, though, admits \emph{sparse} tree models, i.e., ones with $O(n)$ transversal edges, since this property implies that the class has factorial growth.
Graphs of bounded twin-width happen to even admit \emph{degenerate} tree models, i.e., where the graphs formed by the transversal edges (together with $T$) are of bounded degeneracy.

Degenerate tree models are generally useful to the theory of twin-width, and have yielded faster shortest-path algorithms when such representations are given as part of the input: an~$O(n \log n)$-time algorithm for \textsc{Single-Source Shortest Paths}~\cite{twin-width3}, improved to~$O(n)$ time~\cite{bannach2024dag}.
They are at play in characterizing bounded twin-width classes by first-order transductions of pattern-avoiding permutation classes~\cite{BonnetNMST24}, and an $O(n)$-time algorithm multiplying $n \times n$ matrices of bounded twin-width over a~finite ring~\cite{twin-width5}.
They are also implicitly leveraged in the \mbox{$\chi$-boundedness} of graphs of bounded twin-width~\cite{twin-width3}, which can be thought of as a~top-down coloring of the tree model, and in their $O(\log n)$-bit adjacency labeling schemes~\cite{twin-width2} after arranging that the tree has logarithmic height.  
(On the other hand, if the main concern is data compression, better data structures are possible; see~\cite{PilipczukSZ22}.)
For some applications, like fast shortest-path algorithms, sparse tree models are sufficient. 

At this point, we should note that degenerate tree models can encode graphs of unbounded twin-width.
For instance, graphs of bounded degeneracy trivially admit degenerate tree models by placing the transversal edges directly at the leaves of~$T$.
Aiming to design short adjacency labeling schemes for broad graph classes, the authors of~\cite{bonnet2024signedtree} further generalize tree models with \emph{signed tree models}.
These add transversal anti-edges $A$, which together with the transversal edges $B$ form the transversal pairs (see also~\cite{twin-width5} where the transversal edges are labeled).
Now two ``leaves'' $\ell$ and $\ell'$ are adjacent in~$G$ if and only if the ``lowest''---when representing the tree $T$ with the root up and the leaves down---transversal pair with one endpoint ancestor of~$\ell$ and the other endpoint ancestor of~$\ell'$ is an edge (and not an anti-edge).
A~non-crossing condition on the transversal pairs makes this well-defined.

Surprisingly many factorial graph classes admit sparse signed tree models, and even degenerate ones.
The \emph{sd-degeneracy} graph parameter is defined in~\cite{bonnet2024signedtree}, extending degeneracy to the dense world, and doing so more permissively than \emph{symmetric difference}~\cite{AtminasCLZ15}.
As graphs of bounded sd-degeneracy admit degenerate signed tree models~\cite{bonnet2024signedtree}, this is in particular true for the (less general) classes of bounded degeneracy, symmetric difference, twin-width, flip-width~\cite{Torunczyk23}, and merge-width~\cite{merge-width}.\footnote{Indeed, bounded twin-width $\Rightarrow$ bounded merge-width $\Rightarrow$ bounded flip-width $\Rightarrow$ bounded symmetric difference $\Rightarrow$ bounded sd-degeneracy, and bounded degeneracy $\Rightarrow$ bounded symmetric difference.}
It is relatively easy to see that interval graphs too admit degenerate signed tree models (for which the trees are combs).
Hereditary classes of linear neighborhood complexity\footnote{A~class $\mathcal C$ has \emph{linear neighborhood complexity} if for every $G \in \mathcal C$ and for every subset $X \subseteq V(G)$, the number of distinct neighborhoods within $X$ of vertices of~$G$ is~$O(|X|)$.} have bounded symmetric difference~\cite{Davies25}, hence admit degenerate signed tree models.
We will see that classes with almost\footnote{A~class $\mathcal C$ has \emph{almost linear neighborhood complexity} if for every $G \in \mathcal C$ and for every subset $X \subseteq V(G)$, the number of distinct neighborhoods within $X$ of vertices of~$G$ is~$|X|^{1+o(1)}$.} linear neighborhood complexity admit signed tree models with~$n^{1+o(1)}$ transversal pairs; we may call them \emph{almost sparse signed tree models}.
It is conjectured that every monadically dependent class has almost linear neighborhood complexity~\cite{Dreier24}.

The generality of classes with (almost) sparse signed tree models motivates devising fast algorithms on these representations.
We first show how to efficiently convert signed tree models into other sparse encodings.
In the next theorem, an \emph{interval biclique partition} of a~graph $G$ is a~set of bicliques edge-partitioning $G$ together with a~linear order on the vertices of~$G$ such that each side of every biclique is an interval along this order.
We defer the definition of a~\emph{DAG compression}~\cite{bannach2024dag} to the next section.
The reader can now think of it as a~sparse digraph encoding adjacencies via reachability.  

\begin{restatable}{theorem}{mainthm}\label{thm:main}
  There is an $O(p \log n)$-time algorithm that converts a~signed tree model with $p$ transversal pairs of an $n$-vertex graph $G$ into each of the following:
  \begin{enumerate}
  \item an interval biclique partition of $G$ with $O(p)$ bicliques,
  \item a~DAG compression of size $O(p \log n)$ for $G$, or
  \item with additional $O(p \log^2 n)$ time, a~(positive) tree model of~$G$ with $O(p \log^2 n)$ transversal edges.
  \end{enumerate}
\end{restatable}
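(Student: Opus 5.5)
Order the vertices of $G$ by a DFS traversal of $T$, so that the leaf set $L(x)$ of every node $x$ is an interval. For a transversal pair $xy$, the edges it is responsible for are the pairs $(\ell,\ell')\in L(x)\times L(y)$ for which $xy$ is the lowest transversal pair separating them. Those are exactly $(L(x)\setminus \bigcup L(x_i))\times L(y)$ together with $L(x)\times(L(y)\setminus\bigcup L(y_j))$, restricted by the non-crossing condition. Here the $x_i$ range over descendants of $x$ carrying a pair to a descendant-or-equal of $y$, and symmetrically for the $y_j$.

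\textbf{Item 1.} I would process the pairs bottom-up. For each $x$, collect the pairs $xy$ and also the pairs below $x$ that point into the subtree of $y$. By non-crossing, the region of $L(x)\times L(y)$ not covered by lower pairs splits into rectangles. I would charge each rectangle to a boundary event, namely a pair nested just below $xy$, so that each transversal pair yields $O(1)$ rectangles. Positive pairs contribute their rectangles as bicliques and negative pairs contribute nothing. One point needs care: a complement region is a product of interval unions, not a product of intervals. I would fix this by choosing the DFS child order and then decomposing along the ``staircase'' of maximal lower pairs, which gives $O(\#\text{lower pairs})$ interval rectangles. Finding the nested structure means, for each $x$, locating the nearest ancestor pair of each pair in the product poset of $T\times T$. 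With a heavy-light decomposition or Euler-tour search structures this costs $O(\log n)$ per pair, for $O(p\log n)$ in total. I expect this charging and decomposition step to be the main obstacle: every complement region must be covered by $O(1)$ interval bicliques amortized per pair, and no negative region may leak in.

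\textbf{Item 2.} From the interval biclique partition, build a segment tree over the linear order, with $O(n)$ nodes and DAG edges from each segment node to its children down to the leaves. Each interval is the union of $O(\log n)$ canonical segments. A biclique $I\times J$ becomes a new node $b$, with edges from the canonical nodes of $I$ into $b$ (or from the vertices, using a mirrored segment tree with edges upward) and from $b$ to the canonical nodes of $J$. Reachability then encodes exactly the edges of $G$, and each edge is encoded exactly once because the bicliques partition $E(G)$. This gives size $O(p\log n)$ in the same time.

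\textbf{Item 3.} Take a balanced binary tree $T'$ over the linear order, which plays the role of the segment tree. Split each biclique $I\times J$ into $O(\log n)\cdot O(\log n)$ pairs of canonical nodes and add each pair as a positive transversal edge. Because the bicliques partition the edges, every adjacent pair of leaves has a unique pair of ancestors that are joined, and no non-adjacent pair has one. This yields a positive tree model with $O(p\log^2 n)$ transversal edges in additional $O(p\log^2 n)$ time.
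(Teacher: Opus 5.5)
\textbf{Verdict.} Items~2 and~3 are essentially the paper's construction (balanced binary tree over the leaf order, each interval split into $O(\log n)$ canonical subtrees, all $O(\log^2 n)$ canonical pairs added as positive transversal edges). Your plan for Item~1 also has the right outline: read pairs as rectangles in the DFS order, find the maximal pairs nested below each pair, and emit the uncovered part of each positive rectangle. But the step you flag as the ``main obstacle'' is exactly what is missing, and the route you sketch for it does not work.

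\textbf{The covered region is misdescribed.} Your formula for the pairs covered by $xy$ is incorrect. Suppose $x$ has children $x_1,x_2$, $y$ has children $y_1,y_2$, and the only pairs below $xy$ are $x_1y_1$ and $x_2y_2$ (these do not cross). Then $\bigcup_i L(x_i)=L(x)$ and $\bigcup_j L(y_j)=L(y)$, so your formula returns the empty set. Yet $xy$ covers $L(x_1)\times L(y_2)\cup L(x_2)\times L(y_1)$. The covered region is not a product of interval unions either. It is $R\setminus(R_1\cup\dots\cup R_k)$, where $R=L(x)\times L(y)$ and $R_1,\dots,R_k$ are the rectangles of the maximal pairs below $xy$: an arbitrary rectilinear region with holes.

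\textbf{The staircase fix does not work.} Choosing the DFS child order so that these maximal pairs form a ``staircase'' cannot rescue this. One leaf order must serve all transversal pairs at once, and flipping children realises only a restricted set of orders. The maximal pairs can sit in a permutation-like pattern $L(x_i)\times L(y_{\pi(i)})$ that need not be monotone under any of these orders. So ``$O(1)$ rectangles per pair'' remains an unproved assertion.

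\textbf{What is needed.} The paper's proof rests on two facts.
\begin{enumerate}
\item Orient each pair $uv$ so that $L(u)$ precedes $L(v)$. Non-crossing then makes the rectangles $L(u)\times L(v)$ a laminar family. Hence, in its inclusion forest, the children $R_1,\dots,R_k$ of any node $R$ are \emph{pairwise disjoint}.
\item The complement of $k$ pairwise-disjoint rectangles inside a rectangle can be partitioned into $O(k)$ rectangles in $O(k\log k)$ time, for \emph{any} leaf order (\cref{lem:rectangles-complement}). Equivalently, extend every horizontal side of each $R_i$ until it hits another $R_j$ or the boundary of $R$, and count faces via Euler's formula.
\end{enumerate}
Disjointness is essential in the second fact: overlapping rectangles can leave quadratically many complement components. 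Apply this at every positive node of the inclusion forest. The forest itself takes $O(p\log p)=O(p\log n)$ time by sorting by area and using dynamic range reporting (\cref{lem:rectangle-inclusion-tree}), in place of the heavy-light search you only sketch. This yields the $O(p)$ interval bicliques with no tuning of the DFS order.

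\textbf{A smaller point on Item~2.} A node $b$ with in-edges from the canonical nodes of $I$ and out-edges to those of $J$ is not a DAG compression in the paper's sense. There, compressed edges are separate from $D$, and the sinks of $D$ must be exactly $V(G)$. Also, sharing vertex copies between an upward and a downward tree makes reachability compose edges into paths. The fix is to use fresh nodes $v_I,v_J$ with arcs to the canonical nodes of $I$ and of $J$, plus the compressed edge $v_Iv_J$.
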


Either of the first two items (the first combined with~\cite{twin-width3}, the second, with~\cite{bannach2024dag}) implies an equally fast \textsc{Single-Source Shortest Path} algorithm for sparsely encoded graphs.

\begin{theorem}\label{thm:main-sssp}
  There is an $O(p \log n)$-time algorithm that, given a~signed tree model with $p$ transversal pairs of an $n$-vertex graph $G$ and $v \in V(G)$, outputs a~shortest-path tree of~$G$ rooted at~$v$.
\end{theorem}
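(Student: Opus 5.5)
The plan is to reduce to \cref{thm:main} and then run a breadth-first search on the resulting sparse encoding of $G$. I will use the first item: in $O(p\log n)$ time, convert the signed tree model into a linear order $\sigma$ on $V(G)$ together with a family $\Bc$ of $O(p)$ bicliques $(I_j,J_j)$. These bicliques edge-partition $G$, and every $I_j$ and $J_j$ is an interval of $\sigma$. (Alternatively, the second item gives a DAG compression of size $O(p\log n)$, on which the linear-time BFS of~\cite{bannach2024dag} directly yields the result. I describe the self-contained route via bicliques.)

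The search is a BFS from $v$ in which every biclique is \emph{fired} at most once per side. Process vertices in layers $L_0=\{v\}, L_1,\dots$. When a vertex $u\in L_i$ is dequeued, every biclique $(I_j,J_j)$ with $u\in I_j$ whose $I_j$-side has not yet been fired is fired now: all still-unvisited vertices of $J_j$ get distance $i+1$ and parent $u$. The symmetric rule applies when $u\in J_j$.

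Correctness follows as in ordinary BFS. A vertex $w\in J_j$ is adjacent to every vertex of $I_j$, so the first visited vertex of $I_j$ has the smallest distance within $I_j$. Hence firing the side once, at that moment, discovers every neighbor reachable through this biclique at the correct layer. Since $\Bc$ edge-partitions $G$, every edge $uw$ is covered by some biclique, and the usual BFS invariant holds. The recorded parents form a shortest-path tree.

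The step I expect to be the main obstacle is efficiency. We need to list the unvisited vertices of an interval $J_j$ in time proportional to their number plus $O(\log n)$, and to find, for each dequeued $u$, the incident bicliques not yet fired, without scanning all of them.

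\begin{itemize}
\item \emph{Unvisited vertices.} Keep the unvisited vertices in a union–find or successor structure over positions of $\sigma$ (for instance a balanced BST). Enumerating the unvisited vertices in an interval then costs $O(\log n)$ per reported vertex and per query.
\item \emph{Incident bicliques.} For each side-interval, I would store the intervals in an interval/segment tree over $\sigma$, with each interval at $O(\log n)$ canonical nodes. Stabbing at $u$ walks a root-to-leaf path and reports all intervals at visited nodes, then deletes those canonical nodes' lists. Each interval is then reported once per node, for $O(p\log n)$ total work. Marking a fired interval lets later occurrences be skipped in $O(1)$.
\end{itemize}

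Summing, there are $O(p)$ firings at $O(\log n)$ each, $n$ vertex discoveries at $O(\log n)$ each, and $O(p\log n)$ for the stabbing structure. Since $n=O(p)$ we may assume this, as otherwise the graph has isolated vertices handled trivially. The total is $O(p\log n)$, matching \cref{thm:main}'s conversion cost.
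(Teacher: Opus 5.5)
Your proposal is correct and follows the paper's proof. The paper likewise takes the interval biclique partition with $O(p)$ bicliques from the first item of \cref{thm:main} and then invokes \cite[Theorem 6.2]{twin-width3}, or alternatively uses the DAG compression with \cref{thm:DAG-comp}; your segment-tree BFS is simply a self-contained proof of that cited step.

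One justification is wrong: $n=O(p)$ does not follow from excluding isolated vertices, since a single positive transversal pair between the two children of the root yields $K_{a,b}$, which has no isolated vertex. The bound should therefore be read as $O((n+p)\log n)$, which is what both your argument and the paper's deliver, and the paper tacitly makes the same assumption.
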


Notably, for sparse tree models (i.e., when $p=O(n)$), \cref{thm:main-sssp} solves \textsc{Single-Source Shortest Path} in time $O(n \log n)$; and in time $n^{1+o(1)}$ for almost sparse tree models.
This is essentially optimal, and for graphs with $\Omega(n^{1+\varepsilon})$ edges for some $\varepsilon > 0$, these running times are sublinear in the size of~$G$ itself.

The algorithm behind \cref{thm:main-sssp} uses routines from computational geometry.
It works as follows.
We first compute a~forest $F$ on top of the transversal pairs of the signed tree model $\mathcal T=(T,A,B)$, where $e$ is the~parent of~$f$ whenever $e$ is \emph{above} $f$ (i.e., each endpoint of $e$ is an ancestor of an endpoint of~$f$) and there is no transversal pair $e'$ that is \emph{in between} (i.e., such that $e$ is above $e'$, and $e'$ is above $f$).
We can clean $F$ so that every edge in $E(F)$ is between transversal pairs of $\mathcal T$ of distinct signs (i.e., on distinct sets among $A,B$).
Now consider a~transversal edge $e \in V(F)$ and all its children $e_1, \ldots, e_h$, which are transversal anti-edges.
Note that $e$ corresponds to a~positive rectangle $R := X \times Y$ in~$G$ where $X$ and $Y$ are the vertices at the leaves of the subtree of $T$ rooted at each endpoint of~$e$.
Similarly, each $e_i$ induces a~negative rectangle $R_i \subset R$.
We thus wish to complement the pairwise-disjoint rectangles $R_1, \ldots, R_h$ within $R$ in the form of a~suitably bounded number of pairwise-disjoint rectangles.
We show that this can be done with $O(h)$ rectangles in $O(h \log h)$ time.
Applying this step in parallel for every transversal edge $e$, we obtain a~partition of~$G$ into $O(p)$ bicliques whose sides are intervals; hence, an interval biclique partition.
This suffices to prove~\cref{thm:main-sssp} invoking~\cite[Theorem 6.2]{twin-width3}.
The rest of~\cref{thm:main} is established by replacing~$T$ with a~balanced binary tree, so that each biclique can be realized with few transversal edges.

As one can solve \textsc{All-Pairs Shortest Path} with $n$ calls to \textsc{Single-Source Shortest Path},~\cref{thm:main-sssp} has the following consequence.

\begin{corollary}\label{cor:apsp-gen}
  Let $\mathcal C$ be a~class of unweighted graphs such that there is a~(randomized) algorithm that inputs an $n$-vertex $m$-edge graph of~$\mathcal C$ and outputs a~signed tree model of~$G$ with at~most $p(n)$ transversal pairs in time $O(T(n,m))$, for two functions $p$ and $T$.
  
  Then, there is a~(randomized) algorithm that solves \textsc{All-Pairs Shortest Path} in time $O(T(n,m)+n \cdot p(n) \log n)$ on~$n$-vertex graphs of~$\mathcal C$.
\end{corollary}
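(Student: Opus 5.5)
The plan is to run the conversion once and then make $n$ independent calls to the single-source algorithm of \cref{thm:main-sssp}, one per source, and add up the running times.

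First, on input an $n$-vertex, $m$-edge graph $G \in \mathcal C$, call the assumed (possibly randomized) algorithm. In time $O(T(n,m))$ it returns a signed tree model $\mathcal T$ of $G$ with at most $p(n)$ transversal pairs. This model does not depend on any source vertex, so it is computed only once. If the algorithm is randomized, we inherit its guarantee: either an expected running time, or a correctness probability, which then carries over unchanged since every later step is deterministic.

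Next, for each $v \in V(G)$, run the algorithm of \cref{thm:main-sssp} on $(\mathcal T, v)$. It outputs a shortest-path tree rooted at $v$ in time $O(p(n)\log n)$. From this tree we read off $\dist(v,u)$ for all $u$, together with a predecessor pointer for each $u$, by a traversal taking $O(n)$ time. This is dominated, because we may assume $p(n) \ge n-1$; this holds whenever $G$ is connected, since each transversal pair can join only one pair of subtrees. To avoid relying on connectivity, one can instead note that the output size of each call is already $\Theta(n)$, so the call's running time is $\Omega(n)$ in any case.

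Summing over the $n$ sources, the total time is
\[
O(T(n,m)) + n \cdot O(p(n)\log n) = O(T(n,m) + n\cdot p(n)\log n),
\]
as claimed. The $n$ shortest-path trees together encode all pairwise distances and shortest paths, which is the standard output for \textsc{All-Pairs Shortest Path}.

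There is no genuine obstacle here. The only point needing care is that the $\Theta(n^2)$ size of the all-pairs output fits within the stated bound. The argument above handles this: each single-source call already has to write $n$ entries, so its cost is at least $\Omega(n)$.
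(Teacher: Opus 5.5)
Your argument is the same as the paper's: compute the signed tree model once, then call \cref{thm:main-sssp} from each of the $n$ sources.

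One aside is wrong and should be dropped. It is false that $p(n)\ge n-1$ whenever $G$ is connected. A single positive transversal pair between the two children of the root already yields the connected graph $K_{a,b}$; a pair joins only one pair of subtrees, but those subtrees can be huge.

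So only your output-size remark absorbs the $O(n)$ per-source read-off. Even that works only because the $O(p\log n)$ bound of \cref{thm:main-sssp} tacitly presumes $p(n)\log n=\Omega(n)$. The honest per-call cost is $O(n+p\log n)$, since the balanced tree and the BFS on the distance model touch all $n$ leaves. The corollary's stated bound therefore implicitly assumes $p(n)=\Omega(n/\log n)$, which holds in all the paper's applications but which the paper also leaves unstated.
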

    
Using the functional equivalence of twin-width and versatile twin-width~\cite[Lemma 3.8]{twin-width2}, we show via~\cref{thm:vers-tww-to-sd,lem:sd-degen-to-stm} the following. 
\begin{theorem}\label{thm:vers-tww-to-stm}
  Let $\mathcal C$ be any class of bounded twin-width.
  There is an algorithm that inputs an $n$-vertex $m$-edge graph $G \in \mathcal C$, and outputs a~signed tree model of~$G$ with $O(n \log n)$ transversal pairs, in time $O((m+n) \log n)$ with high probability.\footnote{That is, with probability at~least~$1-n^{-c}$ for any fixed but arbitrarily large constant $c>0$.}
\end{theorem}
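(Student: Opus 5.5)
The plan is to go through symmetric difference and sd-degeneracy, as the statement announces, using versatile twin-width as the bridge. Versatile twin-width is the combinatorial witness we can exploit without ever computing a contraction sequence.

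\textbf{Step 1: from bounded twin-width to small symmetric difference with a random ordering.} By \cite[Lemma 3.8]{twin-width2}, every graph of $\mathcal C$ has versatile twin-width at most some constant $d$ depending only on $\mathcal C$. This means it admits $d$ edge-disjoint-style partition sequences with a rich supply of possible contractions at each step. Hence, at every stage, a constant fraction of the current parts can be paired into near-twins of small symmetric difference. I expect \cref{thm:vers-tww-to-sd} to extract the following from this. In every induced subgraph $H$ of $G$ on $n'$ vertices, there are $\Omega(n')$ disjoint pairs $\{u,v\}$ with $|N_H(u) \triangle N_H(v)| = O(1)$. Consequently, if we sample a random pair among candidates, we find one of small symmetric difference with constant probability.

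\textbf{Step 2: the algorithm, a randomized sd-elimination.} Build an elimination order $v_1,\dots,v_n$. At each step, pick a vertex $v$ together with a witness $u$ in the remaining graph such that $|N(u)\triangle N(v)|\le c$ within the remaining vertices. Record $(v,u,N(u)\triangle N(v))$ and delete $v$. This elimination is exactly the input format of \cref{lem:sd-degen-to-stm}. That lemma turns an sd-degeneracy elimination with constant witnesses into a signed tree model, at a cost of $O(c)$ transversal pairs per eliminated vertex plus tree overhead. The $\log n$ factor arises because we do not eliminate one vertex at a time. We proceed in $O(\log n)$ rounds; in each round we eliminate a constant fraction of the remaining vertices simultaneously, via a matching of near-twins from Step 1.

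\textbf{Step 3: cost accounting.} Each round contributes $O(n)$ transversal pairs. This is because each merge of a near-twin pair adds $O(c)$ edges/anti-edges between the new tree node and the rest, and the bounded symmetric difference must be re-expressed relative to the current parts. Over $O(\log n)$ rounds this totals $O(n\log n)$ pairs. For the running time, a single round must test candidate pairs and compute symmetric differences. Using adjacency lists, testing $\{u,v\}$ costs $O(\deg u+\deg v)$. Choosing random candidate pairs so that the expected degree sum is $O((m+n)/n')$ per test gives $O(m+n)$ expected work per round. The Chernoff and union bounds over $O(\log n)$ rounds, with $O(\log n)$ repetitions, then yield the high-probability bound $O((m+n)\log n)$.

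\textbf{Main obstacle.} The main obstacle is Step 1 combined with the efficient search. We must guarantee that a constant fraction of the random candidate pairs are near-twins, and not merely that some near-twin pair exists. We must also ensure that testing them does not cost $\Theta(m)$ per pair. To do so, I would first try to exploit versatility: the $d$-sequence provides, at each level, a partition in which a linear number of disjoint part-pairs are mergeable. I would then pair vertices within the same small part by hashing their neighborhoods, which can be done in randomized linear time. The symmetric difference of the test itself can be truncated once it exceeds $c$, so each test costs $O(\min(\deg))$ plus $c$.
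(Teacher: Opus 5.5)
\textbf{There is a genuine gap.} Your outer architecture matches the paper's: versatile twin-width, then rounds of simultaneous eliminations producing an sd-degeneracy sequence, then \cref{lem:sd-degen-to-stm}. But the step that carries all the difficulty is left open, and none of your suggested mechanisms fills it. You need to find, in $O(m+n)$ time per round and without any witness, $\Omega(n')$ disjoint pairs of small symmetric difference.
\begin{itemize}
\item A uniformly random pair does not work. Versatility only guarantees $\Theta(n')$ such pairs in the worst case (e.g., in half-graphs), out of $\Theta(n'^2)$ pairs, so a random pair is a near-twin with probability as small as $O(1/n')$.
\item The partitions of a versatile contraction sequence are not available. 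No efficient algorithm is known to compute such sequences for unordered graphs, which is precisely why the paper avoids them.
\item Hashing full neighborhoods only groups exact twins.
\end{itemize}

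The missing idea is to hash the \emph{traces} $N(v)\cap X$ on a random set $X$ that contains each vertex independently with probability $p=1/(2f(d))$, where $f(d)$ bounds the versatile twin-width. A pair with $\sd\le f(d)$ then lands in the same class with probability at least $(1-p)^{f(d)}\ge 1/2$. A pair with $\sd>\gamma:=2f(d)(c+3)\ln n$ does so with probability at most $n^{-c-3}$. By a union bound over all pairs, with high probability any two vertices in a common class are within $\gamma$. So you may pair them arbitrarily inside classes, verify each pair in $O(\deg u+\deg v)$, and delete one vertex per pair. A round costs $O(m+n)$ and eliminates $\Omega(n'/f(d))$ vertices with probability at least $1/4$ regardless of history. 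A Chernoff bound then gives $O(f(d)\log n)$ rounds.

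This also corrects your accounting. The threshold must be $\Theta(\log n)$ rather than a constant. You need $p=\Theta(1/f(d))$ so that near-twins collide, but then pairs whose symmetric difference is a constant above your threshold also collide with constant probability. Nothing then guarantees that arbitrarily formed pairs pass the test, and the constant-fraction progress is lost. The $O(n\log n)$ transversal pairs therefore come from the width $w=O(f(d)\log n)$ of the sequence, through the bound $(w+1)(n-1)$ of \cref{lem:sd-degen-to-stm}. They do not come from ``$O(n)$ pairs per round over $O(\log n)$ rounds''. Each vertex is eliminated once, so constant-width witnesses would give $O(n)$ pairs in total whatever the number of rounds; the $O(\log n)$ rounds only account for the running time. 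Had your constant-width plan worked, it would produce sparse signed tree models in near-linear time, which the paper leaves open (Question~\ref{q:comput-sstm}).
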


Thus, by~\cref{cor:apsp-gen,thm:vers-tww-to-stm}:
\begin{theorem}\label{thm:apsp-tww}
  Let $\mathcal C$ be any class of bounded twin-width.
  \textsc{All-Pairs Shortest Path} in~$\mathcal C$ has an algorithm in time $O(n^2 \log^2 n)$ with high probability.
\end{theorem}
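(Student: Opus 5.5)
The statement follows by instantiating \cref{cor:apsp-gen} with the signed tree models supplied by \cref{thm:vers-tww-to-stm}. Fix a class $\mathcal C$ of bounded twin-width and an input $n$-vertex $m$-edge graph $G \in \mathcal C$. By \cref{thm:vers-tww-to-stm}, we compute with high probability, in time $T(n,m) = O((m+n)\log n)$, a signed tree model $\mathcal T$ of $G$ with $p(n) = O(n \log n)$ transversal pairs. We then run the \textsc{Single-Source Shortest Path} algorithm of \cref{thm:main-sssp} from every source $v \in V(G)$. Each call costs $O(p \log n) = O(n \log^2 n)$, so the $n$ calls together cost $O(n^2 \log^2 n)$. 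Since $m \le n^2$, the conversion costs $O(n^2 \log n)$, which is dominated. This gives the total $O(T(n,m) + n\,p(n)\log n) = O(n^2 \log^2 n)$ of \cref{cor:apsp-gen}.

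Two points need checking. The first is that no twin-width witness is needed as input. \cref{thm:vers-tww-to-stm} takes only $G$, and internally uses versatile twin-width through \cref{thm:vers-tww-to-sd,lem:sd-degen-to-stm}. So the pipeline is witness-free, as the abstract claims.

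The second point is the ``with high probability'' qualifier. Randomness enters only in the running time of the signed-tree-model construction; the output is, I assume, always a valid model. The shortest-path phase is deterministic and correct on any valid signed tree model. Therefore the answer is always correct, and the time bound $O(n^2\log^2 n)$ holds with probability at least $1-n^{-c}$, inherited directly from \cref{thm:vers-tww-to-stm}. If instead that construction could fail, we would verify the model or simply rerun it; by the high-probability guarantee this changes neither the bound nor the success probability.

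Reading off the shortest-path trees is routine: each tree gives $n$ distances in $O(n)$ time, for $O(n^2)$ overall. The only step that is not pure bookkeeping is confirming that the parameter bound $p(n)=O(n\log n)$ (rather than $O(n)$) does not change the exponent of the logarithm beyond $\log^2 n$. It does not, since $n \cdot (n\log n) \cdot \log n = n^2\log^2 n$.
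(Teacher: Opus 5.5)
Your proposal is correct and matches the paper's own derivation: the paper obtains this theorem by instantiating \cref{cor:apsp-gen} with the $O(n\log n)$-pair signed tree model of \cref{thm:vers-tww-to-stm}, or equivalently by combining \cref{thm:vers-tww-to-sd} with \cref{cor:sd-degen-to-apsp}. Your assumption that the construction always outputs a valid model is justified, because the algorithm of \cref{thm:vers-tww-to-sd} is Las Vegas: it re-checks $\sd_G(u,v)\le\gamma$ for every pair it appends, so the width bound, and hence $p(n)=O(n\log n)$, holds deterministically, and only the running time is probabilistic.
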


This implies a~similar running time for \textsc{Diameter}, \textsc{Radius}, \textsc{Eccentricity}, \textsc{Wiener Index}, etc.
Previously, such algorithms (with fewer $\log n$ factors) were only known when a~contraction sequence (that is, witness of low twin-width) was provided as part of the input~\cite{twin-width3,bannach2024dag}.
\Cref{thm:apsp-tww} is close to optimal: under the Strong Exponential-Time Hypothesis, for any $\varepsilon > 0$, \textsc{Diameter} cannot be solved in time $O(n^{2-\varepsilon})$ on $n$-vertex graphs of twin-width at~most~4 (even with a~witness)~\cite{twin-width3,Berge22}.

On the more general classes of bounded symmetric difference (even when it is only bounded by $O(n^{1/3})$), we show in~\cref{thm:sym-diff-sd-degen} how to find signed tree models with $O(n^{4/3} \log n)$ transversal pairs in time $O(n^{7/3} \log n)$.
\Cref{cor:apsp-gen} then yields: 

\begin{theorem}\label{thm:sym-diff}
  There is an algorithm that solves \textsc{All-Pairs Shortest Path} on $n$-vertex graphs of symmetric difference~$O(n^{1/3})$, in time $O(n^{7/3} \log^2 n)$ with high probability.
\end{theorem}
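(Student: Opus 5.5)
The plan is to obtain \cref{thm:sym-diff} by plugging the construction of \cref{thm:sym-diff-sd-degen} into \cref{cor:apsp-gen}. So the work splits into two parts: first produce a sparse-ish signed tree model from a graph of symmetric difference $k = O(n^{1/3})$, then run $n$ instances of \cref{thm:main-sssp}.

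\textbf{Building the signed tree model.} A graph of symmetric difference at most $k$ can be peeled as follows. Repeatedly find two vertices $u,v$ whose neighborhoods, outside $\{u,v\}$, differ in at most $k$ vertices. Delete $v$ and record it as a near-twin of $u$. Because the class is hereditary, such a pair exists at every stage.

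This is exactly an sd-degeneracy-type elimination ordering, so \cref{lem:sd-degen-to-stm} converts it into a signed tree model.

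\begin{itemize}
\item When $v$ is attached next to $u$, the tree gains a new internal node.
\item The at most $k$ disagreeing vertices are encoded by transversal pairs incident to $v$'s leaf.
\item Balancing or a $\log n$ overhead in the conversion gives $O(nk\log n) = O(n^{4/3}\log n)$ transversal pairs in total.
\end{itemize}

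\textbf{Finding the pairs fast (main obstacle).} Computing the ordering naively means comparing all pairs of neighborhoods at every step, which is too slow. I would use randomized sketching instead:

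\begin{itemize}
\item Maintain for each vertex a hash or sampled fingerprint of its adjacency row, for instance random subset-parity sketches or min-hash style $\ell_1$ sketches of the 0/1 rows.
\item Use these sketches to locate a pair at symmetric difference $O(k)$ in roughly $\tilde O(n^{4/3})$ per step, or amortised $O(n^{7/3}\log n)$ overall.
\item Keep the sketches updated under deletions.
\end{itemize}

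Allowing a constant-factor slack, returning a pair of difference $O(k)$ rather than at most $k$, is what makes approximate sketching sufficient. That slack still preserves the $O(n^{4/3}\log n)$ bound on transversal pairs. Boosting the success probability with $O(\log n)$ independent repetitions gives the high-probability guarantee. This is the step whose time bound, $T(n,m) = O(n^{7/3}\log n)$, I expect to be delicate.

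\textbf{Concluding.} Apply \cref{cor:apsp-gen} with $p(n) = O(n^{4/3}\log n)$ and $T(n,m) = O(n^{7/3}\log n)$. The total running time is
\[
O\bigl(n^{7/3}\log n + n\cdot n^{4/3}\log n\cdot\log n\bigr) = O(n^{7/3}\log^2 n),
\]
and it holds with high probability, as claimed in \cref{thm:sym-diff}.
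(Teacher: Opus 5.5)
Your top-level reduction matches the paper: a low-width sd-degeneracy sequence, converted into a signed tree model by \cref{lem:sd-degen-to-stm}, then \cref{cor:apsp-gen}. Your final arithmetic is also right. Had you simply invoked \cref{thm:sym-diff-sd-degen} as stated, that would be the whole proof. But the construction you describe in its place does not establish the $O(n^{7/3}\log n)$ bound, and that bound is the entire content of the theorem.

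Finding one near-twin pair per deletion in $\tilde O(n^{4/3})$ time is asserted, not argued. The natural sketch that separates symmetric difference $k$ from larger values restricts every row to a random set of about $n/k$ coordinates. It already costs $\Theta(n\cdot n/k)=\Theta(n^{5/3})$ to compute, so recomputing it for each of the $n-1$ deletions gives $n^{8/3}$. Reusing one sketch across many deletions needs a reason, which you do not supply.

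A constant-factor slack is also not enough. At sampling rate $1/k$, a pair of symmetric difference $Ck$ still collides with probability about $e^{-C}$. Groups can therefore be swamped by pairs above your threshold, each costing $\Theta(n)$ to reject, and an arbitrary pairing inside a group can then waste the genuinely good pairs. The acceptance threshold must be $\gamma=\Theta(k\log n)$, so that every such collision has probability $n^{-c-3}$ and a union bound over all $\binom{n}{2}$ pairs applies. This is where the $\log n$ in $O(n^{4/3}\log n)$ actually comes from: the width is $O(n^{1/3}\log n)$. The conversion adds no overhead; it gives at most $(d+1)(n-1)$ pairs.

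The missing idea is to amortize one sample over $\Omega(n^{1/3})$ deletions. Any induced subgraph $H$ with at least $2\lfloor n^{1/3}\rfloor$ vertices has $\lfloor n^{1/3}\rfloor$ pairwise disjoint pairs, each of symmetric difference at most $s+2n^{1/3}$ in $H$. To see this, greedily extract a pair of symmetric difference at most $s$ from what remains, and note that deleting two vertices changes any symmetric difference by at most $2$.

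A round then works as follows:
\begin{enumerate}
\item Sample $X$ at rate $p=\frac{1}{2(s+2n^{1/3})}$.
\item Group the current vertices by $N(v)\cap X$, in time $O(n|X|)=O(n^{5/3}\log n)$.
\item Pair vertices arbitrarily within groups and verify each formed pair against $\gamma$.
\item Delete one vertex per accepted pair.
\end{enumerate}
Each of the disjoint good pairs shares a group with probability at least $(1-p)^{s+2n^{1/3}}\ge\frac12$. Hence with probability at least $\frac14$ a round deletes at least $\frac14\lfloor n^{1/3}\rfloor$ vertices. Chernoff then bounds the number of rounds by $O(n^{2/3})$ with high probability, for a total of $O(n^{7/3}\log n)$; the $O(n)$-time verifications add only $O(n^2)$.
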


Recall that the fastest \textsc{All-Pairs Shortest Path} for \emph{general} unweighted graphs takes $O(n^\omega \log n)$ time~\cite{Seidel95}, where $\omega$ is the matrix-multiplication time exponent, currently at $2.371339 > \frac{7}{3} = 2.333\dots$~\cite{Alman25}, and that no purely combinatorial truly subcubic algorithm is known for this problem.

\cref{thm:main,thm:apsp-tww} can be combined to significantly extend an $O(n^2 \log n)$-time algorithm multiplying two $n \times n$ matrices of bounded twin-width~\cite[Theorem 5]{twin-width5}.
\begin{restatable}{theorem}{matmult}\label{thm:mat-mult}
  Let $\mathcal C$ be any class of bounded twin-width and $\mathcal A$ be any additive group.
  There is an algorithm that, given any adjacency matrix $M$ of an $n$-vertex graph of~$\mathcal C$ and any $n \times n$ matrix $N$ over $\mathcal A$, computes $MN$ in time $O(n^2 \log n)$ with high probability. 
\end{restatable}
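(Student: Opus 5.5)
Given the adjacency matrix $M$ of an $n$-vertex graph $G\in\mathcal C$, we first read off $G$ (with $m\le n^2$ edges) and apply \cref{thm:vers-tww-to-stm}. With high probability, in time $O((m+n)\log n)=O(n^2\log n)$, this gives a signed tree model $\mathcal T=(T,A,B)$ of $G$ with $p=O(n\log n)$ transversal pairs. We then apply the first item of \cref{thm:main} to convert $\mathcal T$ into an interval biclique partition of $G$. This costs $O(p\log n)=O(n\log^2 n)\subseteq O(n^2)$ time and yields a linear order $\sigma$ on $V(G)$ together with $q=O(p)=O(n\log n)$ bicliques $I_1\times J_1,\dots,I_q\times J_q$. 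Here every $I_k,J_k$ is an interval of $\sigma$, and these bicliques partition the edges of $G$.

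If $M$ is not symmetric or has a nonzero diagonal, we interpret it as the bipartite-style adjacency of rows versus columns; a twin-width bound for $G$ transfers to this matrix. If needed, we instead run the argument on ordered bipartite pairs, meaning each biclique contributes its two orientations $I_k\times J_k$ and $J_k\times I_k$ as rectangles of the matrix. Either way, $M=\sum_k \mathbf 1_{R_k}$ over $O(n\log n)$ pairwise disjoint combinatorial rectangles $R_k=I_k\times J_k$ (row set times column set), each side an interval of $\sigma$. Disjointness matters: since the bicliques edge-partition $G$, no entry of $M$ is counted twice, so only the additive group structure of $\mathcal A$ is used. No multiplication in $\mathcal A$ is needed, because the entries of $M$ are $0/1$.

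Next we compute $MN$. Fix a column $c$ of $N$ and reorder the rows of $N$ according to $\sigma$. We compute the prefix sums $S_c(t)=\sum_{s\le t}N_{\sigma(s),c}$ for all $t$, using $O(n)$ group operations per column and $O(n^2)$ in total. For a rectangle $R_k$ with column interval $J_k=[a,b]$ in $\sigma$, the value $v_{k,c}=S_c(b)-S_c(a-1)$ is obtained in $O(1)$ using inverses in the group. Row $r$ of $MN$ in column $c$ equals $\sum_{k: r\in I_k} v_{k,c}$. Adding $v_{k,c}$ to every row of the interval $I_k$ is a range update, so we use a difference array: add $v_{k,c}$ at the start of $I_k$, subtract it just after its end, then take prefix sums along $\sigma$. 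This costs $O(q+n)$ per column, hence $O(n(q+n))=O(n^2\log n)$ overall. Finally we permute the result back from $\sigma$ to the original indexing.

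The total running time is $O(n^2\log n)$, and the only source of randomness and failure probability is \cref{thm:vers-tww-to-stm}. The main point needing care is ensuring that $M$, which is merely \emph{an} adjacency matrix under some row/column order, is faithfully represented by the disjoint interval rectangles. This holds because the rectangles are defined on vertex sets, not on matrix positions, and we handle the ordering by an explicit permutation. The second point needing care is checking that the bound $q=O(p)$ is indeed $O(n\log n)$, so that the per-column cost stays $O(n\log n)$.
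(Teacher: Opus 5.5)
Your proposal is correct and follows the paper's proof essentially step for step. You obtain a signed tree model via \cref{thm:vers-tww-to-stm}, convert it to an interval biclique partition via \cref{thm:main}, and then run, per column, the prefix-sum/difference-array computation of \cref{thm:ibp-vm-mult} between the permutations to and from $\sigma$. The only slip is presentational. Adding both orientations $I_k\times J_k$ and $J_k\times I_k$ is always required (the paper symmetrizes $\mathcal B$ unconditionally), not merely ``if needed''. The hedge about non-symmetric $M$ is moot, since an adjacency matrix of a graph is symmetric with zero diagonal.
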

The proof of~\cref{thm:mat-mult} has three steps.
We first compute a~signed tree model of the graph with $O(n \log n)$ transversal pairs, by~\cref{thm:vers-tww-to-stm}. 
We then convert it to an interval biclique partition $(\mathcal B,\prec)$ with $|\mathcal B|=O(n \log n)$ bicliques, by~\cref{thm:main}.
Let $M'$ be the reordering of~$M$ following~$\prec$, thus $M = PM'P^T$ for some permutation matrix defined by the original order of~$M$ and $\prec$.
We give a~simple algorithm (\cref{thm:ibp-vm-mult}) to compute $M'X$ for any vector $X \in \mathcal A^n$ in time $O(|\mathcal B|+n)=O(n \log n)$.
This algorithm can be thought of as expressing $M'$ as $L_n S L_n$ where $S$ is a~sparse $\{-1,0,1\}$-matrix with at~most~$4|\mathcal B|$ nonzero entries and $L_n$ is the $n \times n$ full lower-triangular $\{0,1\}$-matrix.
(We are only adding and subtracting entries of~$X$, so $\mathcal A$ does not need to be a~ring.)
We can thus multiply $M'$ with any $n \times n$ matrix over~$\mathcal A$ in time $O(n^2 \log n)$, and the identity $M = PM'P^T$ implies that the same holds for~$M$. 

In contrast, \cite[Theorem 5]{twin-width5} crucially needs that the matrix $M$ itself has bounded twin-width (when seen as an ordered graph).
The algorithm indeed starts by approximating a~contraction sequence; this is known for ordered graphs \emph{but not for graphs}.
This builds upon~\cite{twin-width4} and is fairly involved and impractical.
The next step is to invoke an algorithm answering first-order queries~\cite{GajarskyPPT22}, and is even less practical.
Our new algorithm is simpler: the computation of the signed tree model and the matrix-multiplication algorithm (steps 1 and 3) are around 15--20 lines of code, and the conversion to an interval biclique partition only uses elementary computational geometry and tree operations.

It is also more general.
For instance, \cref{thm:mat-mult} allows one to compute $M_1 M_2 \ldots M_{k-1} M_k$, as $M_1(M_2(\ldots(M_{k-1} M_k)\ldots))$, where each $M_i$ is an adjacency matrix of an $n$-vertex graph, in time $O(k n^2 \log n)$.
This is not achievable by \cite[Theorem 5]{twin-width5} even if the matrices $M_i$ were of bounded twin-width, as the twin-width bound of intermediate products increases with~$k$, making the dependence in~$k$ much worse. 

\medskip

We already mentioned that graphs of bounded merge-width admit sparse signed tree models.
Those can be computed efficiently from so-called \emph{construction sequences} witnessing low merge-width; see~\cref{lem:cstr-seq-to-stm}.
Note that a~construction sequence has length~$O(n^2)$ (and sequences of length $O(n)$ and optimal width always exist).
Thus, as a~consequence of~\cref{lem:cstr-seq-to-stm,cor:apsp-gen}: 

\begin{theorem}\label{cor:merge-seq-to-apsp}
  There is an algorithm that, given a~construction sequence of an $n$-vertex~$G$ with radius-1 width~$d$, solves \textsc{All-Pairs Shortest Path} on~$G$ in $O(dn^2 \log n)$ time. 
\end{theorem}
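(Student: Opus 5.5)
The plan is to combine \cref{lem:cstr-seq-to-stm} with \cref{cor:apsp-gen}; the only care needed is in the cost of producing the signed tree model. First, from the given construction sequence of $G$ with radius-1 width $d$, we build a signed tree model $\mathcal T=(T,A,B)$ of $G$ via \cref{lem:cstr-seq-to-stm}. What we need from that lemma is that the model has $p=O(dn)$ transversal pairs. The intuition: each time two parts are merged, or a positive or negative flip is applied between two parts, we record a transversal pair between the corresponding nodes of the tree, and the radius-1 width bound limits how many such pairs attach to each vertex. We also need the construction time to be $T(n,m)=O(dn^2\log n)$ at worst, which is dominated by $n\cdot p\log n$.

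The length of the construction sequence needs attention, since it can be as large as $O(n^2)$. Processing the sequence step by step costs time linear in its length, up to polylog factors, giving $O(n^2\log n)$ or better. This is still within the $O(dn^2\log n)$ budget. If \cref{lem:cstr-seq-to-stm} states its running time in terms of the sequence length, we simply bound that length by $O(n^2)$.

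Second, we invoke \cref{cor:apsp-gen} with $p(n)=O(dn)$: that is, we run the algorithm of \cref{thm:main-sssp} once from each of the $n$ sources. Each run takes $O(p\log n)=O(dn\log n)$, so all $n$ runs take $O(dn^2\log n)$. Adding the preprocessing cost gives the claimed bound. Since the witness is given, no randomization is needed.

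The main obstacle is the first step: certifying that the signed tree model produced from a construction sequence has $O(dn)$ transversal pairs, rather than one pair per step of a possibly quadratic-length sequence. I would first check the charging argument in \cref{lem:cstr-seq-to-stm}. Each flip or edge addition between parts must be charged to a vertex whose radius-1 ball in the resolved-edge graph stays bounded by $d$. Redundant operations, for example a flip between parts whose pair is already fully resolved, must be discarded or merged into existing pairs, so that every vertex is charged at most $O(d)$ times overall.
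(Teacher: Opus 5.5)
Your outline (build a signed tree model from the sequence, then apply \cref{cor:apsp-gen} with $p(n)=O(dn)$) is the right one. But there is a genuine gap: the step you flag as the main obstacle is the only nontrivial one, you leave it unproved, and you look for it in the wrong place. \Cref{lem:cstr-seq-to-stm} contains no charging argument. Merges only create tree nodes, not transversal pairs. Each resolve operation of the \emph{given} sequence becomes one transversal pair. So the model has $n+p$ pairs, where $p$ is the number of resolves in the input, a priori bounded only by the input length, i.e.\ $O(n^2)$. Plugged into \cref{cor:apsp-gen}, this gives only $O(n^3\log n)$. The paper's route fills the gap with \cref{lem:short-cstr-seq}, the algorithmic version of the postponement argument of~\cite[Lemma~3.4]{merge-width}. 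In time linear in the input length $m=O(n^2)$, the sequence is replaced by one with the same radius-1 width and length at most $(2d+1)n$. Only then does \cref{lem:cstr-seq-to-stm} give $O(dn)$ transversal pairs, and \cref{cor:apsp-gen} yields $O(n^2)+O(n\cdot dn\log n)=O(dn^2\log n)$.

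Your idea (discard redundant resolves, then charge) can be made rigorous, but the charging has to be on parts, i.e.\ nodes of $T$, not on vertices of $G$. A single vertex can lie in up to $n-1$ nested parts, each resolved with up to $d-1$ others, so per-vertex charges need not be $O(d)$. Concretely:
\begin{itemize}
\item Any resolve between the same two parts after the first one is a no-op, so keep only the first.
\item Charge a resolve between distinct parts $A,B$ to whichever of them is destroyed first by a merge, or to the final partition if neither is.
\item Just before that merge, both $A$ and $B$ are still parts and every pair between them is resolved. So each $v\in A$ has all of $B$ in its radius-1 ball, and each of the $2n-1$ parts receives at most $d-1$ charges.
\end{itemize}
Hence there are only $O(dn)$ useful resolves. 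This would be a valid, slightly more direct substitute for \cref{lem:short-cstr-seq}, but as written your proposal asserts the $O(dn)$ bound rather than proving it.
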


The good dependence in~$d$ means that on graphs of radius-1 merge-width $n^{o(1)}$ (called \emph{almost bounded}) given with a~corresponding construction sequence, \textsc{All-Pairs Shortest Path} can be solved in time $n^{2+o(1)}$.
(These classes are conjectured to comprise every monadically dependent class.)

We also use \cref{thm:main} to improve the running time of the first-order model checking algorithm on graphs of bounded merge-width from cubic~\cite{merge-width} to quadratic. 

\begin{theorem}\label{thm:fomc}
  Fix $d, k \in \mathbb N$.
  There is an $r =O_1(k)$ and an algorithm that decides $G \models \varphi$ in time $O_{d,k}(n^2)$ for every $n$-vertex graph $G$ given with a~construction sequence witnessing that $G$ has radius-$r$ merge-width at~most~$d$, and every first-order sentence $\varphi$ of quantifier rank~$k$.
\end{theorem}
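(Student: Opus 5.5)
The plan is to follow the structure of the Dreier--Toruńczyk model checking algorithm for bounded merge-width, and to locate and remove the single step responsible for the cubic running time. That algorithm is a game-based recursion in the style of the flipper/splitter approach. It runs for a number of rounds bounded by a function of $d$ and $k$, and in each round it does three things:
\begin{enumerate}
\item it computes, for the current graph (obtained from $G$ by a bounded number of flips), balls of radius $r'=O_1(k)$ around every vertex;
\item it extracts from the construction sequence a bounded number of flips that localize the game;
\item it recurses on the balls, together with Gaifman-style locality bookkeeping.
\end{enumerate}
As I understand it, the cubic cost comes from the ball computations: computing distances up to $r'$ from every vertex by BFS takes $O(m)=O(n^2)$ per vertex. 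Every other step should be quasi-linear in the length of the sequence, or quadratic.

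The first step is to replace the ball computations with our machinery. By \cref{lem:cstr-seq-to-stm}, the construction sequence yields in quadratic time a signed tree model of $G$ with $O_d(n)$ transversal pairs. A bounded number of flips between parts of a partition of size $O_{d,k}(1)$ changes this model by only $O_{d,k}(n)$ transversal pairs: each flip between two vertex sets can be simulated by sign changes on a balanced structure, or more simply by recomputing the model from the modified construction sequence. I would commit to the latter option, since flipped graphs of bounded merge-width again come with construction sequences of bounded width. \Cref{thm:main-sssp} then gives, for every source, a BFS tree in $O_{d,k}(n\log n)$ time. Over all sources this is $O(n^2\log n)$ per round, and only a bounded number of rounds is needed.

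The remaining issue is the $\log n$ factor, since the target is $O_{d,k}(n^2)$. Here I would exploit the fact that only bounded radius $r'$ is needed. From \cref{thm:main} we obtain an interval biclique partition with $O(p)$ bicliques, and the $\log n$ in its computation is paid once per flipped graph, not once per source. A BFS truncated at depth $r'$ over an interval biclique partition then runs in $O(r'(p+n))$ per source, using the standard technique of scanning each biclique only once per level with union-find-free interval deletion. We would use a linked list of unvisited vertices in the interval order, so each deletion is $O(1)$ and each biclique is touched $O(1)$ times per level. This gives $O_{d,k}(n)$ per source and $O_{d,k}(n^2)$ overall.

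The main obstacle is checking that the rest of the Dreier--Toruńczyk recursion, namely the bookkeeping of ball types and the recursion on subinstances whose sizes sum appropriately, actually fits in quadratic time. In particular, the recursive instances must be handed construction sequences, or signed tree models, of total size $O(n^2)$ without recomputation blowing up. The first thing I would try is to argue that induced subgraphs inherit restricted construction sequences computable in time linear in the sequence length. I would then verify that the recursion depth and branching are bounded by $O_{d,k}(1)$, so that the total cost stays $O_{d,k}(n^2)$.
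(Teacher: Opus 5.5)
\textbf{There is a genuine gap.} Your plan rests on a description of the Dreier--Toru\'nczyk algorithm that does not match it, so the speed-up you design targets the wrong computation. That algorithm is not a bounded-round flipper-game recursion on balls of flipped graphs. Flipper games won in a bounded number of rounds characterize monadic stability, and bounded merge-width classes contain the half-graphs (as bounded twin-width classes already do), so no such recursion exists in this setting.

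What the algorithm actually does is a dynamic programming along the given construction sequence, which by \cref{lem:short-cstr-seq} has length $O_d(n)$. At every step $i$ it computes local types of the vertices through the locality lemma \cite[Lemma~4.2]{merge-width}. The distances involved are not those of $G$ or of flips of $G$. They are the distances in the Gaifman graph $(V,E_i\cup N_i)$ of resolved pairs at step $i$, and they enter only through a bounded number of calls to \textsc{$c$-Bounded $r$-Scattered Maximal Subset} (the scatter types). Since $(V,E_i\cup N_i)$ may have $\Theta(n^2)$ edges, the greedy BFS routine costs $O(n^2)$ per step, hence $O(n^3)$ overall; this is the real bottleneck. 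A faster all-sources bounded-radius BFS in $G$, as you propose, never touches these graphs. Moreover, computing all balls in each of the $O(n)$ Gaifman graphs would by itself already be cubic.

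The missing idea is \cref{obs:gaifman-pos-cstr-seq}. The graph $(V,E_i\cup N_i)$ is constructed by the \emph{positive} construction sequence $(\Pc_1,E_1\cup N_1,\varnothing),\dots,(\Pc_i,E_i\cup N_i,\varnothing)$, i.e., the prefix with every resolve made positive. By \cref{lem:cstr-seq-to-stm} this gives a positive tree model, which is in particular a DAG compression of size $O(n)$. \Cref{thm:DAG-comp} then yields, in $O(n)$ time, a $0$--$1$ weighted distance model of size $O(n)$. On that model the greedy scattered-set procedure (one out-BFS and one in-BFS from each chosen vertex) runs in $O(cn)$ by \cref{lem:scattered-set-dist-model}. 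Each step then costs $O_{d,k}(n)$, and the whole algorithm costs $O_{d,k}(n^2)$.

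Signed tree models, \cref{thm:main}, and the $\log n$ factor you work to remove play no role here. The obstacles you leave open are artifacts of the misidentified framework rather than steps of a proof of the statement: construction sequences for flipped graphs, inheritance by induced subgraphs, and the total size of recursive instances.
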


As we explain next, one can efficiently find almost sparse signed tree models for graphs of bounded merge-width \emph{without} witnesses.
This happens more generally to classes of (almost) linear neighborhood complexity.

\begin{figure}[!ht]
\centering
\begin{tikzpicture}[
  >=Stealth,
  node distance=7mm and 12mm,
  class/.style={
    draw, rounded corners, align=center,
    font=\footnotesize, inner sep=5pt,
    text width=32mm
  },
  note/.style={font=\scriptsize, align=center},
  incl/.style={->, line width=0.4pt},
  weak/.style={->, dashed, line width=0.4pt}
]

\node[class] (tww) {
  \textbf{bounded twin-width}\\[1mm]
  ~~~\textsc{APSP}: \hyperref[thm:apsp-tww]{\textcolor{blue}{$O(n^2\log^2 n)$}}\\
  with seq: $O(n^2)$
};

\node[class, above=of tww] (mw) {
  \textbf{bounded merge-width}\\[1mm]
  ~~~\textsc{APSP}: $O(n^2\log^3 n)$\\
  with seq: \hyperref[cor:sd-degen-to-apsp]{\textcolor{blue}{$O(n^2\log n)$}}
};

\node[class, above=of mw] (fw) {
  \textbf{bounded flip-width}\\[1mm]
  \textsc{APSP}: $O(n^2\log^3 n)$
};

\node[class, above=of fw] (symdiff) {
  \textbf{bounded symmetric difference}\\[1mm]
  \textsc{APSP}: \hyperref[thm:sym-diff]{\textcolor{blue}{$O(n^{7/3}\log^2 n)$}}\\
  even for symmetric difference $O(n^{1/3})$
};

\node[class, above=of symdiff] (sddeg) {
  \textbf{bounded sd-degeneracy}\\[1mm]
  \textsc{APSP}: $O(n^\omega\log n)$\\
  with seq: \hyperref[cor:sd-degen-to-apsp]{\textcolor{blue}{$O(n^2\log n)$}}
};

\node[class, left=of symdiff] (deg) {
  \textbf{bounded degeneracy}\\[1mm]
  \textsc{APSP}: $O(n^2)$
};

\node[class, right=of symdiff] (lnc) {
  \textbf{linear neighborhood complexity}\\[1mm]
  \textsc{APSP}: $O(n^2\log^3 n)$
};

\node[class, above=of lnc] (alnc) {
  \textbf{almost linear neighborhood complexity}\\[1mm]
  \textsc{APSP}: $n^{2+o(1)}$
};

\draw[incl] (tww) -- (mw);
\draw[incl] (mw)  -- (fw);
\draw[incl] (symdiff) -- (sddeg);
\draw[incl] (deg) -- (symdiff);
\draw[incl] (fw) -- (lnc);
\draw[incl] (lnc) -- (alnc);
\draw[incl] (lnc) -- (symdiff);

\end{tikzpicture}
\caption{Classes with sparse signed tree models, or almost signed tree models (for the two on the right column), with the best running time known for undirected unweighted \textsc{All-Pairs Shortest Path}.
  When relevant, we also indicate the running time when witnessing sequences of low width are given. 
  The results obtained in this paper are highlighted in blue.}
\label{fig:recap-apsp}
\end{figure}
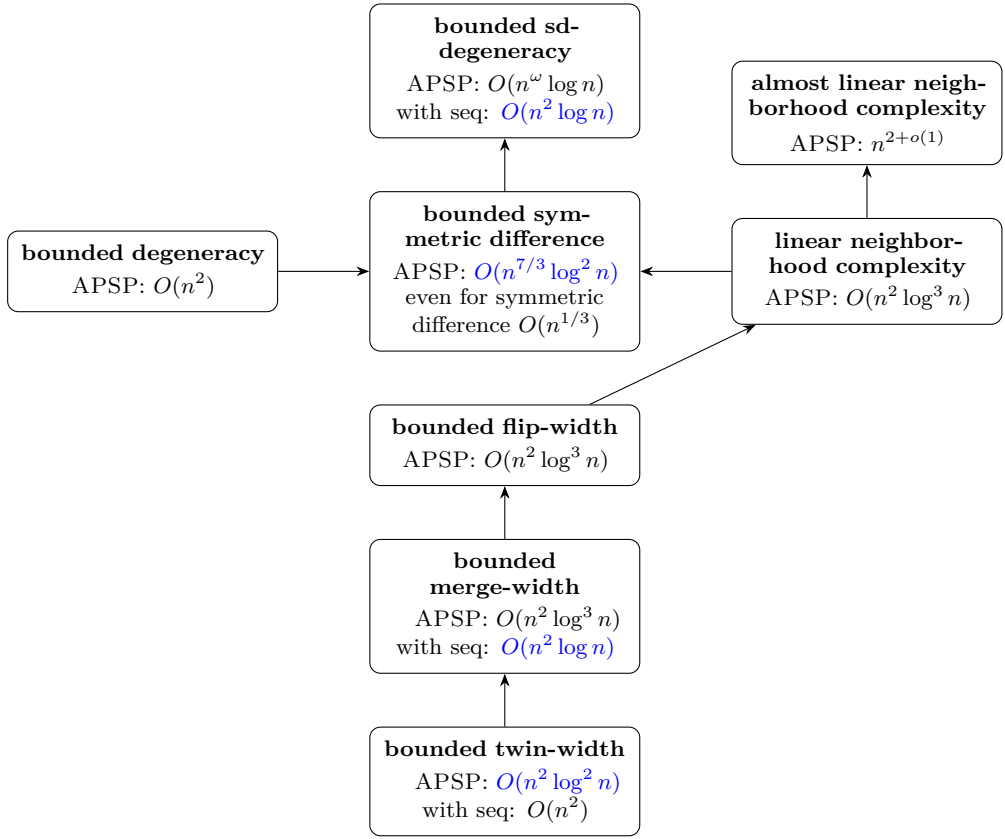

\medskip

\textbf{Related work.}
There are other sublinear encodings of graphs from structured classes.
The most relevant to the current paper works on classes of bounded VC density (also called shatter dimension).
The \emph{VC density} of~a~graph class $\mathcal C$ is the infimum of the reals $d$ such that there is a~constant $c$ satisfying that for every $G \in \mathcal C$ and for every subset $X \subseteq V(G)$, the number of distinct neighborhoods within $X$ of vertices of~$G$ is at~most~$c|X|^d$.
We further say that the class has \emph{attained VC density $d$} if this infimum is attained, that is, if the latter property holds for the infimum.
Thus classes of attained VC density~1 are precisely those of linear neighborhood complexity. 

Let us particularize a~result of Welzl on set systems whose dual has bounded VC dimension~\cite[Lemma 4.1]{Welzl88} to \emph{1-neighborhood hypergraphs}, that is, the set system $(V(G),\{N[v]:v \in V(G)\})$ for a~graph $G$.  
Observe that any such set system has as many sets as elements, and is isomorphic to its dual.
Welzl's result implies that one can order the vertex set of any $n$-vertex graph from a~class of attained VC density $d \geqslant 1$ such that if all $n$ neighborhoods are partitioned into maximal intervals along this order, then the total number of intervals is $O(n^{2-\frac{1}{d}} \log^2 n)$; see~\cite[Remark 3.12]{Anand25} for why $O(n^{2-\frac{1}{d}} \log n)$ is stated.
Several recent works have suggested efficient algorithms to compute this order (and the intervals).

\begin{lemma}[Lemma 7.4 in~\cite{Chan25} or Lemma 13 in~\cite{Duraj24}]\label{lem:int-rep}
  Let $\mathcal C$ be a~class of attained VC density~$d \geqslant 1$.
  There is an algorithm running in time $O(n^{1+\frac{1}{d}} \log n)$ that inputs an $n$-vertex graph $G \in \mathcal C$, and returns a~linear order $\prec$ on~$V(G)$ partitioning the $n$ neighborhoods into $O(n^{2-\frac{1}{d}}\log^2 n)$ maximal intervals along~it.
  In an additional $O(|E(G)|)$ time, the intervals can be computed.
\end{lemma}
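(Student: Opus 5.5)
This lemma is imported from \cite{Chan25,Duraj24}, so I would reconstruct the standard argument: Welzl-style iterative reweighting, with an approximate spanning-path routine.

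The plan starts from a spanning path on $V(G)$ with low crossing number with respect to the set system $\{N(v)\}_{v\in V(G)}$. An edge $xy$ of the path is crossed by $N(v)$ if exactly one of $x,y$ lies in $N(v)$. Ordering $V(G)$ along the path, the number of maximal intervals of $N(v)$ is at most one plus half the number of path edges crossed by $N(v)$. So it suffices to find a path of total crossing number $O(n^{2-\frac1d}\log^2 n)$.

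\textbf{Step 1: a spanning tree by multiplicative weights.} The dual of a class of attained VC density $d$ has dual shatter function $O(m^d)$, so every weighted subset $X$ of $m$ remaining vertices contains a pair $x,y$ separated by at most an $O(m^{-1/d})$-fraction (by weight) of the neighborhoods. This follows from a packing/pigeonhole argument on the $O(m^d)$ cells cut out by any few neighborhoods. Welzl's scheme then runs for $n$ rounds. In each round it picks such a light pair, adds edge $xy$, deletes one of its endpoints, and doubles the weight of every neighborhood crossing the edge. The standard potential argument bounds the maximum crossing number by $O(n^{1-1/d}\log n)$ per neighborhood. This gives a total of $O(n^{2-1/d}\log n)$, up to the extra $\log$ explained in \cite[Remark 3.12]{Anand25}. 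Converting the tree into a path by an Euler tour at most doubles the crossings.

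\textbf{Step 2: efficiency, the main obstacle.} A naive implementation needs pairwise symmetric differences in every round and costs far more than $O(n^{1+1/d}\log n)$. Instead I would sample: in each round, draw $O(m^{1/d}\log n)$ neighborhoods proportionally to weight. These induce a partition of the current vertices into $O(m^{1/d\cdot d}\cdot\mathrm{polylog})$ classes. Two vertices in the same class, or hashed to equal fingerprints, are separated by few heavy neighborhoods with high probability, by an $\varepsilon$-net argument. Fingerprints can be maintained incrementally by hashing adjacency into random sums, and weights can be updated lazily in batches (phases where $m$ halves). Each round then costs about $O(n^{1/d}\log n)$ amortized, for $O(n^{1+1/d}\log n)$ overall. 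This randomized approximation of the light pair is where the real difficulty lies. I would follow the implementation of \cite[Lemma 7.4]{Chan25} essentially verbatim, accepting the loss of an extra $\log n$ factor, which yields the stated $\log^2 n$.

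\textbf{Step 3: the intervals.} Given $\prec$, scan each vertex's adjacency list after bucket-sorting all adjacency lists by position in $\prec$. This is a single radix pass over all $2|E(G)|$ entries, after which consecutive runs give the maximal intervals in $O(|E(G)|)$ additional time.
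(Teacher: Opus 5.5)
You have the same route the paper relies on, but the part of your Step~2 that is meant to give the running time does not work as written.

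\textbf{What matches.} The paper gives no proof of this lemma. It imports it from \cite[Lemma~7.4]{Chan25} and \cite[Lemma~13]{Duraj24}, remarking only two things: the interval count from Welzl's argument carries a $\log^2 n$ factor, and the sources phrased for VC dimension use it only through the $O(m^d)$ shatter-function bound. Attained VC density~$d$ does not bound the VC dimension by~$d$, so this check is needed.

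Your Steps~1 and~3 are sound and consistent with this. The dual shatter bound holds because the neighborhood hypergraph is self-dual. The Euler-tour conversion and the single bucket pass over adjacency lists are fine. If you make the VC-dimension-versus-density check explicit, your final appeal to \cite[Lemma~7.4]{Chan25} is exactly the paper's justification.

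\textbf{A small correction in Step~1.} Pigeonholing on the cells of a random weighted sample only yields a pair separated by an $O(m^{-1/d}\log m)$ fraction. Dropping the logarithm needs a Haussler-type packing lemma. This is harmless: it gives $O(n^{1-1/d}\log n)$ crossings per neighborhood for $d>1$ and $O(\log^2 n)$ for $d=1$, i.e.\ the $O(n^{2-1/d}\log^2 n)$ intervals of the statement.

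\textbf{Where Step~2 fails.} This is the only part that would account for the $O(n^{1+1/d}\log n)$ time bound.
\begin{enumerate}
\item \emph{No guaranteed collision.} With $r=\Theta(m^{1/d}\log n)$ sampled neighborhoods, the bound $O(r^d)=O(m\log^d n)$ on the number of classes exceeds the number $m$ of current vertices. So nothing forces two vertices into a common class; your own count $O(m^{1/d\cdot d}\,\mathrm{polylog})$ shows this. A collision is forced only when $r$ is at most a small constant times $m^{1/d}$. The logarithm must go into the net parameter $\varepsilon$, not into the sample size.
\item \emph{Cost versus crossing control.} Classifying $m$ vertices against a fresh sample costs up to $mr$ adjacency queries. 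Resampling in each of the $n$ Welzl rounds therefore costs about $\sum_{m\le n} m^{1+1/d}=\Theta(n^{2+1/d})$ in total. Freezing the weights for a whole phase, as your lazy batches do, avoids this cost but removes the feedback the potential argument needs. A light neighborhood missing from the sample can split every class and be crossed by $\Theta(m)$ of the pairs chosen during the phase. Then no per-neighborhood crossing bound follows.
\end{enumerate}
Reconciling the cost of recomputing classes with the crossings incurred between reweightings is precisely the non-trivial content of the cited lemmas, and your proposal does not supply it. As it stands, the time bound rests solely on the citation. Either present it that way, as the paper does, or replace Step~2 with an argument that handles both issues.
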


We are particularly interested in the $d=1$ case, when the running time is essentially quadratic (hence listing the intervals is \emph{not} a~bottleneck) and the number of intervals is $O(n \log^2 n)$.
This can also be attained by~\cite[Lemma 3.13 and an adapted Theorem 3.6]{Anand25}.

A~few remarks are in order.
There is no mention of VC density in~\cite{Duraj24,Anand25}, solely of VC dimension.
However the property \emph{VC dimension $d$} is used only through its weaker (by the Sauer-Shelah lemma) requirement of \emph{attained VC density $d$}.
This hides a~bit the strength of the results therein as most classes of bounded VC dimension have very low attained VC density, namely 1 or 2.
In the statement of~\cite[Lemma 13]{Duraj24}, the tilde over the first big-oh is missing.
Its proof and that of \cite[Lemma 7.4]{Chan25} reveal a~$\log^2 n$ multiplicative factor (similar to \cite[Lemma 3.13]{Anand25}).
The running time in~\cite[Lemma 13]{Duraj24} is stated and repeated in the proof with some $\Tilde{O}$, suggesting at least one log factor, but it is unclear where this log factor comes from.
The bottleneck of the algorithmic consequence for \textsc{All-Pairs Shortest Path} anyway is the number of intervals.

The output of~\cref{lem:int-rep} is a~particular interval biclique representation where every biclique is a~star; one side of the biclique is a~singleton.
Thus combining this lemma with \cite[Theorem 6.2]{twin-width3}, one gets:

\begin{theorem}\label{thm:lin-neigh-comp}
  Let $\mathcal C$ be a~class of unweighted graphs of attained VC density~1.
  Then, \textsc{All-Pairs Shortest Path} can be solved in time $O(n^2 \log^3 n)$ on~$n$-vertex graphs of $\mathcal C$.
\end{theorem}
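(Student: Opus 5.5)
The plan is to combine \cref{lem:int-rep} with the single-source shortest path routine for interval biclique partitions, \cite[Theorem 6.2]{twin-width3}, and to run that routine once from every source.

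\textbf{Preprocessing.} Attained VC density $1$ means $d=1$ in \cref{lem:int-rep}. Applying the lemma to the input $n$-vertex graph $G \in \mathcal C$ costs $O(n^{2}\log n)$ time. It returns a linear order $\prec$ on $V(G)$ for which the $n$ neighborhoods split into $O(n\log^2 n)$ maximal intervals in total. Listing these intervals costs an additional $O(|E(G)|)=O(n^2)$ time.

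\textbf{Reading off an interval biclique partition.} For each vertex $v$ and each maximal interval $I$ of $N(v)$ along $\prec$, I take the star with sides $\{v\}$ and $I$. Both sides are intervals along $\prec$, since a singleton is trivially one. One point needs care: adjacency is symmetric, so the edge $uv$ is produced twice, once from $N(u)$ and once from $N(v)$. The family is therefore an edge-cover rather than an edge-partition.

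I see two ways to handle this, and would use whichever matches the exact hypotheses of \cite[Theorem 6.2]{twin-width3}.
\begin{enumerate}
\item Treat the stars as oriented bicliques, i.e., arcs from $v$ to $I$. The BFS-type algorithm there only uses the bicliques to relax distances from one side to the other, and this is unaffected by edges being covered twice.
\item Keep only the part of $N(v)$ to the right of $v$ in $\prec$. This turns each interval into at most one interval, plus one extra split at $v$. The number of intervals stays $O(n\log^2 n)$, and now each edge is covered exactly once.
\end{enumerate}
Either way we obtain a structure of size $p=O(n\log^2 n)$ of the kind accepted by the single-source algorithm.

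\textbf{Running all sources.} Given this representation, \cite[Theorem 6.2]{twin-width3} computes a shortest-path tree from any given source in time $O(p\log n + n)$, which here is $O(n\log^3 n)$. Repeating this for each of the $n$ sources gives $O(n^2\log^3 n)$. This dominates the $O(n^2\log n)$ preprocessing, so the total is $O(n^2\log^3 n)$ as claimed.

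The main obstacle I expect is the compatibility check in the middle step. The running time of \cite[Theorem 6.2]{twin-width3} must depend on the total number of bicliques (equivalently, of intervals) and not on the number of edges. It must also tolerate the star-shaped, possibly overlapping bicliques, or else the right-of-$v$ trimming must be justified. Everything else is a direct composition of the stated bounds.
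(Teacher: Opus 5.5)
Your proposal is correct and matches the paper's argument: apply \cref{lem:int-rep} with $d=1$ to get, in $O(n^2\log n)$ time, $O(n\log^2 n)$ star-shaped interval bicliques, then run the $O(p\log n)$-time single-source routine of \cite[Theorem 6.2]{twin-width3} from each of the $n$ sources, for $O(n^2\log^3 n)$ in total. The paper leaves implicit that each edge is covered twice, and your fix handles it correctly: keeping only the part of $N(v)$ to the right of $v$ gives a genuine interval biclique partition with asymptotically the same number of bicliques.
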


The previous theorem applies to classes of bounded merge-width and of bounded flip-width, which have linear neighborhood complexity~\cite{Bonamy25}.
Following the analysis of Welzl's arguments in the appendix of~\cite{BonnetDSZ25}, one can show \cref{lem:int-rep} for (genuine) VC density at~most~$d$.
The number of intervals is then $O(n^{2-\frac{1}{d}+o(1)})$, resulting in an $O(n^{2+o(1)})$-time algorithm for \textsc{All-Pairs Shortest Path} in classes of VC density~1.
We recall that the latter classes are believed to contain every monadically dependent class.

We finally observe that \cref{lem:int-rep} and its counterpart for (genuine) VC density at~most~$d$ also yield signed tree models (on combs) with $O(n^{2-\frac{1}{d}} \log^2 n)$ and $n^{2-\frac{1}{d}+o(1)}$ transversal pairs, respectively.

\medskip

\textbf{Open questions and further work.}
We believe that the study of signed tree models is promising, and as demonstrated in the paper, can lead to algorithmic improvements, even when those tree models are not a~priori given.
Four kinds of questions come to mind:
\begin{itemize}
  \item On which classes can we efficiently compute sparse(-ish) signed tree models? 
  \item Can we further improve \textsc{Shortest Path} algorithms based on signed tree models?
  \item Can other problems (than the distance-based ones mentioned in the abstract and those based on matrix multiplication) be solved efficiently given sparse(-ish) signed tree models? Via interval biclique partitions?
  \item How expressive are sparse signed tree models?
\end{itemize}

We exemplify with some concrete questions.
The first item and our paper suggest, by order of increasing difficulty: 

\begin{question}\label{q:comput-sstm}
  Can sparse signed tree models be computed in time $O(n^2 \log n)$ in $n$-vertex graphs of bounded twin-width? bounded merge-width? flip-width? symmetric difference? sd-degeneracy?
\end{question}

A~positive answer to~\cref{q:comput-sstm} implies, by~\cref{cor:apsp-gen}, an equally fast \textsc{All-Pairs Shortest Path} algorithm.
For twin-width, in particular, we ask if the $O(n^2)$-time algorithm with a~contraction sequence~\cite{bannach2024dag} can be attained without witness. 

\begin{question}\label{q:tww-apsp}
  Does \textsc{All-Pairs Shortest Path} admit an~$O(n^2)$-time algorithm on graphs of bounded twin-width?
\end{question}

\Cref{thm:apsp-tww} is a~$\log^2 n$ factor away, and using~\cite{Anand25} incurs a~$\log^4 n$ factor.

\begin{question}\label{q:positivizing}
  Do graphs with signed tree models with $p(n)$ transversal pairs admit tree models with $O(p(n))$ transversal edges?
  If so, can the latter be computed from the former efficiently? in time $O((n+p(n)) \log n)$? 
\end{question}

\Cref{q:positivizing} can first be asked with $O(p(n) \log n)$ transversal edges, as we currently have a~$\log^2 n$ blow-up.
We also wonder if the $\log n$ blow-up from signed tree models sparsity to the size of DAG compressions (second item of~\cref{thm:main}) can be improved.

\begin{question}\label{q:dag-compression}
  Do graphs with signed tree models with $p(n)$ transversal pairs admit DAG compressions of size $O(p(n))$?
  If so, can the latter be computed from the former efficiently? in time $O((n+p(n)) \log n)$? 
\end{question}
A~positive answer to~\cref{q:positivizing} or \cref{q:dag-compression} would shave a~$\log n$ factor in the above \textsc{All-Pairs Shortest Path} algorithms.

In light of \cref{cor:sd-degen-to-apsp}, efficiently approximating the sd-degeneracy of a~graph with a~relatively low ratio would have interesting consequences.
\begin{question}\label{q:approx-sd-degen}
  Is there a~polynomial algorithm (ideally, running in time $n^{2+o(1)}$) that given an $n$-vertex graph $G$ of sd-degeneracy at most~$d$, outputs an sd-degeneracy sequence of~$G$ of width at most $O_d(1)$? at most $O_d(1) \cdot n^{o(1)}$? at most $O_d(n^{1/3})$?   
\end{question}

By a~direct counting argument, every class admitting sparse signed tree models is factorial, i.e., its number of $n$-vertex graphs is $2^{O(n \log n)}$.
Is the converse true among hereditary classes?

\begin{question}\label{q:factorial-classes}
  Does every hereditary factorial class admit sparse signed tree models?  
\end{question}  
Permutation graphs and 2-track interval graphs are candidate classes to negatively resolve~\cref{q:factorial-classes}, but it would require new arguments to rule out the existence of a~sparse signed tree model, beyond naive counting.

We conclude with another perspective, that of dynamic algorithms.
The construction of a~signed tree model can be costly, but after it is done, \textsc{Single-Source Shortest Path} (and possibly other problems) can be solved in sublinear time (for dense graphs) in their number of edges.
We remark that updating the signed tree model when edges are inserted or deleted can simply be performed by adding transversal edges or anti-edges at the corresponding pairs of leaves.
The signed tree model can then be rebuilt (from scratch) after $p(n)$ edge edits.
For instance, using~\cref{thm:main-sssp,thm:vers-tww-to-stm} with $p(n)=O(n \log n)$, one can show:

\begin{theorem}\label{thm:tww-dynamic-sssp}
  Let $\mathcal C$ be a~class of bounded twin-width.
  There is a~fully dynamic Las Vegas algorithm for \textsc{Single-Source Shortest Paths} on graphs guaranteed to stay in $\mathcal C$, with $O(n)$ amortized update time, and $O(n \log n)$ worst-case query time.
\end{theorem}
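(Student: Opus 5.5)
The plan is a standard amortized rebuilding scheme built on \cref{thm:vers-tww-to-stm} and \cref{thm:main-sssp}, with parameter $p(n) = \Theta(n \log n)$. We maintain the current graph $G$ together with a signed tree model $\mathcal T = (T, A, B)$ of $G$ and a counter of edge updates since the last rebuild. At a rebuild, we run the algorithm of \cref{thm:vers-tww-to-stm} on the current graph, which lies in $\mathcal C$ by assumption. It runs in time $O((m+n)\log n) = O(n^2 \log n)$ with high probability and yields $O(n \log n)$ transversal pairs. For a Las Vegas guarantee, we verify the output: we check that the number of pairs is at most $c\,n\log n$, and, if needed, that $\mathcal T$ represents $G$. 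Verification can be done by converting $\mathcal T$ to an interval biclique partition via \cref{thm:main} and comparing edge counts and adjacencies in $O(n^2)$ time. On failure we rerun, so the expected rebuild time is $O(n^2 \log n)$.

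Between rebuilds, an edge insertion or deletion of $uv$ is handled by adding a transversal edge (respectively anti-edge) between the leaves $u$ and $v$ of $T$. First remove any previous transversal pair at that leaf pair, so that at most one pair sits there. A pair between two leaves is the lowest possible pair covering $(u,v)$, so it determines adjacency there. It also cannot violate the non-crossing condition, because any pair whose endpoint sets are ancestors of the two leaves is above it. This takes $O(1)$ time with a hash table on leaf pairs. After $n \log n$ updates we rebuild. Throughout, the model has at most $O(n\log n) + n \log n = O(n \log n)$ transversal pairs.

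A query with source $v$ calls \cref{thm:main-sssp} on the current model. This costs $O(p \log n) = O(n \log^2 n)$, which is one $\log$ too many. The fix is to choose the rebuild period so that $p$ stays $O(n)$, but \cref{thm:vers-tww-to-stm} only gives $O(n\log n)$ pairs. So instead we compute, at rebuild time, the interval biclique partition of the freshly built model once, in $O(n \log^2 n)$ time by \cref{thm:main}. Each edit is then added as a singleton biclique. Per the discussion after \cref{thm:main}, the interval biclique partition combined with \cite[Theorem 6.2]{twin-width3} gives SSSP in time linear in the number of bicliques, up to the $O(n)$ overhead. That number stays $O(n\log n)$, so the query time is $O(n \log n)$. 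For amortization, the $O(n^2\log n)$ rebuild cost is spread over $n \log n$ updates, giving $O(n)$ amortized per update. Each update additionally costs $O(1)$ in the worst case.

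The main obstacle is this bookkeeping of logarithms. We must make sure the query works in time linear in the number of interval bicliques, for instance by using \cite[Theorem 6.2]{twin-width3} directly rather than \cref{thm:main-sssp}, whose bound carries a $\log n$ factor. We must also justify that the Las Vegas rebuild has the claimed expected cost, which requires the verification step to fit within $O(n^2\log n)$. If singleton bicliques break the interval structure (an edit's endpoints are single vertices, hence trivially intervals), this is not an issue. Should direct access to \cite[Theorem 6.2]{twin-width3} be unavailable, the fallback is to set the rebuild period to $n$ and accept the weaker bounds.
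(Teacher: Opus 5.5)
Your first two paragraphs are the paper's argument: each edit becomes a transversal edge or anti-edge between the two leaves, the model is rebuilt with \cref{thm:vers-tww-to-stm} every $\Theta(n\log n)$ edits (hence $O(n)$ amortized update time), and queries call \cref{thm:main-sssp}. You are right that, with the bounds as stated, this gives $O(p\log n)=O(n\log^2 n)$ per query. The paper's sketch does not say how that logarithm is removed, so looking for a fix is legitimate. Your verification step is unnecessary: the rebuild only appends pairs whose symmetric difference it has checked, so its output is always valid and only its running time is random.

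Your fix, however, has a genuine gap. An interval biclique partition is a purely positive representation, namely a partition of $E(G)$ into bicliques. Adding a singleton biclique therefore handles insertions only, and there is no analogue of the leaf-level anti-edge. To delete an edge $uv$ of the rebuilt graph, you must locate the biclique $I\times J$ containing it. You must then replace it by the at most four interval bicliques $I_{<u}\times J$, $I_{>u}\times J$, $\{u\}\times J_{<v}$, $\{u\}\times J_{>v}$. As written, your queries run on the wrong graph after the first deletion.

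This is repairable within the $O(n)$ update budget. For instance, at rebuild time store, for every vertex $w$, the bicliques having $w$ on a side. There are at most $\deg(w)$ of them because the bicliques are edge-disjoint, so this costs $O(n^2)$. It lets you locate and split in $O(n)$ per deletion, while the number of bicliques grows by at most $3$ per edit. But the repair must actually be carried out.

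Second, the query bound $O(n\log n+|\mathcal B|)$ for an interval biclique partition $\mathcal B$ is not provided by the paper. The remark after \cref{thm:main} only says this route is equally fast, i.e., $O(p\log n)$. The $O(n\log n)$ bound attributed to Twin-width III is for partitions of linear size. With $|\mathcal B|=\Theta(n\log n)$ you need an implementation paying $O(1)$ amortized per biclique activation. One option is a linear-time interval union--find for ``next unvisited vertex of $Y$'', together with an interval-stabbing structure with deletions costing $O(\log n)$ per newly visited vertex plus $O(1)$ per reported interval.

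Finally, your fallback does not help. The $O(n\log n)$ pairs come from the width-$O(\log n)$ sd-degeneracy sequence of the rebuild itself, not from accumulated edits. Shortening the rebuild period only worsens the update time without improving queries.
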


The initialization time is $O((m+n) \log n)$ for an $n$-vertex $m$-edge graph.
Actually, \cref{thm:tww-dynamic-sssp} only needs the weaker guarantee that the graph is in $\mathcal C$ every $p(n)$ steps.
The amortized update time is in fact $O(m/n)$ where $m$ is the maximum number of edges after every $p(n)$ steps.
Also, if we know that the total number of edits is $O(p(n))=O(n \log n)$, then the update time becomes $O(1)$ worst-case.

In \emph{general} unweighted graphs, the existence of a~non-trivial (that is, with $o(n^2)$ update and query time) fully dynamic algorithm was open for a~couple of decades, until relatively recently when a~Monte Carlo (resp.~deterministic) algorithm with $O(n^{1.933})$ (resp.~$O(n^{1.969})$) worst-case update and query time was found~\cite{BrandK23}.

We note that similar schemes to \cref{thm:tww-dynamic-sssp} can be performed by maintaining an interval biclique partition.
See~\cite{Anand25} when the latter is given by~\cref{lem:int-rep}.

\subparagraph*{Organization of the paper.}
In~\cref{sec:prelim}, we give the necessary background on signed tree models, DAG compression, symmetric difference, sd-degeneracy, twin-width, and merge-width.
In~\cref{sec:sssp}, we establish the conversion algorithm of~\cref{thm:main}, and deduce \cref{thm:main-sssp}.
In~\cref{sec:tww-sd-degen}, we show \cref{thm:vers-tww-to-sd}, which by~\cref{cor:sd-degen-to-apsp}, implies~\cref{thm:apsp-tww}: a~fast \textsc{All-Pairs Shortest Path} algorithm for graphs of bounded twin-width.
With the same algorithm (and different parameters), we show \cref{thm:sym-diff-sd-degen} in~\cref{sec:sd-sd-degen}, which implies~\cref{thm:sym-diff}: an \textsc{All-Pairs Shortest Path} algorithm for graphs of moderate symmetric difference. 
In \cref{sec:mat-mult}, we present the fast matrix-multiplication algorithm when one of the two matrices is an adjacency matrix of a~graph of bounded twin-width, \cref{thm:mat-mult}.
Finally, in~\cref{sec:mw}, we show~\cref{thm:fomc}, which improves the first-order model checking algorithm on graphs of low merge-width given with a~witness.

\section{Preliminaries}\label{sec:prelim}

If $i$ and $j$ are integers, $[i,j]$ denotes the set of integers that are at least~$i$ and at~most~$j$, and $[i]$ is a~shorthand for $[1,i]$.
We will make the abuse of notation that ``$x$ is $O(x_1)=O(x_2)=\ldots=O(x_h)$'' means that $x \in O(x_1)$, $x_1 \in O(x_2)$, $\ldots$, and $x_{h-1} \in O(x_h)$.
We use $\log$ and $\ln$ for the logarithms in base 2 and $e$, respectively.

All trees considered in this paper are rooted. We use the notation $\preceq_T$ to denote the ancestor-descendant relation between nodes in a~rooted tree $T$: $u\preceq_T v$ ($u\prec_T v$) says that $u$ is an ancestor (a strict ancestor) of $v$ in $T$. The subtree of $T$ rooted at a~node $v$ is denoted as $T_v$. The set of leaves in $T$ is written as $L(T)$, which extends to $L(T_v)$ to denote the set of leaves of the subtree $T_v$.
A~rooted tree is a~\emph{binary} tree if every internal node has at most two children and it is \emph{full} if every internal node has exactly two children.

In a~universe~$U$, a~family of subsets $\Fc \subseteq 2^U$ is called \emph{laminar}
if no two sets in~$\Fc$ properly intersect, i.e., for any two $X,Y \in \Fc$, either $X \cap Y = \emptyset$, $X \subseteq Y$, or $Y \subseteq X$.
As long as~$\Fc$ does not contain the empty set, any laminar family~$\Fc$ naturally defines a~rooted forest,
where nodes are elements of~$\Fc$, and~$X$ is an ancestor of~$Y$ if and only if~$Y \subseteq X$.
We call it the \emph{inclusion forest} of~$\Fc$.

\subsection{Algorithms on rectangles}
We will use a~few well-known geometric algorithms manipulating rectangles, presented in this section.
While we state them here for rectangles in the plane,
they of course also apply to `discrete rectangles' $\{a,\dots,b\} \times \{c,\dots,d\}$ in the `discrete plane'~$[n] \times [n]$,
which is how we will use them.

Firstly, we will manipulate laminar families of rectangles.
\begin{lemma}\label{lem:rectangle-inclusion-tree}
    Given a~laminar family~$\Rc$ of~$n$ rectangles, one can compute the inclusion forest of~$\Rc$ in time~$O(n \log n)$.
\end{lemma}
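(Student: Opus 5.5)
We build the inclusion forest by computing, for every rectangle $R \in \Rc$, its \emph{parent}: the inclusion-minimal rectangle of $\Rc$ strictly containing $R$, if any. Laminarity makes this well defined. The rectangles of $\Rc$ containing a fixed rectangle $R$ pairwise intersect (they all contain $R$), so they are nested and form a chain. Its minimum is the parent. We will assume the rectangles are distinct, or break ties between equal rectangles by index, so that the chain is strict.

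To find parents, we reduce to point location. Take the bottom-left corner $c_R$ of $R$. If $S \in \Rc \setminus \{R\}$ contains $c_R$, then $S \cap R \neq \emptyset$, so by laminarity $S \supseteq R$ or $S \subsetneq R$. In the second case, $S$ contains the bottom-left corner of $R$ and lies inside $R$, so $S$ shares this corner with $R$. To avoid dealing with these corner-sharing rectangles, we sort the rectangles by area, with ties broken by index, and we require the parent to come later in this order. We also note that distinct rectangles containing a common point form a chain under laminarity. Therefore the parent of $R$ is the smallest rectangle, in area order, among those that contain $c_R$, come after $R$, and contain $R$. Since all rectangles containing $c_R$ form a chain, the ones coming after $R$ in area order are exactly those $\supsetneq R$. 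So the parent is the area-smallest rectangle containing $c_R$ that is strictly larger than $R$ in area (or later in the tie-break).

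We compute these parents with a plane sweep in $x$, processing rectangles in decreasing order of area offline. An alternative avoids the extra area constraint: we process rectangles in increasing order of area and answer each corner query against the rectangles already inserted. The cleanest route is the classical one. We sweep a vertical line from left to right. The rectangles intersecting the sweep line are pairwise laminar, so their $y$-intervals, restricted to the active set, form a laminar family of intervals. When a rectangle $R$ starts at $x = x_1(R)$, its parent is the active rectangle with the smallest $y$-interval containing $[y_1(R), y_2(R)]$. Laminarity ensures that the active intervals containing $y_1(R)$ form a chain, and the innermost one not equal to $R$ is the answer. We maintain the active intervals in a balanced search tree keyed by their endpoints, ordered as a nested-parentheses sequence. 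Under laminarity this order is well defined and stays consistent under insertions and deletions. To find the innermost interval enclosing a new interval, we locate the predecessor endpoint of $y_1(R)$. If that endpoint is an opening endpoint of some $I$, then $I$ is the parent. If it is a closing endpoint of some $J$, then the parent of $J$ is the parent of $R$, and $J$'s parent was stored when $J$ was inserted. Each event costs $O(\log n)$, and sorting costs $O(n \log n)$.

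The main obstacle is handling degenerate cases: rectangles sharing boundary coordinates, equal rectangles, and simultaneous start and end events. We handle these with a consistent tie-break rule. At the same $x$, process insertions before deletions. Among insertions, process larger rectangles first, by decreasing width and then decreasing height. Endpoints at equal $y$ are ordered so that an outer interval's opening endpoint comes before an inner one's, and closings are ordered in reverse. With these rules, the parenthesis structure matches the inclusion forest among active rectangles. The correctness then follows from laminarity: any active rectangle intersecting $R$ at its insertion time must either contain $R$ or be contained in it, and by the ordering none of the contained ones is active yet.
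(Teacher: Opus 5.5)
Your plane-sweep argument is correct, but it takes a different route from the paper's.

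\textbf{The paper's route.} It sorts the rectangles by increasing area and picks an arbitrary point $x_i$ inside each $R_i$. It keeps, in Mortensen's dynamic orthogonal range-reporting structure, the points of those rectangles whose parent has not yet been found. When $R_i$ is processed, the query returns exactly the children of $R_i$: by laminarity and the area order, $x_j \in R_i$ with $j<i$ forces $R_j \subset R_i$. These points are deleted and $x_i$ is inserted, so each point is inserted, reported and deleted once. That proof is a few lines and needs no tie-breaking, but it rests on a sophisticated black-box data structure.

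\textbf{Your route.} Your sweep uses only sorting and a balanced search tree. It exploits that the $y$-projections of the rectangles active at a given $x$ form a laminar family of intervals, i.e., a parenthesis sequence, and it inherits the parent from the preceding closed sibling $J$. The price is the degenerate-case bookkeeping.

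\textbf{Points to fix.} The inheritance step needs a justification you omit. Every strict superset of $J$ has an $x$-interval containing that of $J$, and by your insertion order it is inserted before $J$. So while $J$ is active, its enclosing active intervals are exactly its ancestors in the inclusion forest, and the parent stored at $J$'s insertion is still the innermost one.

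Your tie-break list also misses one case: an opening and a closing endpoint at the same $y$. Take $J$ with $y$-interval $[1,5]$ and a one-row $R \subsetneq J$ with $y$-interval $[5,5]$; this occurs for the discrete rectangles the paper actually uses. Here the opening must come first, otherwise the predecessor of $R$'s opening is $J$'s closing and you would wrongly assign $R$ the parent of $J$. Alternatively, first thicken discrete rectangles into positive-area ones, which removes such ties.

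Your first two paragraphs, on corner queries and area orders, are never completed and can be dropped; the sweep alone constitutes the proof.
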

The following data structure is used to prove \cref{lem:rectangle-inclusion-tree}:
\begin{theorem}[Mortensen~\cite{mortensen2006range}]\label{thm:dyn-rectangle-query}
    There is a~data structure representing a~point set~$P$ in the plane, allowing the following operations:
    \begin{enumerate}
        \item adding or deleting a~point in~$P$ in time~$O(\log |P|)$, and
        \item given a~rectangle~$R$, computing the points in $P \cap R$ in time~$O(\log |P| + \abs{P\cap R})$.
    \end{enumerate}
\end{theorem}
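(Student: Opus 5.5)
This is a cited result of Mortensen, used in the paper as a black box (notably for \cref{lem:rectangle-inclusion-tree}). If I had to reprove it, I would take the classical route of dynamic two-dimensional range trees and then sharpen it with word-RAM techniques. Nothing earlier in the paper helps here, so the plan is self-contained.

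\textbf{Baseline structure.} Store $P$ in a balanced search tree $\mathcal X$ keyed by $x$-coordinate; a weight-balanced BB$[\alpha]$-tree is convenient because it makes rebuilding cheap in the amortized sense. Each node $u$ of $\mathcal X$ carries a secondary balanced search tree $\mathcal Y_u$ on the points of its subtree, keyed by $y$-coordinate.
\begin{itemize}
\item A query $R=[x_1,x_2]\times[y_1,y_2]$ splits $[x_1,x_2]$ into $O(\log|P|)$ canonical nodes of $\mathcal X$. In each canonical $\mathcal Y_u$ we locate $y_1$ and then walk the leaves up to $y_2$, so every reported point costs $O(1)$.
\item An insertion or deletion touches the $O(\log|P|)$ secondary trees on one root-to-leaf path. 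Rebalancing of $\mathcal X$ is handled by rebuilding subtrees, which BB$[\alpha]$ weight balance amortizes.
\end{itemize}
This gives $O(\log^2|P|+|P\cap R|)$ per query and $O(\log^2|P|)$ per update.

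\textbf{First improvement.} Remove one logarithm from queries with dynamic fractional cascading (Mehlhorn--N\"aher). The position of $y_1$ in a child's secondary structure is then found from its position in the parent's in $O(\log\log|P|)$ time instead of $O(\log|P|)$. This yields $O(\log|P|\log\log|P|)$ for both updates and queries.

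\textbf{Main obstacle: reaching $O(\log|P|)$.} I expect the remaining $\log\log$ factor to be the hardest step, and this is exactly where Mortensen's contribution lies. I would try the following, in order.
\begin{itemize}
\item Replace the binary primary tree $\mathcal X$ by a tree of fan-out $\log^{\varepsilon}|P|$, so the number of levels drops to $O(\log|P|/\log\log|P|)$.
\item At each node, stop storing full coordinates for the secondary structure. Store instead, for each point in $y$-order, the index of the child containing it. Each index has $O(\log\log|P|)$ bits, so many entries pack into a single machine word.
\item Use word-level parallelism on these packed sequences: rank/select and ``which children are hit'' queries, together with a dynamic predecessor structure at the top. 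Navigating down one level then costs $O(1)$ amortized, and updates rewrite only $O(1)$ words per level.
\end{itemize}
The delicate part is keeping these packed rank structures consistent under insertions and deletions in constant amortized time per level. I would aim for the amortized bound first, since that is all the paper needs: \cref{lem:rectangle-inclusion-tree} only performs a sequence of $O(n)$ operations. I would then de-amortize only if a worst-case bound were required.
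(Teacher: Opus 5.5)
The paper gives no proof of this statement; it imports Mortensen's result as a black box. Your sketch therefore has to stand on its own, and it does not reach the stated bound. Your first two stages are standard and correct, but they only yield $O(\log\abs{P}\log\log\abs{P})$ amortized updates and $O(\log\abs{P}\log\log\abs{P}+\abs{P\cap R})$ queries. Everything the theorem asserts beyond that is in your third stage, which lists things to try rather than giving an argument, and its central claim fails as stated.

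Once a node stores only the child index of each of its points in $y$-order, moving a $y$-range from a node to a child is a rank query on that node's sequence. An update must do the same translation along its root-to-leaf path and then insert into each packed sequence. These sequences are dynamic. A dynamic rank structure in particular solves dynamic prefix parity, so by Fredman--Saks it needs $\Omega(\log m/\log\log m)$ query time whenever updates take polylogarithmic time. Per-node packed rank structures therefore cannot give $O(1)$ per level. Used independently on the $\Theta(\log\abs{P}/\log\log\abs{P})$ levels, they cost on the order of $(\log\abs{P}/\log\log\abs{P})^2$, which is not $O(\log\abs{P})$.

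Likewise, ``updates rewrite only $O(1)$ words per level'' overlooks a cost. Inserting into the middle of a packed sequence of length $\Theta(\abs{P})$ either shifts $\Theta(\abs{P}\log\log\abs{P}/\log\abs{P})$ words, or needs a blocked layout whose index is again a dynamic rank/select structure. The missing ingredient is a mechanism that shares this location work across levels: a dynamic, packed analogue of fractional cascading with controlled rebuilding. Supplying it is exactly the difficulty behind the $O(\log\abs{P})$ bound.

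Output-sensitivity is also left open in the packed version. A matching entry at a canonical node is only a child index, not a point. Following it down to its leaf costs at least one rank/select step per level, i.e.\ $\Omega(\log\abs{P}/\log\log\abs{P})$ per reported point, which destroys the additive $\abs{P\cap R}$ term. The fix would be to also maintain explicit $y$-ordered point lists at every node, but their insertion positions must then be found through the same missing cascading. Moreover, at a node of fan-out $\log^{\varepsilon}\abs{P}$, a query must list the entries whose position lies in an interval and whose label lies in an interval of children. It must skip non-matching words in $O(1)$ per reported point while the sequence keeps changing, and ``which children are hit'' queries do not explain how.

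Your remark that amortized bounds suffice for \cref{lem:rectangle-inclusion-tree} is right: each $x_i$ is inserted and deleted once, and the total output is at most $n$. For that use, a plane sweep over $x$ that keeps the laminar $y$-intervals of the currently open rectangles in a balanced search tree would already give $O(n\log n)$. This does not repair the argument, though: what your proposal actually establishes is the $O(\log\abs{P}\log\log\abs{P})$ bound, not the stated one.
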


\begin{proof}[Proof of \cref{lem:rectangle-inclusion-tree}]
    Let $R_1 < \dots < R_n$ be the rectangles of~$\Rc$ ordered by increasing area.
    Pick an arbitrary point~$x_i$ inside~$R_i$.
    For $i < j$, the laminarity of~$\Rc$ implies that $R_i \subset R_j$ if and only if $x_i \in R_j$.
    Thus the ancestors of~$R_i$ in the inclusion tree are exactly $\{R_j: i<j \text{ and } x_i \in R_j\}$,
    and the parent of~$R_i$ is the smallest (by area) of these ancestors.

    Using this criterion, the parents can be computed in linear time by the following algorithm.
    Create a~point set~$P$ with the data structure given by~\cref{thm:dyn-rectangle-query}, initially set empty.
    We maintain the following invariant at the beginning of the $i$th round:
    \begin{equation}
        x_j \in P \quad \iff \quad \text{$j < i$ and the parent of~$R_j$ is not one of~$R_1,\dots,R_{i-1}$.}
    \end{equation}
    In the $i$th round, first compute~$R_i \cap P$.
    The invariant implies that if $x_j \in R_i \cap P$, then~$R_i$ is the parent of~$R_j$, which we record.
    Then~$P$ is updated by removing~$R_i \cap P$ and adding~$x_i$.
    After the $n$th round, all parent--child relations have been recorded.
    The points remaining in~$P$ correspond to the inclusion-wise maximal rectangles of~$\Rc$, i.e., the roots of the inclusion forest of~$\Rc$.

    Sorting the rectangles by area at the beginning of the algorithm takes time~$O(n \log n)$.
    Each point~$x_i$ is added and removed from~$P$ only once, taking~$O(n \log n)$ time in total.
    Each point~$x_j$ is returned by a~query~$P \cap R_i$ only once, since it is removed from~$P$ immediately afterwards.
    Thus the sum of query sizes $\sum_{i=1}^n |P \cap R_i|$ is at most~$n$, and the~$n$ queries also take~$O(n \log n)$ time in total.    
\end{proof}

We will also need to compute the complement of families of pairwise-disjoint rectangles.
This is proved as a~subroutine in~\cite{Rezende1989rectilinear}.
\begin{lemma}[Rezende, Lee, Wu, {\cite[Lemma~3 and~5]{Rezende1989rectilinear}}]\label{lem:rectangles-complement}
    Given a~family~$\Rc$ of~$n$ disjoint rectangles in the plane,
    one can compute in~$O(n \log n)$ a~partition of the complement of~$\Rc$ into~$O(n)$ rectangles.
\end{lemma}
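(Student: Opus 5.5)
The plan is a sweep-line construction. We reduce to the case where all rectangles lie inside a bounding box $B$ that strictly contains their union, and only partition $B \setminus \bigcup \Rc$; the unbounded part of the complement, $\mathbb{R}^2 \setminus B$, is then added as four (unbounded) rectangles. Throughout, we assume general position: distinct $x$- and $y$-coordinates for all rectangle sides. This can be enforced by a standard symbolic tie-breaking, which respects the discrete setting $[n]\times[n]$ in which the lemma will be applied.

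The decomposition is the vertical trapezoidal (here: rectangular) decomposition. From every corner of every rectangle of $\Rc$, shoot a vertical segment upward and one downward. Each segment stops at the first rectangle boundary or at the boundary of $B$; a segment starting at a corner that immediately enters its own rectangle is omitted. These at most $8n$ segments, together with the rectangle boundaries and $\partial B$, cut $B \setminus \bigcup \Rc$ into faces.

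The key claim is that each face is a rectangle. The reason is the following. Each face is bounded on the left and right by vertical pieces, namely either extension segments or sides of rectangles or of $B$. Its top and bottom boundaries are single horizontal pieces: if the top boundary changed height, there would be a corner of some rectangle lying strictly between the left and right sides of the face, and the vertical extension from that corner would have split the face. Hence every face is a rectangle.

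For the count, we charge each face to the vertical piece that forms its left side. Every left side is either an extension segment or a side of some rectangle or of $B$, and each such piece is the left side of at most one face. There are $O(n)$ such pieces, so there are $O(n)$ faces.

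Algorithmically, we sweep a vertical line from left to right over the $2n$ left and right sides of the rectangles, sorted by $x$-coordinate in $O(n\log n)$ time. We keep a balanced search tree holding the $y$-intervals of the rectangles currently crossing the sweep line, interleaved with the free gaps between them. Each free gap carries an \emph{open} rectangle, which records the $x$-coordinate where it started. When a rectangle's left side $[y_1,y_2]$ at abscissa $x$ is inserted:
\begin{enumerate}
\item We locate the free gap containing it in $O(\log n)$ time.
\item We close that gap's open rectangle at $x$.
\item We open two new gaps, one below $y_1$ and one above $y_2$, starting at $x$.
\end{enumerate}
When a right side is removed, we symmetrically close the two adjacent open gaps and merge them with the freed interval into a single new open gap. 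Each event performs $O(1)$ tree operations and emits $O(1)$ rectangles, giving $O(n\log n)$ time and $O(n)$ rectangles in total. By construction, the closed rectangles tile the complement within $B$.

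The main obstacle is correctness of this tiling in the degenerate cases where sides share coordinates, since a gap could then be split or merged ambiguously. The symbolic tie-breaking fixed at the start resolves these cases. After it, each event only changes the gaps adjacent to the interval being inserted or removed, so the tiling argument above goes through.
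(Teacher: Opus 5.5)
\textbf{Verdict.} Your proposal is correct. The paper gives no argument of its own here: it imports the lemma from Rezende, Lee, and Wu as a black box.

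\textbf{What your route buys.} What you wrote is the standard self-contained proof of such statements. You take the vertical decomposition of the free space and compute it by a left-to-right sweep. Each insertion closes one open rectangle and each deletion closes two, so at most $3n+1$ pieces arise inside $B$, plus the four unbounded ones. Compared with the citation, this makes the constant explicit. It also exposes what the discrete use of the lemma actually requires, which the paper only mentions in passing (``they of course also apply'').

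\textbf{Two points to tighten.}
\begin{enumerate}
\item \emph{Charging argument.} The sentence ``every left side is either an extension segment or a side'' is inaccurate. Consider the face lying immediately to the right of a rectangle's right side. Its left boundary is the union of that side and the two vertical extensions from its corners. Charging each face to any one piece on its left boundary repairs the argument, since each piece has a single face to its right. Alternatively, rely only on the per-event count from the sweep.
\item \emph{Tie-breaking in the discrete setting.} The tie-breaking cannot be arbitrary there. With the natural half-open encoding $\{a,\dots,b\}\mapsto[a,b+1)$, disjoint discrete rectangles may share a side. An arbitrary perturbation can then make them overlap. Equivalently, if a left side at abscissa $x$ is processed before a right side at the same $x$, the inserted $y$-interval need not lie in a single free gap. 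The fix is to process right sides before left sides at equal abscissae, which is the same as perturbing by shrinking every rectangle. You should then discard the zero-width or empty output rectangles. With that rule made explicit, your argument goes through.
\end{enumerate}
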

Note that the disjointness assumption is crucial:
one can construct a~grid-like family of~$2n$ intersecting rectangles with roughly~$n^2$ connected components in the complement.

\subsection{Signed tree models}

\emph{Signed tree models} were introduced by~\cite{bonnet2024signedtree} as a~generalization of tree models for twin-width~\cite{BonnetNMST24,twin-width3}.
A~pair of vertices of a~rooted tree $T$ is a~\emph{transversal pair} of $T$ if neither one of the pair is an ancestor of the other.
We say that two transversal pairs $u_1v_1$ and $u_2v_2$ \emph{cross} if one of $u_1, v_1$ is a~strict ancestor of one of $u_2, v_2$, and vice versa. 
A \emph{signed tree model} is a~triple $\Tc=(T, A(T), B(T))$ consisting of a~full binary tree $T$, and two disjoint sets of transversal pairs $A(T)$ and $B(T)$ of $T$ such that $A(T)\cup B(T)$ does not contain any crossing pair.
The transversal pairs of $A(T)$ and $B(T)$ are called \emph{transversal anti-edges} and \emph{transversal edges} respectively, or sometimes \emph{negative} and \emph{positive} transversal pairs.
(The mnemonic is that $A$ stands for Anti-biclique and $B$ stands for Biclique.) 

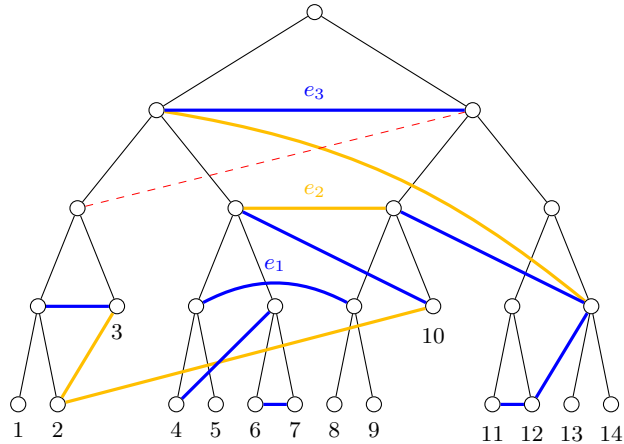
\begin{figure}[!ht]
  \centering
  \begin{tikzpicture}[%
    scale=1.3,
    level distance=10mm,
    level 1/.style={sibling distance=32mm},
    level 2/.style={sibling distance=16mm},
    level 3/.style={sibling distance=8mm},
    level 4/.style={sibling distance=4mm},
    every node/.style={draw, circle, inner sep=2pt, font=\footnotesize},
    edge from parent path={(\tikzparentnode) -- (\tikzchildnode)}
  ]
  \node (root) {}
    child {node (a) {}
      child {node (b) {}
        child {node (c) {}
          child {node (c1) [label=below:{1}] {}}
          child {node (c2) [label=below:{2}] {}}
        }
        child {node (d) [label=below:{3}] {}}
      }
      child {node (e) {}
        child {node (f) {}
          child {node (f1) [label=below:{4}] {}}
          child {node (f2) [label=below:{5}] {}}
        }
        child {node (g) {}
          child {node (g1) [label=below:{6}] {}}
          child {node (g2) [label=below:{7}] {}}
        }
      }
    }
    child {node (h) {}
      child {node (i) {}
        child {node (j) {}
          child {node (j1) [label=below:{8}] {}}
          child {node (j2) [label=below:{9}] {}}
        }
        child {node (k) [label=below:{10}] {}}
      }
      child {node (l) {}
        child {node (m) {}
          child {node (m1) [label=below:{11}] {}}
          child {node (m2) [label=below:{12}] {}}
        }
        child {node (n) {}
          child {node (n1) [label=below:{13}] {}}
          child {node (n2) [label=below:{14}] {}}
        }
      }
    };

    \foreach \i/\j/\b in {a/n/15, c2/k/0, c2/d/0}{
      \draw[very thick, amber] (\i) to [bend left=\b] (\j);
    }

    \draw[very thick, amber] (i) to node[midway, above, inner sep=1pt, draw=none] {$e_2$} (e);
    
    \foreach \i/\j/\b in {e/k/0, g1/g2/0, m1/m2/0, m2/n/0, f1/g/0, n/i/0, c/d/0}{
       \draw[very thick, blue] (\i) to [bend left=\b] (\j);
     }

\draw[very thick, blue] (f) to [bend left=28] node[midway, above, inner sep=1pt, draw=none] {$e_1$} (j);
\draw[very thick, blue] (a) to [bend left=0] node[midway, above, inner sep=1pt, draw=none] {$e_3$} (h);

    \draw[very thin, dashed, red] (b) -- (h) ; 
\end{tikzpicture}
  \caption{A~signed tree model of a~14-vertex graph, with~$A(T)$ in amber and~$B(T)$ in blue.
    The topmost amber edge and the dashed red edge cross (so the latter could not be a~transversal pair).
    Vertex 8 is adjacent to 4 because 4,8 is covered by the blue edge $e_1$, and to 2 because 2,8 is covered by the blue edge $e_3$, but 8 is not adjacent to 7 because 7,8 is covered by the amber edge $e_2$.}
\label{fig:signed-tree-model}
\end{figure}

Consider a~signed tree model $\Tc=(T,A(T),B(T))$. For a~pair $u,v$ of vertices of $T$, we say that a~transversal pair $u'v'$ \emph{covers} the pair $u,v$ if $u'\preceq_T u$ and $v'\preceq_T v$ and there is no transversal pair $u''v'' \neq u'v'$ with $u'\preceq_T u''\preceq_T u$ and $v'\preceq_T v'' \preceq_T v$.
The graph $G_{\Tc}$ of the signed tree model $\Tc$ is defined as the graph on the vertex set $L(T)$ such that $u,v\in L(T)$ are adjacent in~$G_{\Tc}$ if and only if the pair $u,v$ is covered by a~positive transversal pair in~$\Tc$.

We choose to forbid transversal pairs from being loops in signed tree models.
This is not a~significant restriction:
\begin{lemma}\label{lem:tree-model-remove-loops}
    Any signed tree model with loops can be transformed into an equivalent signed tree model without loops in linear time.
    The only new transversal pairs created in this transformation form a~matching;
    in particular, at most~$n$ transversal pairs are added.
\end{lemma}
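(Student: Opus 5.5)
The plan is to eliminate all loops at once by a single top-down pass. Each loop $uu$ is replaced by transversal pairs between the two children of the nodes in the subtree of $u$; these are sibling pairs, so they automatically form a matching. Concretely, for every internal node $w$ of $T$ with children $w_1,w_2$, let $\sigma(w)$ be the sign of the loop at the lowest ancestor-or-self of $w$ carrying a loop, if any. I would add the pair $w_1w_2$ with sign $\sigma(w)$ whenever $\sigma(w)$ is defined and $w_1w_2$ is not already a transversal pair; then I would delete all loops. Here "sign" means membership in $A(T)$ or $B(T)$. A DFS from the root that carries the sign of the most recent loop seen computes all $\sigma(w)$ and performs the insertions in linear time. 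Since every node has a unique sibling, the pairs $\{w_1,w_2\}$ for distinct $w$ are vertex-disjoint. So the new pairs form a matching and number at most $n$.

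First I must check that the result is a valid signed tree model. The new pairs are transversal, since siblings are incomparable. Suppose $w_1w_2$ crossed an existing pair $cd$. Then, up to symmetry, one of $w_1,w_2$ would be a strict ancestor of $c$, and $d$ would be a strict ancestor of one of $w_1,w_2$. The latter gives $d \preceq_T w$, while $c$ lies strictly below $w$. Hence $d$ would be an ancestor of $c$, contradicting that $cd$ is transversal. New pairs also do not cross each other, because any two sibling pairs are either nested along a root-to-leaf path or incomparable.

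Next comes the key step, correctness of adjacency. Take distinct leaves $x,y$ and let $w$ be their lowest common ancestor, with $x \in L(T_{w_1})$ and $y \in L(T_{w_2})$. Any pair $ab$ with $a \preceq_T x$ and $b \preceq_T y$ is one of two kinds:
\begin{itemize}
\item a loop $aa$ at an ancestor-or-self of $w$;
\item a genuine transversal pair, in which case $a,b$ are incomparable, which forces $w_1 \preceq_T a$ and $w_2 \preceq_T b$.
\end{itemize}
Every pair of the second kind lies below every pair of the first kind. So in the original model the covering pair is the lowest genuine pair if one exists, and otherwise the lowest loop above $w$, whose sign is $\sigma(w)$.

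In the new model the covering pair is the same. The genuine pairs covering $x,y$ are the old ones, plus possibly $w_1w_2$, which is the topmost possible genuine pair. If some old genuine pair exists, then either $w_1w_2$ was not added or it lies above that pair, so the lowest pair is unchanged. If none exists, the added pair $w_1w_2$ covers $x,y$ with sign $\sigma(w)$, exactly as the loop did. If $\sigma(w)$ is undefined, nothing covers $x,y$ in either model. Hence $G_{\Tc}$ is preserved.

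I expect the main obstacle to be only the careful bookkeeping in this last step: the notion of "cover" must be applied with unordered pairs, and the case where $w_1w_2$ already exists with the opposite sign must be handled. Keeping the existing pair is correct in that case, since it lies below every loop and is therefore the pair that already covered $x,y$ in the original model.
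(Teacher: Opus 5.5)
Your proposal is correct and uses exactly the paper's construction: a single top-down pass that, for each internal node $w$ whose children $w_1,w_2$ are not already a transversal pair, adds $w_1w_2$ with the sign of the loop at the lowest ancestor-or-self of $w$ carrying one, then discards the loops. The paper only states this construction and its linear running time, so your checks of non-crossing and of preservation of the covering pair (via the lowest-common-ancestor argument) supply details the paper leaves implicit.
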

\begin{proof}
    For each node~$t$ with children~$t_1,t_2$, if there is not already a~transversal pair between~$t_1,t_2$,
    then look for the first ancestor of~$t$ with a~loop (possibly~$t$ itself),
    and add~$t_1t_2$ as transversal pair of the same type (positive or negative) as this loop.
    This is easily implemented in linear time by a~single top-down pass.
\end{proof}

If there are only positive transversal pairs in~$\Tc$, i.e., $A(T) = \varnothing$, then~$\Tc$ is instead called a~\emph{positive tree model}.
This coincides with the notion proposed in the context of twin-width~\cite{BonnetNMST24,twin-width3}.

\subsection{Distance models and DAG compression}
Bannach, Marwitz, and Tantau~\cite{bannach2024dag} proposed \emph{DAG compression} to improve shortest-path algorithms for bounded twin-width graphs.

Coming from tree models, their idea can be explained as follows.
In signed tree models, the non-crossing condition is crucial:
without it, there may be two minimal edges covering the same pair~$u,v$, one amber and one blue,
and there is then no meaningful way to choose whether~$u,v$ should be adjacent in the corresponding graph.
In positive tree models however, this is not an issue: $u,v$ are adjacent whenever they are covered by a~transversal edge,
and this is well-defined even with crossing edges.
Bannach, Marwitz, and Tantau go one step further by replacing the tree in the tree model by a~directed acyclic graph (DAG).

Formally, a~DAG compression $(D,B)$ of a~graph~$G$ consists of a~DAG~$D$, whose sinks are exactly the vertices of~$G$, and a~set~$B$ of compressed edges between vertices of~$D$ (comparable to transversal edges of tree models), such that~$uv$ is an edge in~$G$ if and only if there is some compressed edge $xy \in B$ with directed paths $x \to u$ and $y \to v$ in~$D$.
The size of the DAG compression is the total number of vertices and edges, i.e., $|V(D)| + |E(D)| + |B|$.

Let~$G_1,G_2$ be two weighted digraphs, and $X \subseteq V(G_1) \cap V(G_2)$ a~set of shared vertices.
The graphs~$G_1,G_2$ are \emph{distance equivalent on~$X$} if for all~$x,y \in X$, $\dist_{G_1}(x,y) = \dist_{G_2}(x,y)$.
We say that~$H$ is a~\emph{distance model} for~$G$ if $V(G) \subseteq V(H)$, and~$G,H$ are distance equivalent on~$V(G)$.
The point of this definition is that~$H$ may have significantly fewer edges than~$G$, in which case  distances in~$G$ can be computed faster by computing them in~$H$.

\begin{theorem}[Bannach,Marwitz,Tantau~\cite{bannach2024dag}]\label{thm:DAG-comp}
    Let~$G$ be an unweighted (di)graph.
    Given a~DAG compression of size~$m$ for~$G$, one can in~$O(m)$ time compute a~distance model with 0--1 weights for~$G$ of size~$O(m)$.
\end{theorem}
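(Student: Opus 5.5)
The plan is to build the distance model $H$ directly from two copies of the DAG $D$, an ``upward'' copy and a~``downward'' copy, glued along the sinks (the vertices of~$G$), with every compressed edge turned into a~weight-1 arc from the upward copy to the downward copy. Concretely, for every non-sink node $x$ of~$D$ I create two vertices $x^\uparrow$ and $x^\downarrow$. For a~sink $u \in V(G)$ I set $u^\uparrow = u^\downarrow = u$, so the sinks are shared. For every arc $a \to b$ of~$D$ I add
\begin{itemize}
\item the weight-0 arc $b^\uparrow \to a^\uparrow$ (reversed, going up), and
\item the weight-0 arc $a^\downarrow \to b^\downarrow$ (going down).
\end{itemize}
For every compressed edge $xy \in B$ I add the weight-1 arc $x^\uparrow \to y^\downarrow$, and also $y^\uparrow \to x^\downarrow$ when $G$ is undirected. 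This is clearly computable in $O(m)$ time, and $|V(H)|+|E(H)| \le 2(|V(D)|+|E(D)|+|B|) = O(m)$.

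Next I show $\dist_H(u,v) \le \dist_G(u,v)$ for $u,v \in V(G)$. It suffices to treat a~single edge $uv$ of~$G$. By the definition of DAG compression there is $xy \in B$ with directed paths $x \to u$ and $y \to v$ in~$D$. Reversing the first path gives a~weight-0 path $u = u^\uparrow \to x^\uparrow$ in the upward copy. Then the arc $x^\uparrow \to y^\downarrow$ has weight~1. Finally the path $y \to v$ gives a~weight-0 path $y^\downarrow \to v^\downarrow = v$ in the downward copy. So this $u$–$v$ walk in~$H$ has weight exactly~1, and concatenating such walks along a~shortest path of~$G$ gives the inequality.

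The main point is the converse, $\dist_G(u,v) \le \dist_H(u,v)$, i.e., that $H$ has no shortcuts. Take any $u$–$v$ walk $W$ in~$H$ and cut it at every visit to a~sink of~$D$ (a vertex of~$G$), so each piece runs between two consecutive visits to $V(G)$ with no sink in its interior. I claim each piece contains exactly one weight-1 arc, and that its two endpoints are adjacent in~$G$.
\begin{itemize}
\item The arcs leaving a~vertex of the upward copy either stay in the upward copy, and then strictly go up in~$D$, or are weight-1 arcs into the downward copy.
\item Arcs leaving a~non-sink vertex of the downward copy stay in the downward copy and go strictly down. There is no arc from the downward copy back to the upward copy except through a~shared sink.
\end{itemize}
Hence a~piece starting at a~sink $w$ reads: go up along reversed arcs to some $x^\uparrow$, then necessarily take one weight-1 arc to some $y^\downarrow$ (the upward copy alone, being acyclic and upward, never returns to a~sink), then go down to the next sink $w'$. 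Acyclicity of~$D$ ensures the piece cannot loop back to $w$ with zero crossings. This exhibits $xy \in B$ with paths $x \to w$ and $y \to w'$ in~$D$, so $ww' \in E(G)$, with the correct orientation in the directed case. Therefore the weight of~$W$ equals the number of pieces, which bounds the length of a~$u$–$v$ walk in~$G$ from above. I expect this ``no shortcut'' case analysis, in particular checking that zero-weight pieces between distinct sinks are impossible, to be the only delicate step. The rest is bookkeeping.
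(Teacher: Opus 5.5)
Your construction is exactly the one the paper sketches: your upward copy with reversed arcs is its `bottom' copy, your downward copy is its `top' copy, and each compressed edge $xy$ becomes a weight-1 arc from the reversed copy of $x$ to the forward copy of $y$. Your argument is correct, and since the paper only describes this construction and cites Bannach, Marwitz, and Tantau for correctness, your piece-by-piece no-shortcut check supplies the verification it omits. The one nit is that a piece with $w=w'$ gives a loop rather than an edge of~$G$, and it can simply be dropped.
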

Their construction is beautifully simple: given the DAG compression $(D,B)$, one makes two copies of~$D$ joined on~$V(G)$, one `above~$V(G)$' with edges directed downward toward~$V(G)$, and one `below~$V(G)$' with edges directed downward away from~$V(G)$.
These two copies of~$D$ have all edges with weight~0.
Then, each directed compressed edge~$xy$ becomes a~directed edge of weight~1 going from the bottom copy of~$x$ to the top copy of~$y$.

\subsection{Interval Biclique Partitions}
The third graph representation we use, also motivated by shortest-paths algorithms, is \emph{interval biclique partitions}, introduced in~\cite{twin-width3}.

An interval biclique partition (IBP)~$(\Bc,<)$ of a~graph~$G$ consists of a~linear ordering~$<$ of the vertex set~$V(G)$,
and a~partition~$\Bc$ of the edge set~$E(G)$ such that each part $B \in \Bc$ is a~biclique $(X,Y)$ for two disjoint intervals~$X,Y$ of~$(V(G),<)$.
For algorithmic purposes, we assume that~$V(G)$ is identified with~$[n]$ so that~$<$ is the natural ordering.
Then, each biclique $[a,b], [c,d]$ is represented by the tuple $(a,b,c,d)$.
The size of the interval biclique partition is thus~$O(|\Bc|)$.

In~\cite[Section~6]{twin-width3}, it is shown that interval biclique partitions of linear size
allow to solve \textsc{Single-Source Shortest Path} in $O(n \log n)$ time.
We will in fact give an alternative proof of this fact through DAG compressions in \cref{lem:IBP-to-positive-model}.

\subsection{Symmetric difference and sd-degeneracy}

We define the \emph{symmetric difference} of two vertices $u, v$ in a~graph $G$ as \[\sd_G(u,v) := |(N_G(u) \setminus \{v\}) \triangle (N_G(v) \setminus \{u\})|.\]
The \emph{symmetric difference} $\sd(G)$ of a~graph $G$ is the least nonnegative integer~$d$ such that for every induced subgraph $H$ of~$G$ with at least two vertices, there are $u \neq v \in V(H)$ such that $\sd_H(u,v) \leqslant d$.

An~\emph{sd-degeneracy sequence} of an $n$-vertex graph $G$ is a~list of pairs of vertices of $G$: $(u_1, v_1), \ldots, (u_{n-1},v_{n-1})$ such that $u_i \neq v_i$ for every $i \in [n-1]$, $V(G) = \{u_1, u_2, \ldots, u_{n-2},$ $u_{n-1}, v_{n-1}\}$, and there is no $i < j$ such that $u_i \in \{u_j, v_j\}$.
It should be thought of as a~list of pairs where the first element of the pair gets removed but the second remains, and eventually a~single vertex remains.
The \emph{width} of the sd-degeneracy sequence $(u_1, v_1), \ldots, (u_{n-1},v_{n-1})$ is defined as $\max_{i \in [n-1]} \sd_{G - \{u_1, u_2, \ldots, u_{i-1}\}}(u_i,v_i)$.
The \emph{sd-degeneracy} of~$G$ (where 'sd' stands for symmetric difference) is the least integer~$d$ such that $G$ admits an sd-degeneracy sequence of width~$d$.

Contrary to most graph parameters (and in particular, to symmetric difference), sd-degeneracy is not \emph{hereditary}: there are graphs of sd-degeneracy~1 containing induced subgraphs of arbitrarily large sd-degeneracy~\cite[Proposition 1.1]{bonnet2024signedtree}.
However, an sd-degeneracy sequence of low width is still valuable information, as it can be, for instance, turned into a~signed tree model with few transversal pairs.

\begin{lemma}[essentially Lemma 3.1 of~\cite{bonnet2024signedtree}]\label{lem:sd-degen-to-stm}
  There is an $O(dn+|E(G)|)$-time algorithm that, given an sd-degeneracy sequence $(u_1, v_1), \ldots, (u_{n-1},v_{n-1})$ of~$G$ of width~$d$, outputs a~signed tree model of~$G$ with at~most~$(d+1)(n-1)$ transversal pairs.  
\end{lemma}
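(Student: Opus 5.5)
We build the tree bottom-up by reversing the sd-degeneracy sequence. The last surviving vertex $v_{n-1}$ starts as a single-leaf tree. Then we process $i = n-1, n-2, \dots, 1$ and reinsert $u_i$ at the current leaf representing $v_i$. Concretely, that leaf becomes an internal node $t_i$ with two children: a new leaf for $u_i$, and a leaf that continues to represent $v_i$ (and which may be split again later). We write $G_i := G - \{u_1,\dots,u_{i-1}\}$ and let $D_i := (N_{G_i}(u_i)\setminus\{v_i\}) \triangle (N_{G_i}(v_i)\setminus\{u_i\})$, so $|D_i|\le d$. In $G_i$, the vertices $u_i$ and $v_i$ agree on everything outside $D_i \cup \{u_i,v_i\}$. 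The intended invariant is that, after the reinsertions for $n-1,\dots,i$, the current model represents $G_i$.

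The transversal pairs added at step $i$ are as follows. For each $w \in D_i$, add a pair between the new leaf of $u_i$ and the node currently representing $w$, signed according to whether $u_iw \in E(G)$. Since $w$ lies in $G_i$, some leaf or node stands for it at that moment, but later steps may split it; I handle this below. If needed, also add one pair between the two children of $t_i$, encoding the adjacency $u_iv_i$, which is not otherwise determined. This gives at most $d+1$ pairs per step and $(d+1)(n-1)$ in total.

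To show correctness, take two leaves $x,y$ and the first step at which they are separated. If that step splits $v_i$ into $u_i$ and $v_i$, then $u_i$ inherits all of $v_i$'s adjacencies except those in $D_i$, and covering handles this. Pairs placed above the split apply to both children, and the new pairs at the $u_i$-leaf are lower, so they override. The point needing care is that each pair must attach to the node that was, at insertion time, the leaf for $w$. A pair added at time $i$ attaches to the node $t$ that was then a leaf. Later splits put pairs only at deeper nodes, so the "lowest covering pair" rule gives exactly the intended semantics. I also need to check non-crossing: both endpoints of a new pair were leaves of the tree at time $i$, and all later pairs lie strictly below current leaves, so no later pair can have an endpoint that is a strict ancestor of one endpoint and a strict descendant of another in a crossing way.

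\textbf{Implementation and the main obstacle.} Running the sequence forward in $O(dn+|E(G)|)$ time requires computing each $D_i$ quickly. The obstacle is computing $D_i$ without comparing neighbourhoods in full at each step. I would try the following. Maintain adjacency sets of $G_i$ as hash sets with deletions. Compute $D_i$ by scanning the neighbourhoods of $u_i$ and $v_i$ in $G_i$, charging the scan of $N(u_i)$ to the deletion of $u_i$, whose incident edges disappear; the total of this part is $O(|E(G)|)$. The scan of $N(v_i)$ could be expensive, since $v_i$ persists. To avoid it, scan only $N(u_i)$: the vertices of $N(u_i)\setminus N(v_i)$ are found directly. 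The vertices of $N(v_i)\setminus N(u_i)$ are then determined by the count $|N(v_i)| - |N(u_i)\cap N(v_i)|$, which costs $O(1)$ given stored degrees, but they still have to be listed. If necessary, I would accept $O(\deg)$ for this step and argue via the bound $|D_i|\le d$ together with the degree relation $\deg(v_i) \le \deg(u_i)+d$, which makes the scan of $N(v_i)$ cost $O(\deg(u_i)+d)$. Summed over all steps, this gives $O(dn+|E(G)|)$. The final tree is then built in linear time, and the stored node identifiers let each pair be attached to the correct node.
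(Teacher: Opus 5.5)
Your proposal is correct and produces exactly the paper's signed tree model: splitting the $v_i$-leaf in reverse order gives the same tree as the paper's forward merging of the current nodes of $u_i$ and $v_i$, with the same pairs from $u_i$ to the vertices of $D_i$ (the paper's $A_i\cup B_i$) and to $v_i$. Your final running-time argument, which scans both neighbourhoods and uses $\deg_{G_i}(v_i)\le\deg_{G_i}(u_i)+d$, is also the paper's; the only real difference is that your reverse view makes the invariant ``the current model represents $G_i$'' easy to check (the paper defers correctness to the cited reference), while in the forward implementation your ``stored node identifiers'' are just the paper's current roots.
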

\begin{proof}
  We sketch a~proof because \cite[Lemma 3.1]{bonnet2024signedtree} does not mention running times.

  We initialize the signed tree model to $(T,A(T),B(T))$ with $T$ a~forest of isolated roots, and $A(T) = B(T) = \emptyset$. 
  For $i$ from 1 to $n-1$, let $G_i := G - \{u_1, u_2, \ldots, u_{i-1}\}$.
  In time $O(\deg_{G_i}(u_i)+d)=O(\deg_G(u_i)+d)$, we compute within $N_{G_i}(u_i) \cup N_{G_i}(v_i) \setminus \{u_i,v_i\}$ the sets
  \[A_i := N_{G_i}(v_i) \setminus N_{G_i}[u_i]\text{~and~}B_i := N_{G_i}(u_i) \setminus N_{G_i}[v_i].\]
  The time bound holds since $|(N_{G_i}(u_i) \cup N_{G_i}(v_i)) \setminus \{u_i,v_i\}| \leqslant |N_{G_i}(u_i)|+d$, due to \emph{width~$d$}.
  We add $u_ia$ to $A(T)$ for every $a \in A_i$, and add $u_ib$ to $B(T)$ for every $b \in B_i$.
  We add $u_iv_i$ to $A(T)$ if $u_i$ and $v_i$ are not adjacent, and to $B(T)$ otherwise.
  This takes $O(d)$ time.
  Finally, we make a~common parent to $u_i$ and $v_i$ in~$T$.
  This parent becomes the new $v_i$.
  We remove $u_i$ from $G$ in time $O(\deg(u_i))$.

  The overall running time is \[\sum_{i \in [n-1]} O(\deg(u_i)+d) = \sum_{i \in [n-1]} O(\deg(u_i)) + \sum_{i \in [n-1]} O(d) = O(|E(G)|+dn),\] where $\sum_{i \in [n-1]} O(\deg(u_i))=O(|E(G)|)$ exploits the fact that the vertices $u_i$ are pairwise distinct.
  The claimed upper bound on the number of transversal pairs holds as every pair $(u_i,v_i)$ incurs at~most~$d+1$ transversal pairs. 
  It is not difficult to see that the resulting $(T,A(T),B(T))$ is a~signed tree model of~$G$.
  The details can be found in~\cite[Lemma 3.1]{bonnet2024signedtree}.
\end{proof}

As a~consequence of~\cref{lem:sd-degen-to-stm,cor:apsp-gen}: 

\begin{corollary}\label{cor:sd-degen-to-apsp}
  There is an algorithm that, given an sd-degeneracy sequence of~$G$ of width~$d$, solves \textsc{All-Pairs Shortest Path} on~$G$ in $O(dn^2 \log n)$ time. 
\end{corollary}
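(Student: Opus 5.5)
The statement is \cref{cor:sd-degen-to-apsp}: given an sd-degeneracy sequence of width~$d$, solve \textsc{All-Pairs Shortest Path} in $O(dn^2\log n)$ time. The plan is to chain \cref{lem:sd-degen-to-stm} with \cref{cor:apsp-gen}, taking the given sequence in place of the algorithm that computes a signed tree model.

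First, apply \cref{lem:sd-degen-to-stm} to the input sequence. This yields, in time $O(dn+|E(G)|)$, a signed tree model $\Tc$ of~$G$ with $p \le (d+1)(n-1) = O(dn)$ transversal pairs. Because $|E(G)| \le n^2$, this preprocessing costs $O(dn+n^2)$. If the lemma's construction produces loops (in the sketch, each new parent replaces $v_i$, and pairs like $u_iv_i$ are siblings, so loops should not arise), \cref{lem:tree-model-remove-loops} removes them in linear time while adding at most $n$ transversal pairs. The bound $p=O(dn)$ is therefore unaffected.

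Second, run the \textsc{Single-Source Shortest Path} algorithm of \cref{thm:main-sssp} from every source $v\in V(G)$ on the same model~$\Tc$. Each call takes $O(p\log n)=O(dn\log n)$ time, so the $n$ calls take $O(dn^2\log n)$ in total. This is exactly the argument of \cref{cor:apsp-gen}, instantiated with $T(n,m)=O(dn+m)$ and $p(n)=(d+1)(n-1)$.

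Adding the two phases gives $O(dn+n^2+dn^2\log n)=O(dn^2\log n)$, assuming $d\ge 1$. If $d=0$, we use $p\le n-1$ instead, and the same bound $O(n^2\log n)$ holds. There is no real obstacle here; the only care needed is to confirm that the $|E(G)|$ term and any loop removal are dominated by the main term.
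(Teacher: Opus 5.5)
Your proposal is correct and follows the paper's argument: it feeds the signed tree model from \cref{lem:sd-degen-to-stm}, which has $(d+1)(n-1)$ transversal pairs and is built in $O(dn+|E(G)|)$ time, into \cref{cor:apsp-gen}, meaning $n$ calls to \cref{thm:main-sssp} at $O(dn\log n)$ each. Your extra checks, that the $|E(G)|\le n^2$ term is dominated, that no loops arise, and that the case $d=0$ should be read as $O((d+1)n^2\log n)$, are sound and only make explicit what the paper leaves implicit.
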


It is NP-complete to decide if a~graph has sd-degeneracy at~most~1 and co-NP-complete to determine if its symmetric difference is at~most~8 \cite{bonnet2024signedtree}, while the existence of non-trivial approximation algorithms appears to be open.
However, by a~greedy algorithm, one can find in $O(n^4)$ time an sd-degeneracy sequence of width $d$ in an $n$-vertex graph of \emph{symmetric difference} at~most~$d$.
In~\cref{thm:sym-diff-sd-degen} we provide a~faster $O(\frac{n^3}{d^2})$-time algorithm loosening the width to~$O(d \log n)$.
We instantiate it with $d=O(n^{1/3})$ as this balances its running time with that of~\cref{cor:sd-degen-to-apsp}, and establishes~\cref{thm:sym-diff}.

\subsection{Twin-width}

We will not explicitly need the definitions of twin-width and versatile twin-width, only a~consequence of bounded versatile twin-width.
We thus refer the interested reader to~\cite{twin-width1,twin-width2}.
In this paper only three facts on twin-width will matter.

Twin-width is a~hereditary parameter.
\begin{observation}[\cite{twin-width1}]\label{obs:tww-hered}
  For every graph $G$, every induced subgraph of $G$ has twin-width at~most~the twin-width of~$G$. 
\end{observation}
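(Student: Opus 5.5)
The claim that every induced subgraph of $G$ has twin-width at most that of $G$ follows directly from the definition via contraction sequences. My plan is to restrict an optimal contraction sequence of $G$ to $V(H)$, where $H = G[X]$, and show that every red degree along the way can only drop.

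Fix a contraction sequence $G = G_n, G_{n-1}, \ldots, G_1$ of $G$ of width $d$. Each $G_i$ is a trigraph whose vertices are parts of a partition $\mathcal P_i$ of $V(G)$. Two parts $P, Q$ are joined by
\begin{itemize}
\item a black edge if $P \times Q \subseteq E(G)$,
\item no edge if no pair of $P \times Q$ is an edge of $G$,
\item a red edge otherwise.
\end{itemize}

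Define $\mathcal P_i^X := \{P \cap X : P \in \mathcal P_i,\ P \cap X \neq \emptyset\}$, and let $H_i$ be the trigraph on $\mathcal P_i^X$ built by the same rule from $H$. Consecutive partitions $\mathcal P_{i-1}^X$ and $\mathcal P_i^X$ either coincide or differ by merging two parts. The latter happens exactly when the two parts merged in $G$ both meet $X$. After deleting repeated consecutive terms, the $H_i$ therefore form a contraction sequence of $H$ ending in a single vertex.

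The key step is to compare red degrees. Suppose $P \cap X$ and $Q \cap X$ are joined by a red edge in $H_i$. Then $(P \cap X) \times (Q \cap X)$ contains both an edge and a non-edge of $H$, hence of $G$, since $H$ is induced. So $P \times Q$ contains both an edge and a non-edge of $G$, and $PQ$ is red in $G_i$. The map $P \cap X \mapsto P$ is injective, so the red degree of $P \cap X$ in $H_i$ is at most the red degree of $P$ in $G_i$, which is at most $d$.

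Hence the restricted sequence has width at most $d$, and $\mathrm{tww}(H) \le \mathrm{tww}(G)$. The only point needing care is the bookkeeping that skipped (non-merging) steps and empty parts do not invalidate the sequence. I do not expect a genuine obstacle here.
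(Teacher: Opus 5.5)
Your proof is correct: restricting each partition $\mathcal P_i$ to $X$ and deleting repeated terms yields a contraction sequence of $G[X]$, and since a mixed pair $(P\cap X, Q\cap X)$ forces $(P,Q)$ to be mixed, red degrees can only drop. The paper itself gives no proof and only cites the original twin-width paper, where hereditariness is the standard consequence of this same restriction argument, so your route is essentially the expected one.
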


Twin-width and versatile twin-width are functionally equivalent.
\begin{lemma}[Lemma 3.8 in \cite{twin-width2}]\label{thm:tww-vers-tww}
  There is a~function $f_{\ref{thm:tww-vers-tww}}(d) = 2^{2^{O(d)}}$ such that every graph of twin-width at~most~$d$ has versatile twin-width at~most~$f_{\ref{thm:tww-vers-tww}}(d)$.
\end{lemma}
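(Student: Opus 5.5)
The plan is to go through the matrix characterization of twin-width, because versatility (many disjoint contractible pairs at every step) follows naturally from linear-size Marcus--Tardos-type density bounds. Recall the definition we aim for. A graph has versatile twin-width at most $D$ if it admits a sequence of partitions $\mathcal P_n,\dots,\mathcal P_1$, where each is obtained from the previous one by merging two parts, and two conditions hold. First, every partition has red degree at most $D$. Second, for every $i$, there are at least $|\mathcal P_i|/D$ pairwise-disjoint pairs of parts of $\mathcal P_i$ each of whose merging keeps the red degree at most $D$.

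\emph{Step 1: from twin-width to a mixed-free ordering.} From a $d$-contraction sequence of $G$, order $V(G)$ so that every part created along the sequence is as close to an interval as possible. Concretely, take a DFS order of the contraction tree. One then shows, as in the grid-theorem direction of \cite{twin-width1}, that the adjacency matrix $M$ in this order has no $t$-mixed minor for some $t=2^{O(d)}$. Here a division cell is \emph{mixed} if it is neither constant nor constant on rows/columns, and a $t$-mixed minor is a $t\times t$ division all of whose cells are mixed. The argument is a pigeonhole: a large mixed minor, restricted to the moment just before some row or column block gets merged with another block, would force a red degree larger than $d$.

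\emph{Step 2: a linear bound on mixed cells.} I would use the Marcus--Tardos / Guillemot--Marx style lemma for $t$-mixed-free matrices, applied to any division of $M$ into $k$ row intervals and $k$ column intervals. It gives that the number of mixed cells is at most $c_t\, k$ with $c_t=2^{O(t)}$, which is $2^{2^{O(d)}}$. The proof is the standard counting-and-recursion argument; mixedness is preserved when coarsening, which makes that argument go through.

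\emph{Step 3: building the versatile sequence.} Use \emph{interval} partitions, symmetric on rows and columns, so that at each step the parts are $k$ consecutive intervals $I_1,\dots,I_k$. The red edges between parts are exactly the mixed cells, up to the diagonal. So, by Step 2, the total red degree is at most $2c_t k$. Now pair consecutive intervals $(I_{2j-1},I_{2j})$. Merging such a pair yields a cell structure that is again a division into $k-1$ intervals. The red degree of the merged part is at most the sum of the mixed counts of the two column/row strips. By averaging (Markov), at least $k/4$ of these $\lfloor k/2\rfloor$ disjoint pairs have combined mixed count at most $8c_t$. These are the required $\Omega(k)$ disjoint contractible pairs. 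We then merge one of them to proceed, and the same argument applies at every step.

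\emph{The main obstacle.} The catch is that merging a low-degree pair may raise the red degree of \emph{other} parts, so the worst-case red degree must also be controlled. I would handle this by always contracting inside a dyadic-like scheme. Each round halves the number of intervals by merging a constant fraction of low-mixed consecutive pairs. Any fixed part then only ever sees divisions of $M$ into intervals, and its red degree is bounded by the maximum number of mixed cells in one row strip of a $t$-mixed-free division. That bound is again $O(c_t)$ by the Guillemot--Marx argument applied to the strip. This yields $D=O(c_t)=2^{2^{O(d)}}$.
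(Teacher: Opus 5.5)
The paper gives no proof of this lemma. It is imported from Twin-width~II, and the paper only uses its first-level consequence: the Observation that an $n$-vertex graph has $\lfloor n/d\rfloor$ disjoint pairs of small symmetric difference. Your argument therefore has to stand on its own, and it does not.

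Steps~1 and~2 are sound. A DFS order of the contraction tree makes every part an interval, and the adjacency matrix is then mixed-free. Marcus--Tardos then bounds the \emph{total} number of mixed cells of a $k$-division by $O(c_t k)$. Step~3, however, rests on the claim that red edges between parts are the mixed cells, and this is false. A red edge between parts $A,B$ means the block $A\times B$ is \emph{non-constant}. A block can be non-constant without being mixed, for instance when all its rows equal a non-constant vector, or all its columns do. In particular, a block with a side of length~$1$ is never mixed. So, starting from the partition into singletons, every consecutive pair has combined mixed count~$0$, and your criterion accepts all of them. Yet take the star $K_{1,n-1}$ ordered centre first, whose matrix has no $2$-mixed minor. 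Contracting the consecutive pair (centre, first leaf) yields a part of red degree $n-2$. A low mixed count in the two strips says nothing about the red degree of the merged part, nor about the red degrees it creates elsewhere.

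The repair in your last paragraph fails for a second, independent reason: Marcus--Tardos controls the number of mixed cells of a whole division, not of a single strip. Take the perfect matching $i\sim n+i$ ordered $1,\dots,2n$; its adjacency matrix $\begin{pmatrix}0&I\\ I&0\end{pmatrix}$ has no $4$-mixed minor. The symmetric interval partition consisting of $[1,n]$ and the pairs $\{n+2j-1,n+2j\}$ has $n/2$ mixed cells in the strip of $[1,n]$, and that part also has red degree $n/2$. So nothing in your dyadic scheme bounds the red degree of a fixed part.

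What is missing is precisely the substantive step. You need an invariant on interval partitions of a mixed-free matrix that bounds the number of \emph{non-constant} cells of \emph{every} part. It must also be such that linearly many disjoint fusions of consecutive parts each preserve it. The linear bound on mixed cells alone does not give this.
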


And finally:
\begin{observation}[part of the definition, see~\cite{twin-width2}]
  If an $n$-vertex graph $G$ has versatile twin-width at~most~$d$, then there are at least $h := \lfloor n/d \rfloor$ disjoint pairs $(u_1,v_1), \ldots, (u_h,v_h)$ of vertices of $G$ such that for every $i \in [h]$, $\sd_G(u_i,v_i) \leqslant d$.  
\end{observation}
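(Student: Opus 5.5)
The plan is to unfold the definition of versatile twin-width from~\cite{twin-width2} and read it at the very first step of the partition sequence, where every part is a singleton. Concretely, recall that $G$ has versatile twin-width at~most~$d$ if it admits a~(trigraph) partition sequence $\mathcal P_n, \mathcal P_{n-1}, \ldots, \mathcal P_1$ in which every partition $\mathcal P_i$ has red degree at~most~$d$. The sequence must moreover be \emph{versatile}: for each $\mathcal P_i$ there is a~family of at least $\lfloor |\mathcal P_i|/d \rfloor$ pairwise disjoint pairs of parts, each of which can be merged into a~partition that still has red degree at~most~$d$. We only need this versatility condition at $\mathcal P_n$, the partition of $V(G)$ into singletons. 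That partition contains no red edges, since two singletons are always homogeneous to each other, and it has $|\mathcal P_n| = n$ parts. So versatility hands us $h = \lfloor n/d\rfloor$ pairwise disjoint pairs $\{u_1\},\{v_1\}$, \ldots, $\{u_h\},\{v_h\}$ such that merging any single pair yields a~partition of red degree at~most~$d$.

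The second step turns the red-degree condition into the symmetric-difference bound. Fix one pair $u_i, v_i$ and merge $\{u_i\}$ and $\{v_i\}$ into $X := \{u_i,v_i\}$, leaving all other parts singletons. Take any vertex $w \notin X$. The pair $X,\{w\}$ is red exactly when $w$ is adjacent to one of $u_i, v_i$ but not the other, that is, exactly when $w \in (N_G(u_i)\setminus\{v_i\}) \triangle (N_G(v_i)\setminus\{u_i\})$. Hence the red degree of $X$ equals $\sd_G(u_i,v_i)$. The red degree of every singleton $\{w\}$ is at~most~$1$, since its only possible red neighbour is $X$. The condition that the merged partition has red degree at~most~$d$ therefore gives $\sd_G(u_i,v_i) \leqslant d$, as required. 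Note that only the red degree of $X$ matters here; the singletons impose no constraint.

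I do not expect a~genuine obstacle, since this is a~direct unwinding of the definition. The one step that needs care is matching the exact form of the versatility condition in~\cite{twin-width2}: whether the number of disjoint pairs is stated as $|\mathcal P|/d$ or through some other function of $d$, and whether the red-degree bound on merged partitions uses $d$ itself. If the constants there differ, the fix is to rename the parameter so that both bounds are at~most~$d$. This costs only a~change of the constant and does not affect how the observation is used later, since it is always applied together with \cref{thm:tww-vers-tww}.
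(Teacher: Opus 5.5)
Your proposal is correct and is exactly the intended argument. The paper gives no proof beyond pointing to the definition in~\cite{twin-width2}, whose top-level condition says that $G$ itself has at least $n/d$ pairwise disjoint contractions, each yielding red degree at most~$d$; and, as you observe, contracting $u,v$ in a~graph creates a~vertex of red degree exactly $\sd_G(u,v)$, while every other vertex gets red degree at most~$1$.

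Your paraphrase of versatility as a~condition along a~single partition sequence is not quite the form used in~\cite{twin-width2}, where each of the disjoint contractions must again lead to a~trigraph with the same property. This does not matter, since you only use the condition at the root, where all parts are singletons.
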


\subsection{Merge-width}
Dreier and Toruńczyk~\cite{merge-width} give two equivalent definitions of merge-width, through \emph{construction sequences} and \emph{merge sequences} respectively.
The former is more appropriate in our setting.

A construction sequence is a~sequence of steps $(\Pc_1,E_1,N_1),\dots,(\Pc_m,E_m,N_m)$ constructing a~graph~$G$.
The vertex set $V = V(G)$ is fixed throughout the process.
At each step, $\Pc_i$ is a~partition of~$V$, and $E_i,N_i \subseteq \binom{V}{2}$ are of \emph{resolved edges} and \emph{non-edges} respectively
(intuitively, pairs that have already been determined to belong, respectively not belong to~$E(G)$).
Initially, $\Pc_1$ is the partition into singletons, and $E_1 = N_1 = \emptyset$.
At each step, $(\Pc_{i+1},E_{i+1},N_{i+1})$ are obtained from $(\Pc_i,E_i,N_i)$ by one of the following three operations:
\begin{description}
    \item[merge] Choose two parts $A,B \in \Pc_i$ and replace them by $A \cup B$ in~$\Pc_{i+1}$.
    \item[resolve positively] Choose two parts $A,B \in \Pc_i$ (possibly $A=B$), and set $E_{i+1} = E_i \cup (AB \setminus N_i)$,
    i.e., each pair between~$A$ and~$B$ becomes an edge, except for those that were previously determined to be non-edges.
    \item[resolve negatively] Symmetrically, choose $A,B \in \Pc_i$ and set $N_{i+1} = N_i \cup (AB \setminus E_i)$.
\end{description}
The graph constructed by this sequence is $G = (V,E_m)$.

Optionally, one may require that at the end of the process,
$E_m,N_m$ partition~$\binom{V}{2}$ (note that they are always disjoint),
and that~$\Pc_m = \{V\}$ is the trivial partition.
These two conditions can always be ensured by merging all remaining parts together, and resolving the last part~$V(G)$ negatively with itself.

Let $R_i = E_i \cup N_i$ be the set of all resolved pairs at step~$i$.
For $r \in \Nn$, the \emph{radius-$r$ width} of this construction sequence is
\[ \max_{i \in [m]} \max_{v \in V} |\Ball^r_{R_i}(v) / \Pc_i|, \]
i.e., the maximum number of parts $A \in \Pc_i$ that are at distance at most~$r$ from some vertex~$v$ in the graph of resolved pairs $(V,R_i)$.
The \emph{radius-$r$ merge-width} of~$G$, denoted~$\mw_r(G)$, is the minimum radius-$r$ width of a~construction sequence that constructs~$G$.

The trivial bound on the length of a~construction sequence is~$O(n^2)$
(beyond these, there must be several resolve operations between the exact same pair of parts, which is obviously useless).
If~$d$ is the radius-1 width of the sequence, \cite{merge-width} shows that the bound is in fact $(2d+1)n$.
We restate it algorithmically:
\begin{lemma}[{\cite[Lemma~3.4]{merge-width}}]\label{lem:short-cstr-seq}
  There is an $O(m)$-time algorithm that, given a~construction sequence $(\Pc_1,E_1,N_1), \ldots, (\Pc_m,E_m,N_m)$ of an $n$-vertex~$G$ with radius-$r$ width~$d$, computes a~construction sequence for~$G$ of length at most~$(2d+1)n$, with the same radius-$r$ width.
\end{lemma}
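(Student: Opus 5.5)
The plan is to shorten the sequence by deleting only \emph{redundant} resolve operations, and then to bound the number of resolve operations that survive by a charging argument that uses the radius-$1$ width. Deleting an operation whose effect is already subsumed leaves every triple $(\Pc_i,E_i,N_i)$ of the remaining sequence equal to a triple of the original sequence. Consequently the constructed graph is unchanged, and the radius-$r$ width cannot increase: each remaining step is a step of the old sequence with the same partition and the same resolved set $R_i$, so every ball $\Ball^r_{R_i}(v)/\Pc_i$ is literally one of the old ones. The statement only gives the radius-$r$ width $d$. The counting uses $\mw_1$, which is at most the radius-$r$ width for $r \geq 1$ because balls grow with $r$; so a bound in terms of the radius-$1$ width suffices.

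\textbf{Step 1 (algorithm).} Scan the sequence once. Give every part a fresh identifier when it is created, i.e.\ at the start for the singletons and at each merge for $A\cup B$, and maintain the identifier of the current part of each vertex; a merge only needs to record the new identifier for the two merged parts. Keep a hash set of unordered identifier pairs $\{A,B\}$ that have already been resolved. When a resolve step (positive or negative) between current parts $A,B$ arrives, discard it if $\{A,B\}$ is already in the set, and otherwise keep it and insert the pair. A discarded step is truly redundant: all pairs in $AB$ are already in $R_i$, and the resolve operations only ever modify $AB\setminus R_i$. Each step costs $O(1)$ expected time with hashing, or $O(1)$ worst case by storing, for each part, the list of its resolve partners. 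This gives $O(m)$ overall.

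\textbf{Step 2 (counting).} There are at most $n-1$ merges. For the resolves, observe that if a kept resolve step involves parts $A,B$, then for every $v\in A$ the whole of $B$ lies in $\Ball^1_{R_i}(v)$. Charge each kept resolve $\{A,B\}$ to both $A$ and $B$, and fix a part $A$ during its lifetime, which ends when it is merged or at the end of the sequence. At the last moment of that lifetime, every vertex of every part $B$ with which $A$ was resolved is adjacent in $R$ to $v\in A$. Such a $B$ may since have been merged into a larger part, but it is still contained in a current part, and distinct kept partners $B$ of $A$ are pairwise disjoint or nested, since the parts form a laminar family over time.

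\textbf{Main obstacle.} The difficulty is that several old partners $B,B'$ of $A$ may now lie in the same current part, so the bound $|\Ball^1(v)/\Pc_i|\le d$ does not directly bound the number of partners of $A$. To handle this, I would first refine the charging: charge each kept resolve $\{A,B\}$ only to the part that was created \emph{later}, say $A$. Then at the creation time of $A$, every partner $B$ of $A$ already existed. After $A$ is resolved with $B$, a further resolve with a superset $B'\supseteq B$ is needed only because $B'$ contains new vertices coming from another part $B''$ that is not yet resolved with $A$. I expect this to show that the partners of $A$ (counting $A$ itself) can be taken pairwise disjoint at the end of $A$'s lifetime, hence at most $d$ of them. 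Summing over the at most $2n-1$ parts that ever exist gives at most roughly $2dn$ resolves, and with the $n-1$ merges a total of at most $(2d+1)n$ steps. Making this disjointness argument airtight is the step I expect to require the most care. If it fails, the fallback is to charge instead each resolve that resolves at least one pair $uv$ for the first time to the endpoint part of $u$, and to bound the number of such events per part by the width.
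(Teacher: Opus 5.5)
\textbf{Verdict: genuine gap in Step~2.} Your Step~1 is fine, and it is simpler than the paper's algorithm. The paper reorders the sequence, postponing each resolve to just before the merge that destroys one of its two parts, and only then removes repeated resolves. Postponing never changes \emph{which} pairs of parts get resolved, so pure deduplication leaves no more resolves than the paper's output, and width preservation is immediate in your version. (Minor point: partner lists do not give $O(1)$ worst-case lookups, since a part can have $\Theta(n)$ partners. Use hashing or radix-sort the identifier pairs instead.)

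\textbf{Why the charging fails.} Charging each resolve to the \emph{later-created} part does not bound the charge per part by $d$. Take vertices $a,a',b_1,\dots,b_k$ and run the following sequence.
\begin{itemize}
\item Merge $\{a\},\{a'\}$ into $A$.
\item Resolve $A$ with $\{b_1\}$ and with $\{b_2\}$, then merge these two singletons into $C_2$.
\item For $j=3,\dots,k$, resolve $A$ with $\{b_j\}$, then merge $C_{j-1}$ with $\{b_j\}$ into $C_j$.
\end{itemize}
No resolve is repeated, and the radius-$r$ width is at most $3$ for every $r\ge 1$. A vertex of $A$ only ever sees $A$, the current $C$, and possibly one singleton; a vertex $b_i$ sees its own part and $A$. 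Yet all $k=n-2$ partners of $A$ were created before $A$, so $A$ receives $n-2$ charges. At the end of $A$'s lifetime they all lie in the single part $C_k$, so they are never in distinct parts at any single moment.

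The obstruction is not supersets created after $A$, which are charged elsewhere. It is old partners being merged \emph{with each other} after being resolved with $A$. Your fallback runs into the same problem.

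\textbf{The fix.} Charge to the other end of the timeline. Charge each kept resolve $\{P,Q\}$ to whichever of $P,Q$ is destroyed \emph{first*} (either one if they are merged together or both survive to the end). Then look at the last step $i$ before that part $P$ disappears.
\begin{itemize}
\item Every partner $Q$ charged to $P$ was resolved with $P$ before step $i$ and is destroyed no earlier than $P$.
\item Hence all such partners are distinct, hence disjoint, parts of $\Pc_i$.
\item For $v\in P$, each of them is contained in $\Ball^1_{R_i}(v)$.
\end{itemize}
So $P$ receives at most $d$ charges. Each merge destroys two parts, and each part of $\Pc_m$ receives at most $d$ charges. This gives at most $2d$ resolves per merge plus $d$ per final part, and hence the $(2d+1)n$ bound.

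This is exactly what the paper's postponement argument encodes. After postponing, the resolves just before a merge of $P,Q$ all involve $P$ or $Q$. The width at that moment then bounds the non-repeated ones by $2d$.
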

\begin{proof}
    The argument of~\cite[Lemma~3.4]{merge-width} is as follows:
    consider a~subsequence of a~construction sequence, consisting only of resolve operations, followed by a~merge of say~$P,Q$.
    Then any resolve of this subsequence that does not involve~$P$ or~$Q$ can be postponed after this merge, without affecting the validity or increasing the width of the construction sequence.
    Once all possible merges have been postponed in this way, there can only be~$2d$ resolve operations between two consecutive merges, unless there are some trivially useless resolve operations, i.e., repeated between the same two parts.
    Since there are~$n-1$ merges, this proves the bound.

    Algorithmically, the issue is thus to implement this postponing efficiently.
    Enumerate the merge operations as~$M_1, \ldots, M_{n-1}$, and all the parts that appear throughout the merge sequence as $P_1, \ldots, P_{2n-1}$
    (there are exactly~$2n-1$ of them:~$n$ singletons and~$n-1$ created by merges).
    For each part~$P_i$, determine the index~$d(i)$ of the merge operation~$M_{d(i)}$ that destroyed~$P_i$ by merging it with another part.
    Then a~resolve operation between~$P_i$ and~$P_j$ needs to be postponed until just before $M_{\min(d(i),d(j))}$.
    With this description, it is easy to reconstruct the modified construction sequence, and all the above can be implemented in~$O(m)$.

    At this point, if $m > (2d+1)n$, there must be some trivially useless resolve operations, which are easily detected and removed in linear time.
\end{proof}

Recall that (positive) tree models were introduced to represent bounded twin-width graphs.
Signed tree models can be used for bounded merge-width in a~similar way.
\begin{lemma}\label{lem:cstr-seq-to-stm}
    From a~construction sequence for $n$-vertex graph~$G$ with~$p$ resolve operations,
    one can construct in~$O(p+n)$ time a~signed tree model with at most~$n+p$ transversal pairs.
\end{lemma}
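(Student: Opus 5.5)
The plan is to read the tree directly off the merge operations and to turn each resolve operation into one transversal pair. The tree $T$ is built from the $n-1$ merges. Its leaves are the vertices of $V(G)$, which are the singleton parts. Each merge of parts $A,B$ into $A\cup B$ creates a new internal node whose two children are the nodes of $A$ and $B$. Thus every part $P$ that ever appears in the sequence corresponds to a node $t_P$ with $L(T_{t_P})=P$.

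If the sequence does not end with $\Pc_m=\{V\}$, we first append merges. If $E_m,N_m$ do not partition $\binom V2$, we also append one final negative self-resolve of $V$. This adds at most $n$ operations.

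Each resolve between parts $A\neq B$ becomes the transversal pair $t_At_B$, placed in $B(T)$ if the resolve is positive and in $A(T)$ if it is negative. A self-resolve of a part $A$ becomes a loop at $t_A$ of the same sign. Repeated identical resolves can be dropped: only the first one between a given pair of parts has any effect, so the later ones are useless. We then apply \cref{lem:tree-model-remove-loops}, which adds at most $n$ pairs in linear time. This gives the bound of $n+p$ transversal pairs, modulo the added final operations. Those can be charged to the $n$ term by noting that the loop-removal matching and the final loop overlap, or absorbed by a careful count. The tree is full binary, and all of the above runs in time $O(p+n)$ using union--find-free bookkeeping: we simply maintain an array mapping each current part to its node.

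\textbf{Correctness.} The key invariant is that a pair $u,v$ gets the status assigned by the \emph{earliest} resolve operation involving parts $A\ni u$ and $B\ni v$. After that resolve, the pair lies in $E_i$ or $N_i$ and never changes. The parts containing $u$ only grow over time, so an earlier resolve corresponds to nodes that are deeper, i.e.\ closer to the leaves. Hence the earliest resolve covering $u,v$ is precisely the \emph{lowest} transversal pair $t_At_B$ with $t_A\preceq_T u$ and $t_B\preceq_T v$. This is exactly the covering rule defining $G_{\Tc}$. For loops, the rule matches the behaviour of \cref{lem:tree-model-remove-loops}, which propagates the nearest loop above each node.

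\textbf{Main obstacle.} The step needing care is the non-crossing condition. Suppose $t_At_B$ and $t_{A'}t_{B'}$ cross, with $t_A\prec t_{A'}$ and $t_{B'}\prec t_B$. Then $A\supsetneq A'$, so the resolve involving $A'$ occurred before $A$ existed. Symmetrically, $B'\supsetneq B$, so the resolve involving $B$ occurred before $B'$ existed. A resolve between $A$ and $B$ requires both parts to coexist, and likewise for $A'$ and $B'$. Combining these gives $\text{time}(A,B)<\text{time}(A',B')<\text{time}(A,B)$, a contradiction. Pairs that share a node, or where one lies above the other, are not crossings. Finally, uniqueness of the covering pair follows from non-crossing together with the removal of duplicate resolves.
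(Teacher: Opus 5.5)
Your proof is correct and follows the paper's argument. The tree is read off the merges, each resolve becomes a transversal pair of the same sign, non-crossing follows from the order in time, and self-resolves are handled by \cref{lem:tree-model-remove-loops}. Your explicit removal of duplicate resolves (which could otherwise put the same pair in both $A(T)$ and $B(T)$) and your earliest-resolve-equals-lowest-covering-pair invariant fill in details the paper only calls easy.

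Your hedging on the count is unnecessary, and the bound follows directly. The appended merges create no pairs, and the final negative self-resolve is not needed, since unresolved pairs are non-edges in both models. Even if you add it, it is a loop that \cref{lem:tree-model-remove-loops} deletes. So the total is at most $p$ pairs from resolves plus at most $n-1$ matching pairs.
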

\begin{proof}
    The nodes of the tree model are exactly the parts appearing throughout the merge sequence:
    the singletons are the leaves (in bijection with~$V(G)$), and whenever parts~$P,Q$ are merged,
    $P \cup Q$ is added as parent of~$P$ and~$Q$.
    Positive or negative resolves between parts~$P,Q$ become positive or negative transversal pairs between the corresponding nodes.
    There cannot be crossing transversal pairs, as they would contradict the flow of time in the construction sequence.
    It is then easy to check that this is a~signed tree model for~$G$, and to construct it in~$O(p+n)$ time.

    There is only one issue: construction sequences allow resolving a~part with itself,
    hence this tree model may have loops. They can be removed with \cref{lem:tree-model-remove-loops}.
    The final tree model has~$p$ transversal pairs from resolve operations, and~$n$ from the use of \cref{lem:tree-model-remove-loops}.
\end{proof}

\section{Single-Source Shortest Path for signed tree models}\label{sec:sssp}
In this section, we prove the following.
\mainthm*

Consider a~signed tree model $\Tc = (T,A,B)$ for~$G$.
For each node~$t$ of~$T$, make an arbitrary choice of left-to-right ordering on the children of~$t$,
and call~$<_T$ the resulting left-to-right ordering on the leaves of~$T$, which are exactly the vertices of~$G$.
For simplicity, we will henceforth identify~$V(G)$ and the leaves of~$T$ with~$[n]$, so that~$<_T$ is the natural ordering on~$[n]$.

For a~node~$t$ of~$T$, let $L(t) = \{v \in V(G) : v \qgt_T t\}$ denote the set of leaves in the subtree of~$t$.
This is an interval of~$<_T$.
Now each transversal pair $uv \in A~\cup B$ can be interpreted as a~rectangle $L(u) \times L(v)$ in the adjacency matrix of~$G$, where the columns and rows are (symmetrically) ordered by $<_T$. Note that $L(u)$ and $L(v)$ are disjoint, as neither of the pair $u,v$ is a~strict ancestor of the other and we further assumed $u\neq v$ in our signed tree model.
There will also be the symmetric rectangle $L(v) \times L(u)$ for the other orientation~$vu$. We break the symmetry by choosing the rectangle $L(u)\times L(v)$ with $L(u) <_T L(v)$.
We denote by~$\Rc_A$ and~$\Rc_B$ the families of rectangles defined by the transversal pairs of~$A$ and~$B$ respectively,
and $\Rc_\Tc = \Rc_A \cup \Rc_B$.

\begin{lemma}
    The family $\Rc_\Tc$ of rectangles of a~signed tree model $\Tc$ forms a~laminar family.
\end{lemma}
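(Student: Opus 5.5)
The plan is to reduce the statement to the laminarity of the intervals $L(t)$ together with the non-crossing condition on transversal pairs. First I record the basic fact about the intervals: for any two nodes $t,t'$ of $T$, the sets $L(t)$ and $L(t')$ are either disjoint, or one contains the other. Containment happens exactly when one of $t,t'$ is an ancestor of the other: $L(t)\subseteq L(t')$ iff $t' \preceq_T t$. Since two rectangles are disjoint as soon as one pair of coordinate projections is disjoint, the only configuration to exclude is a proper intersection.

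So take two distinct rectangles $R_1 = L(u_1)\times L(v_1)$ and $R_2 = L(u_2)\times L(v_2)$ from $\Rc_\Tc$, normalized so that $L(u_i) <_T L(v_i)$, and suppose they intersect. Then $L(u_1)\cap L(u_2)\neq\emptyset$ and $L(v_1)\cap L(v_2)\neq\emptyset$. Consequently $u_1,u_2$ are comparable in $\preceq_T$, and so are $v_1,v_2$. Four cases arise according to which of each pair is the ancestor.

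1. If $u_1\preceq_T u_2$ and $v_1\preceq_T v_2$, then $R_2\subseteq R_1$. The case with both inequalities reversed is symmetric.
2. In the mixed case, $u_1\preceq_T u_2$ and $v_2\preceq_T v_1$, with at least one of these strict since the rectangles (hence the pairs) are distinct. If both are strict, then some endpoint of $u_1v_1$ is a strict ancestor of an endpoint of $u_2v_2$ and vice versa, so the pairs cross. This is forbidden.
3. The remaining degenerate subcase is, say, $u_1=u_2$ and $v_2\prec_T v_1$ strictly. Then $L(v_2)\subseteq L(v_1)$ and the rectangles are nested after all, so there is nothing to prove.

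The main subtlety I expect is the orientation choice. I must check that picking the representative with $L(u)<_T L(v)$ does not create a spurious intersection, in which one rectangle's first coordinate meets the other's second coordinate but not its first. This does not occur, because intersection is tested coordinatewise on the chosen representatives. The only thing the case analysis needs is that intersecting rectangles have comparable first endpoints and comparable second endpoints, which holds regardless of orientation.

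It remains to confirm that every mixed-strict case really matches the crossing definition in the paper, namely "one of $u_1,v_1$ is a strict ancestor of one of $u_2,v_2$, and vice versa." With $u_1\prec_T u_2$ and $v_2\prec_T v_1$, it does.
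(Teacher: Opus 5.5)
Your proof is correct and follows essentially the same route as the paper's. If the rectangles meet, the laminarity of the intervals $L(t)$ makes the first endpoints comparable and the second endpoints comparable; same-direction comparability gives nesting, and strict opposite-direction comparability ($u_1 \prec_T u_2$ with $v_2 \prec_T v_1$) is exactly a crossing, which the definition of a signed tree model forbids. As you note, the orientation choice is harmless because products intersect coordinatewise, and your degenerate case 3 is already covered by case 1.
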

\begin{proof}
    Consider $R_1,R_2 \in \Rc_\Tc$, corresponding to transversal pairs~$u_1v_1$ and~$u_2v_2$ in~$\Tc$.
    Note that the family of intervals $\{L(v):v \in V(T)\}$ defined by~$\Tc$ is itself a~laminar family, and that $L(u) \subseteq L(v)$ if and only if~$u$ is a~descendant of~$v$ in~$\Tc$.
    Thus~$L(u_1)$ and~$L(u_2)$ are either disjoint or contained in one another.
    If they are disjoint, then so are~$R_1,R_2$ and we are done.
    We thus assume that~$L(u_1), L(u_2)$ are contained in one another, and similarly for~$L(v_1), L(v_2)$.
    Up to symmetry, this leaves only two cases:
    \begin{itemize}
        \item If $L(u_1) \subseteq L(u_2)$ and $L(v_1) \subseteq L(v_2)$, then $R_1 \subseteq R_2$ as desired, and similarly when all inclusions are reversed.
        \item Otherwise, $u_1$ is a~strict descendant of~$v_1$ and~$v_2$ a~strict descendant of~$u_2$ (or vice versa).
          This contradicts the non-crossing condition of the definition of signed tree model. \qedhere
    \end{itemize}
\end{proof}

Thus we can consider the inclusion forest of~$\Rc_\Tc$, which can be computed in time $O(m \log m) = O(m \log n)$ when $|\Rc_\Tc| = m$ by \cref{lem:rectangle-inclusion-tree}.
In this inclusion forest, it is in general possible to have a~rectangle~$R$ and its parent~$R'$ of the same sign (positive or negative),
meaning that $R,R' \in \Rc_A$ or $R,R' \in \Rc_B$ is allowed.
This, however, is pointless: if this case happens, then~$R$ and the corresponding transversal pair can simply be deleted from the family of rectangles and the signed tree model $\Tc$ without changing the graph represented by~$\Tc$.
Thus we obtain the following.
\begin{observation}\label{obs:alternate}
    For any graph $G$ and a~signed tree model $\Tc$ defining $G$, there exists a~signed tree model $\Tc'$ defining $G$ in which no transversal pair is covered by a~transversal pair of the same sign. Moreover, such a~model $\Tc'$ can be constructed in time $O(m\log m)$, where $m=\abs{A(T)\cup B(T)}$.
\end{observation}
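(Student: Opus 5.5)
We compute the inclusion forest $F$ of $\Rc_\Tc$ with \cref{lem:rectangle-inclusion-tree}, in time $O(m\log m)$. Then we delete every rectangle whose parent in $F$ has the same sign, together with its transversal pair. Deleting the pairs simultaneously is safe, because sign-equality with the \emph{original} parent is what matters.

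The correctness argument runs as follows. Fix a pair of leaves $x<_T y$. Its adjacency is determined by the covering transversal pair, which is the inclusion-minimal rectangle of $\Rc_\Tc$ containing the point $(x,y)$. By laminarity, the rectangles containing $(x,y)$ form a chain in $F$, from a root down to that minimal rectangle $R$. Call this the chain of $(x,y)$.

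Suppose we delete some set $D$ of rectangles, each having the same sign as its parent in $F$. The surviving minimal rectangle in the chain is the deepest non-deleted ancestor $R'$ of $R$. Every rectangle strictly between $R'$ and $R$, together with $R$ itself, was deleted. Walking up the chain, each deleted node has the sign of its parent, so by induction $R$ has the same sign as $R'$. Hence adjacency of $x,y$ is unchanged.

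If all of $R$'s ancestors are deleted as well, then its topmost ancestor, a root, was deleted. This is impossible, since roots have no parent and are never deleted. So some ancestor always survives.

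It remains to check that the result is a valid signed tree model. We only removed pairs, so the non-crossing condition persists and the tree $T$ is untouched.

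For the ``moreover'' property, we must show that no surviving pair is covered by a surviving pair of the same sign. The covering relation among surviving pairs is exactly the parent relation in the inclusion forest of the surviving rectangles. A surviving rectangle $R$ with new parent $R'$ has an original parent $P$ of the opposite sign (otherwise $R$ would have been deleted). If $P=R'$ we are done. Otherwise $P$ was deleted, so $P$ has the same sign as its own parent, and so on up to $R'$. By the induction above, $R'$ has the sign of $P$, which is opposite to $R$.

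Here ``covered'' should be read via the immediate rectangle inclusion. Covering is defined through minimal pairs, and because of the symmetric orientation convention, each transversal pair corresponds to one rectangle with $L(u)<_T L(v)$. We must also check that the ancestor conditions $u'\preceq u$, $v'\preceq v$ match rectangle inclusion under the chosen orientation. This holds since $L(u)\subseteq L(u')$ and disjointness force consistent orientation.

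The running time is $O(m\log m)$ for the forest plus $O(m)$ for one pass over it. The main subtlety is this matching of ``covering'' with parenthood in $F$ and its orientation; the rest is routine.
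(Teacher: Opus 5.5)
Your proposal is correct and follows the paper's route: compute the inclusion forest of $\Rc_\Tc$ with \cref{lem:rectangle-inclusion-tree} and delete every rectangle whose parent has the same sign. The paper only remarks that such a deletion leaves the graph unchanged. Your chain-induction argument supplies the details it leaves implicit: deleting all such rectangles simultaneously is safe, roots are never deleted, and a surviving rectangle's new parent inherits the sign of its original opposite-sign parent.
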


We are now ready to begin the construction of an interval biclique partition.
\begin{lemma}\label{lem:tree-model-to-IBP}
Let $\Tc=(T,A(T),B(T))$ be a~signed tree model of $G$ on the vertex set $[n]$. Then $G$ admits an interval biclique partition with $O(m)$ bicliques, which can be constructed in time $O(m \log m)$, where $m=\abs{A(T)}+\abs{B(T)}$.  
\end{lemma}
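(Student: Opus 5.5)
First apply \cref{obs:alternate} to assume, in time $O(m\log m)$, that no transversal pair is covered by one of the same sign. Then use \cref{lem:rectangle-inclusion-tree} to compute the inclusion forest $F$ of the laminar family $\Rc_\Tc$, again in $O(m\log m)$ time. After the cleaning, signs alternate along every root-to-leaf path of $F$: roots may be of either sign, the children of a positive rectangle are negative, and the children of a negative rectangle are positive. I also discard negative roots; they cover only non-edges, and nothing positive lies above them.

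The key claim is that the adjacency matrix entries $(x,y)$ with $x<_T y$ that are edges of $G$ are exactly the points of
\[\bigsqcup_{R\in \Rc_B} \Bigl(R \setminus \bigcup_{R' \text{ child of } R \text{ in } F} R'\Bigr).\]
To verify this, take $x<y$ and the smallest rectangle of $\Rc_\Tc$ containing $(x,y)$. Laminarity ensures this is well defined, and it corresponds to the transversal pair covering $\{x,y\}$. Indeed, a pair $uv$ with $u\preceq_T x$ and $v\preceq_T y$ is exactly a rectangle containing $(x,y)$, and minimality under ancestry matches minimality under inclusion. Hence $x,y$ are adjacent iff this smallest rectangle is positive. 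This happens iff $(x,y)$ lies in some positive $R$ but in none of $R$'s children. The union is disjoint: rectangles with nested positive roots are separated because each point has a unique smallest containing rectangle.

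For each positive $R$ with $h_R$ children, the children are pairwise disjoint, since siblings in a laminar inclusion forest are disjoint. The region $R\setminus\bigcup R'$ is the complement of $h_R$ disjoint rectangles inside the rectangle $R$. Apply \cref{lem:rectangles-complement}, either intersecting its output with $R$, or adding four large rectangles framing $R$ and then keeping only the pieces inside $R$. This yields $O(h_R+1)$ disjoint rectangles in time $O((h_R+1)\log(h_R+1))$. Summed over positive $R$, using $\sum_R h_R\le m$, we get $O(m)$ rectangles in $O(m\log m)$ time. Each resulting rectangle $I\times J$ lies inside some $R=L(u)\times L(v)$ with $L(u)<_T L(v)$, so $I,J$ are disjoint intervals of $<_T$. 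Thus $I\times J$ is a biclique between intervals, and together these bicliques partition $E(G)$.

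The main obstacle I expect is the correctness bookkeeping rather than the geometry. One point is the reduction from "covered by a positive pair" to "smallest containing rectangle is positive." The other is the boundary issue that \cref{lem:rectangles-complement} complements in the whole plane. The discrete version must clip to $R$ while keeping the count linear; the frame trick handles this.
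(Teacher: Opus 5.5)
Your proposal is correct and follows the paper's proof: clean with \cref{obs:alternate}, build the inclusion forest via \cref{lem:rectangle-inclusion-tree}, complement the pairwise-disjoint children inside each positive rectangle with \cref{lem:rectangles-complement}, and take the union over $\Rc_B$. Your correctness and disjointness argument through the unique smallest rectangle containing $(x,y)$ is a slightly cleaner substitute for the paper's ``intermediate negative rectangle'' argument, and in fact does not need alternation; your explicit clipping to $R$ and the $O(h_R+1)$ count also handle boundary cases that the paper glosses over.
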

\begin{proof}
    Consider the family of rectangles $\Rc_T=\Rc_A \cup \Rc_B$ of $\Tc$, where $\Rc_A$ and $\Rc_B$ are the families of rectangles for the negative and positive transversal pairs, respectively. 
    One can compute the inclusion forest of $\Rc_T$ in time $O(m \log m)$ by~\cref{lem:rectangle-inclusion-tree}. 
    Moreover, we may assume that the sign (positive or negative) of the nodes in the inclusion forest alternate by~\cref{obs:alternate}. 
    
    Consider a~positive node and its (negative) children, each corresponding to the rectangle $R\in \Rc_B$ and $R_1, \ldots, R_\ell\in \Rc_A$. 
    Notice that $R_1, \ldots, R_\ell$ are pairwise disjoint due to the laminarity of $\Rc$. By~\cref{lem:rectangles-complement}, in time 
    $O(\ell  \log \ell)$, one can compute a~new family of rectangles $\Pc(R)$ consisting of~$O(\ell)$ rectangles which partitions $R \setminus \bigcup_{i=1}^\ell R_i$. 
    
    Let $\Pc$ be the union of $\Pc(R)$ over all $R\in \Rc_B$ and $\ell_R$ be the number of (negative) children for each positive $R\in \Rc_B$ in the inclusion forest. 
    By iterating the above procedure for all positive nodes, one can compute $\Pc$ in time $O(\sum_{R\in \Rc_B}\ell_R\log \ell_R)=O(\abs{\Rc_A}\cdot \log \abs{\Rc_A})=O(m \log m)$. We also note that the number of parts in $\Pc$ is at most $O(m)$. 

    Now, we argue that $\Pc$ is an interval biclique partition of $G$.
    Note that all rectangles in $\Pc$ are pairwise disjoint.
    Indeed, any two rectangles of $\Pc$ computed for two disjoint positive rectangles in $\Rc_B$ or the same rectangle in $\Rc_B$ are clearly disjoint. 
    Consider two positive rectangles $R,R'\in \Rc_B$ with $R\subsetneq R'$. 
    Then by the assumption that the sign of the nodes alternates in the inclusion forest, there is a~negative rectangle $R''\in \Rc_A$ such that $R\subsetneq R''\subsetneq R'$. 
    Furthermore, we can pick $R''$ to be a~child of $R'$ in the inclusion forest of~$\Rc_T$.
    If two new rectangles $P$ and $P'$ in $\Pc$ are obtained when handling $R$ and $R'$ respectively, then $P$ is fully contained in $R$ and thus contained in $R''$, whereas $P'$ is disjoint from $R''$ by construction of $\Pc(R')$.
    
    Finally, observe an ordered pair $(i,j)\in [n]^2$ with $i<_T j$ has an edge between them in $G$ if and only if $(i,j)$ is covered by some rectangle in $\Pc$. 
    Therefore, each rectangle $P=I\times J\subseteq [n]\times [n]$ in $\Pc$ seen as a~set of unordered pairs in $[n]\times [n]$ is a~set of edges in the form $(i,j)$, where $i$ and $j$ are respectively taken from the intervals $I$ and $J$. 
    That is, each rectangle is an interval biclique. By the pairwise disjointness of rectangles in $\Pc$, it follows that the family $\Pc$ is an interval biclique partition of $G$.
\end{proof}

Finally, from this interval biclique partition, we can reconstruct the other two items of \cref{thm:main},
namely a~DAG and a~positive tree model.
\begin{lemma}\label{lem:IBP-to-positive-model}
    For a~graph~$G$ with~$n$ vertices, given an interval biclique partition of~$G$ into~$p$ bicliques,
    one can compute:
    \begin{itemize}
        \item a~DAG compression of size $O(p \log n)$ for~$G$ in time~$O(p \log n)$, and
        \item a~positive tree model with $O(p \log^2 n)$ transversal edges for~$G$ in time~$O(p \log^2 n)$.
    \end{itemize}
\end{lemma}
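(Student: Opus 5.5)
We build a balanced binary tree $S$ over the ordered vertex set $[n]$ (a segment tree of height $\lceil \log n\rceil$), computable in $O(n)$ time. Every interval $I \subseteq [n]$ decomposes canonically into $O(\log n)$ nodes of $S$ whose leaf sets partition $I$; these canonical nodes are found in $O(\log n)$ time by the standard top-down descent. We may assume $p \geqslant n/2$ up to discarding isolated vertices, or simply account for the additive $O(n)$ cost of building $S$. Both outputs use $S$ (or a DAG derived from it), with edges directed from a node to its children so that the sinks are exactly the leaves $V(G)$.

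For the DAG compression, we take $D$ to be $S$ with edges oriented downward; it has $O(n)$ vertices and edges. For each biclique $(X,Y)$ with $X=[a,b]$ and $Y=[c,d]$, we decompose only $X$ canonically into nodes $x_1,\dots,x_k$ with $k=O(\log n)$. We then add one auxiliary vertex $w_{(X,Y)}$ to $D$, with arcs from $w$ to each canonical node of $Y$; this costs $O(\log n)$ arcs. Finally, we add the compressed edges $x_i w$ for all $i$. A pair $u,v$ is then joined via some $x_iw$ exactly when $u \in X$ and $v \in Y$, so the adjacencies of $G$ are exactly the pairs reachable through some compressed edge. The same holds with roles swapped, since compressed edges are unordered in the undirected setting; if orientation matters, we add the symmetric gadget as well. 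Each biclique contributes $O(\log n)$ to the size and takes $O(\log n)$ time, giving $O(p\log n)$ in total. The graph stays acyclic because the $w$ vertices are sources with respect to $S$, and their out-arcs point into $S$.

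For the positive tree model, we use the tree $S$ itself, which is full binary after contracting unary nodes. For each biclique, we decompose both $X$ and $Y$ canonically into $O(\log n)$ nodes each, and add every cross pair $x_iy_j$ as a transversal edge. This gives $O(\log^2 n)$ edges per biclique and $O(p\log^2 n)$ in total, all computable within that time. The pairs are transversal because $X$ and $Y$ are disjoint intervals, so no canonical node of $X$ is an ancestor or descendant of one of $Y$. For correctness, a positive tree model only needs $u,v$ adjacent iff some transversal edge has endpoints that are ancestors of $u$ and of $v$. Since the bicliques partition $E(G)$ and the canonical nodes partition each interval, every edge of $G$ is covered (in fact exactly once), and no non-edge is covered.

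The main subtlety is one point of the definition. If the paper's positive tree models require the non-crossing condition or uniqueness of the covering edge, we must verify it. Uniqueness follows from the partition property together with disjointness of canonical pieces. Crossings could in principle arise between edges from different bicliques: one of $x_i, y_j$ would be a strict ancestor of an endpoint of the other pair, and vice versa. That would force a leaf pair to be covered twice, contradicting the partition property. So the partition property of the interval biclique partition is the crux, and we would check this step carefully, following the paper's remark that crossings are harmless in positive tree models anyway.
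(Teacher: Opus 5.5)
Your proof is correct and follows the paper's route. Both build a balanced binary tree on the interval-biclique-partition order and split each interval canonically into at most $2\lceil \log n\rceil$ subtrees. For the positive tree model, both add all cross pairs as transversal edges, which cannot cross because the sub-bicliques $L(x_i)\times L(y_j)$ are pairwise disjoint. For the DAG compression, both use an $O(\log n)$-size gadget per biclique; your single auxiliary vertex $w$ with $|S_X|$ compressed edges is an inconsequential variant of the paper's two auxiliary vertices $v_I,v_J$ joined by one compressed edge.

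One small slip: discarding isolated vertices does not give $p \geqslant n/2$, since a single biclique can cover every vertex. Your alternative, absorbing the additive $O(n)$ cost of building the tree, is the right fix, and the paper's bound absorbs this term implicitly as well.
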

\begin{proof}
    Let~$<$ be the ordering of~$V(G)$ used by the interval biclique partition, and denote by $B_1,\dots,B_p$ the bicliques.
    Each $B_i$ is given as $B_i = I_i \times J_i$ for some intervals~$I_i,J_i$ of~$<$.

    As basis for both the DAG compression and the positive tree model, we use the balanced binary tree~$T$ of height $h = \ceil{\log n}$ whose leaves are~$V(G)$ ordered left-to-right by~$<$,
    constructed in linear time.
    For a~node $t \in T$, recall that $L(t) \subseteq V(G)$ is the subset of leaves that are descendants of~$t$.

    Any interval $I = [a,b]$ of~$<$ can be partitioned as $I = \biguplus_{t \in S_I} L(t)$ for some subset of nodes of~$T$ with size $|S_I| \le 2h$.
    Given~$a,b$ the set~$S_I$ is computed in time~$O(h)$, by following the paths from~$a,b$ to the root,
    and adding to~$S_I$ all nodes that branch out from these paths between~$a$ and~$b$.

    Now for each biclique $B = I \times J$, we compute the sets~$S_I$ and~$S_J$, and proceed as follows depending on the desired representation:
    \begin{itemize}
        \item To obtain a~positive tree model, we simply add each edge~$st$ with $s \in S_I$ and $t \in S_J$ as a~transversal edge.
        This constitutes at~most $4 \ceil{\log n}^2$ transversal edges for each biclique.
        Note that (1) the bicliques $B_1,\dots,B_p$ are assumed to be edge-disjoint, and (2) the interval~$I$ is partitioned into $\{L(t)\}_{t \in S_I}$ which are again all disjoint,
        which implies that the transversal edges created in this way again correspond to pairwise disjoint bicliques in the graph, and in particular cannot cross.
        Therefore, this is a~positive tree model of size $O(p \log^2 n)$.
        \item To obtain a~DAG compression, we instead create new nodes~$v_I$ and~$v_J$ in the DAG
        with children~$S_I$ and~$S_J$ respectively, and add the transversal edge~$v_Iv_J$.
        This only adds $2\ceil{\log n}+1$ edges to the DAG compression for each biclique, hence $O(p \log n)$ in total. \qedhere
    \end{itemize}
\end{proof}

\Cref{thm:main} follows directly from \cref{lem:tree-model-to-IBP,lem:IBP-to-positive-model}.

\section{Sd-degeneracy sequences in graphs of bounded twin-width}\label{sec:tww-sd-degen}

The next theorem and \cref{cor:sd-degen-to-apsp} imply~\cref{thm:apsp-tww}.

\begin{theorem}\label{thm:vers-tww-to-sd}
  Let $f(d) := f_{\ref{thm:tww-vers-tww}}(d) \geqslant 1$ and fix any constant $c > 0$.
  There is an algorithm that inputs an $n$-vertex $m$-edge graph $G$ of twin-width at~most~$d$, and outputs an~sd-degeneracy sequence of width $O(f(d) \log n)$, in time $O(f(d) (m+n) \log n)$ with probability at~least~$1-n^{-c}$.
\end{theorem}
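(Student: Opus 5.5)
We build the sd-degeneracy sequence greedily in rounds, removing a constant fraction of the remaining vertices per round, and find good pairs by sampling rather than exhaustive comparison. Let $k := f(d)$. By \cref{obs:tww-hered} and \cref{thm:tww-vers-tww}, every induced subgraph $H$ of $G$ has versatile twin-width at~most~$k$. Hence, by the observation at the end of the twin-width preliminaries, $H$ contains at least $\lfloor |V(H)|/k \rfloor$ disjoint pairs of symmetric difference at~most~$k$. Consequently, at least a $1/k$ fraction of the vertices of $H$ (roughly) have a \emph{partner} within symmetric difference~$k$.

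\textbf{Detecting good pairs by hashing.} We use random linear sketches of adjacency rows. Fix $s = \Theta(\log n)$. Each vertex $w$ receives independent random hashes. For a vertex $u$ we compute, for $j \in [s]$, a locality-sensitive sketch of $N_H(u)$. One option is min-hash of the neighborhood under a random permutation. A more robust option is to sample each $w$ with probability $1/(2k)$ and record the parity, or the set hash, of $N_H(u) \cap S_j$.

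If $\sd_H(u,v)\le k$, then with constant probability the two sampled sets agree on every coordinate where they differ, since $(1-\frac1{2k})^{k}\ge 1/2$. Thus $u$ and $v$ collide in a constant fraction of the $s$ sketches. Conversely, if $\sd_H(u,v) > Ck\log n$, then the probability that $u$ and $v$ collide in many coordinates is at most $n^{-c-3}$, by a Chernoff bound over the $s$ independent coordinates combined with $(1-\frac1{2k})^{Ck\log n}\le n^{-C/3}$.

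To find the pairs, we bucket vertices by a sketch coordinate: sort the hash values, or use a hash table. We then pair vertices that share buckets in at least, say, $s/4$ coordinates. To keep this near-linear, we group by concatenations of $O(1)$ coordinates, or simply by a single coordinate repeated $s$ times. Within each bucket, we match consecutive vertices.

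Computing one sketch coordinate for all vertices costs $O(m+n)$. One round therefore costs $O((m+n)\log n)$ plus the overhead $k$ from retrying until $\Omega(n/k)$ disjoint pairs are certified.

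\textbf{Rounds and accounting.} In each round we find a set of disjoint pairs $(u_i,v_i)$. The guarantee is $\sd_H(u_i,v_i)=O(k\log n)$ w.h.p. Moreover, by the existence result, $\Omega(|V(H)|/k)$ such pairs are found with constant probability, and then with high probability after $O(1)$ repetitions.

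We then remove all the $u_i$. Deleting other vertices can only decrease the symmetric difference of a pair, so we can list the pairs sequentially within the round. The removed vertices and their pairs are disjoint, and deleting $u_j$ does not affect the $v_i$'s existence. The width bound $O(k\log n)$ thus holds in $G - \{u_1,\dots,u_{i-1}\}$.

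The vertex count then shrinks by a factor $(1-\Omega(1/k))$. A naive bound therefore gives $O(k\log n)$ rounds, and this is the main obstacle: it would yield time $O(k(m+n)\log^2 n)$. To reach $O(k(m+n)\log n)$ we must charge round costs to the current graph size, $\sum_{\text{rounds}} (|E(H)|+|V(H)|)$.

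Vertex counts decay geometrically, giving $O(kn)$ in total. Edges, however, need not decay. I would fix this by preferentially removing, from each matched pair, the endpoint of larger degree. I would also charge each sketch computation only to edges incident to vertices that are still alive. Each edge is then paid for once per $\log n$ sketch coordinates in each round in which it survives, and I would try to argue that an edge survives an expected $O(k)$ rounds. This is the step I am least sure of. If it fails, I would accept one more log factor, or amortize by recomputing sketches incrementally: sketches are linear in the adjacency row, so deleting $u$ updates its neighbors' sketches in $O(\deg(u)\cdot s)$ total. With incremental sketches, the total cost becomes $O(s(m+n)) + O(k n s)$ over all rounds, which is $O(k(m+n)\log n)$. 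A final union bound over all $O(n)$ pairs gives success probability $1-n^{-c}$.
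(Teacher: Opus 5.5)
\textbf{There is a genuine gap: your argument does not reach the claimed $O(f(d)(m+n)\log n)$ bound.} By your own accounting, each round computes $s=\Theta(\log n)$ sketch coordinates, so it costs $O((m+n)\log n)$. With $\Theta(k\log n)$ rounds this gives $O(k(m+n)\log^2 n)$, and neither proposed repair closes the gap.
\begin{enumerate}
\item The claim that an edge survives only $O(k)$ rounds in expectation is unsupported. You note yourself that edges need not decay, and removing the higher-degree endpoint of each pair gives no such guarantee.
\item The incremental-sketch fallback fixes the samples $S_j$ once and for all. This destroys the fresh randomness your per-round success argument uses. A good pair separated in coordinate $j$, because a vertex of $N(u)\triangle N(v)$ lies in $S_j$, stays separated in every later round as long as that vertex survives. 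Also, the subgraph $H$ in later rounds depends on the $S_j$, so neither the collision bound for good pairs nor the false-positive bound can be applied to it directly. A rigorous version would need, for instance, a union bound over all sets of at most $k$ vertices with $s=\Theta(k\log n)$, plus a separate treatment of false positives in adaptively chosen subgraphs. You do neither.
\item ``Pair vertices that share buckets in at least $s/4$ coordinates'' is an approximate near-neighbor problem, and you give no near-linear implementation of it.
\item ``Constant probability, then with high probability after $O(1)$ repetitions'' is false. $O(1)$ repetitions only give constant probability, and boosting every round to $1-n^{-c}$ would cost another $\Theta(\log n)$ factor.
\end{enumerate}

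\textbf{The missing observation is that no per-round amplification is needed.} You already computed $(1-\frac{1}{2k})^{Ck\log n}\le n^{-C/3}$: the threshold $\gamma:=2k(c+3)\ln n$ already carries the $\log n$. So use a single sample $X$ per round, with rate $1/(2k)$, and group vertices by the exact trace $N(v)\cap X$.
\begin{itemize}
\item A pair with $\sd>\gamma$ collides with probability at most $n^{-c-3}$. Take the union bound over all $\binom n2$ pairs in each round, not just the $O(n)$ formed ones, since which pairs get formed is random.
\item A good pair collides with probability at least $1/2$. Fix $\lfloor n'/k\rfloor$ disjoint good pairs. A reverse-Markov argument shows that, whatever the history, a round deletes at least $n'/(8k)$ vertices with probability at least $1/4$.
\item At most $8k\ln n$ such successes can occur. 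A Chernoff bound on the \emph{total} number of rounds then gives $O(k\log n)$ rounds with high probability.
\item Each round costs $O(m+n)$: hashing $N(v)\cap X$ takes $O(\deg v)$, and verifying the disjoint formed pairs takes $O(\deg u+\deg v)$ each, which also makes the algorithm Las Vegas. This yields $O(k(m+n)\log n)$ with no edge-decay amortization.
\end{itemize}
You also need to stop once at most $2k$ vertices remain, so that $\lfloor n'/k\rfloor\ge n'/(2k)$ during the loop, and then finish the sequence arbitrarily.

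Your overall scheme (sample, bucket, pair within buckets, delete one endpoint per pair, repeat) is the paper's. The $\Theta(\log n)$ sketches per round are exactly the superfluous factor.
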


\begin{proof}
  In order to factorize this proof and that of~\cref{thm:sym-diff-sd-degen}, we define $g := f(d)$ (the \emph{good} threshold), $\gamma := 2f(d)(c+3)\ln n$ (the \emph{good-enough} threshold), and $p := \frac{1}{2f(d)}$ (the sampling probability).
  For every induced subgraph $H$ of~$G$, a~pair $u \neq v \in V(H)$ is \emph{good} (in $H$) if $\sd_H(u,v) \leqslant g$, and it is \emph{good-enough} if $\sd_H(u,v) \leqslant \gamma$.
  To ease some computations, we assume that $n$ is large enough compared to~$c$ and $d$.
  
  \subparagraph*{Algorithm.}
  We initialize the sd-degeneracy sequence with the empty list $\mathcal L$.

  While $|V(G)| > 2g$, we proceed as follows.
  We first build a~set~$X \subseteq V(G)$ by adding to it each $v \in V(G)$ independently with probability $p$.
  We then construct the partition $Q$ of $V(G)$ into the neighborhood classes toward~$X$, i.e., two vertices $u$ and $v$ are in the same part of~$Q$ if and only if $N_G(u) \cap X = N_G(v) \cap X$.

  For every non-singleton part $P$ of $Q$, we arbitrarily form $\lfloor \frac{|P|}{2} \rfloor$ disjoint pairs of vertices of~$P$.
  For every formed pair $(u,v)$, we check if $\sd_G(u,v) \leqslant \gamma$ holds and, if so, append $(u,v)$ to $\mathcal L$.
  Let $(u_1,v_1), \ldots, (u_s,v_s)$ be the appended pairs at this iteration of the while loop.
  We set $G := G-\{u_1, \ldots, u_s\}$; this finishes the body of the while loop.

  When we exit the while loop, the remaining graph has at~most~$2g < \gamma$ vertices, and we finish the sd-degeneracy sequence arbitrarily. 
  
\begin{algorithm}[t]
\caption{Las Vegas algorithm in time $O(f(d)(m+n)\log n)$ with probability $\geqslant 1-n^{-c}$.}
\label{alg:vers-tww-to-sd}
\DontPrintSemicolon
\KwIn{An $n$-vertex $m$-edge graph $G$ of twin-width at most $d$, a~constant $c>0$.}
\KwOut{An sd-degeneracy sequence of width $O(f(d)\log n)$, with $f(d) := f_{\ref{thm:tww-vers-tww}}(d)$.}

\medskip

$g \gets f(d)$                     \tcp*[r]{sd threshold of good pairs}
$\gamma \gets 2f(d)(c+3)\ln n$     \tcp*[r]{sd threshold of good-enough pairs}
$p \gets \frac{1}{2f(d)}$ \tcp*[r]{sampling probability}
$\mathcal L \gets [\,]$ \tcp*[r]{the sd-degeneracy sequence (list of pairs)}

\While{$|V(G)| > 2g$}{
  $X \gets \emptyset$ \tcp*[r]{initialize sampled set $X$}
  \ForEach{$v \in V(G)$}{
    add $v$ to $X$ independently with probability $p$\;
  }

  $Q \gets \{\}$ \tcp*[r]{partition by neighborhoods toward $X$ (dictionary)}
  \ForEach{$v \in V(G)$}{
    compute $\vec v[X] \in \{0,1\}^{X}$ where $\vec v[X](x)=\mathbf{1}[vx\in E(G)]$\;
    $Q[\vec v[X]] \gets Q[\vec v[X]] \cup \{v\}$\;
  }

  $U \gets \emptyset$ \tcp*[r]{vertices removed at the end of iteration}
  \ForEach{part $P$ of~$Q$}{
    \If{$|P|\ge 2$}{
      form arbitrarily $\left\lfloor |P|/2\right\rfloor$ disjoint pairs $(u,v)$ in $P$\;
      \ForEach{formed pair $(u,v)$}{
        \If{$\sd_G(u,v) \le \gamma$}{
          append $(u,v)$ to $\mathcal L$\;
          $U \gets U \cup \{u\}$\;
        }
      }
    }
  }
  $G \gets G - U$\;
}

\While{$|V(G)| \ge 2$}{
  pick any distinct $u,v\in V(G)$\;
  append $(u,v)$ to $\mathcal L$\;
  $G \gets G-\{u\}$\;
}

\Return{$\mathcal L$}\;
\end{algorithm}

This finishes the description of the algorithm; see~\cref{alg:vers-tww-to-sd}.
Note that while $G$ shrinks, $n$ still denotes the initial number of vertices.
In particular, $\gamma$ does not change.
In contrast, $N_G$ and $\sd_G$ are meant in the current (induced sub)graph $G$.

  \subparagraph*{Correctness and running time.}
  Every pair added to the sd-degeneracy sequence has symmetric difference at~most~$\gamma$ in the current induced subgraph.
  Thus if the algorithm terminates, it does indeed output an sd-degeneracy sequence of the claimed width.
  We also observe that each iteration is executed on an induced subgraph of the initial graph $G$, thus a~graph of twin-width at~most~$d$, and versatile twin-width at~most~$f(d)$.

  We encode $G$ as an \emph{adjacency map}, which supports edge queries in expected constant time, neighborhood traversal of~$v$ in $O(\deg(v))$ time, and deletion of~$v$ in expected $O(\deg(v))$ time.

  It takes $O(n)$ time to build~$X$.
  It takes $O(\sum_{v \in V(G)} \deg(v))=O(m)$ time to compute the partition~$Q$:
  For each $v \in V(G)$, list the neighbors of $v$ to build the vector $\vec{v}[X] \in \{0,1\}^X$ of adjacencies w.r.t.~$X$, in $O(\deg(v))$ time.
  If this vector is measured for the first time (in the current iteration of the while loop), initialize a~new part of~$Q$ as $\{v\}$ at the address $\vec{v}[X]$, and add $\vec{v}[X]$ to the dictionary of existing vectors.
  Otherwise add $v$ to the preexisting part at address $\vec{v}[X]$.
  (Note that we do not range over $X$ to implement line 11 of~\cref{alg:vers-tww-to-sd}.)

  From the partition representation, forming the pairs can be done in $O(n)$ time.
  For every formed pair, checking whether it is good-enough can be done in $O(m+n)$ time.
  Indeed, as the pairs are disjoint, we range over at~most~$2m$ edges (every edge is scanned at most twice).
  The vertex deletion $G-U$ at the end of the while loop can be done in $O(\sum_{u \in U} \deg(u))$.
  Overall, the deletions take $O(m)$ time.
  
  We are left with proving that the number of iterations of the while loop is $O(f(d) \log n)$ with probability at~least~$1-n^{-c}$.   
  We start with two important observations, focusing on a~single iteration.
  \begin{enumerate}
  \item\label{it:tww-1} For every good pair $(u, v)$, we have \[\mathbb{P}_X[u\text{~and~}v\text{~are in the same part of~}Q] \geqslant (1-p)^g = \left(1-\frac{1}{2f(d)}\right)^{f(d)} \geqslant \frac{1}{2}.\]
  \item\label{it:tww-2} For every $u \neq v$ that does \emph{not} form a~good-enough pair, it holds that \[\mathbb{P}_X[u\text{~and~}v\text{~are in the same part of $Q$}] \leqslant (1-p)^\gamma = \left(1-\frac{1}{2 f(d)}\right)^{2f(d) (c+3) \ln n}\]
    \[\leqslant \exp\left(-\frac{2f(d)(c+3)\ln n}{2f(d)}\right) = n^{-c-3}.\]
  \end{enumerate}

  Let $n'$ denote the number of vertices of the current induced subgraph handled at this iteration; so $n'>2f(d)$.
  By~\cref{thm:tww-vers-tww}, $G$ admits at least $\lfloor n'/f(d) \rfloor$ pairwise disjoint good pairs.
  Fix a~set $\mathcal P$ of $q \geqslant \lfloor n'/f(d) \rfloor$ pairwise disjoint good pairs.
  For every pair $(u,v) \in \mathcal P$, let $Z_{u,v}$ be the indicator: $Z_{u,v}=1$ if $u$ and $v$ are in the same part of~$Q$, and 0 otherwise.
  Let $Z := \sum\limits_{(u,v) \in \mathcal P} Z_{u,v}$.
  We have
  \[\mathbb E[Z] \;=\; \sum_{(u,v)\in \mathcal P}\mathbb E[Z_{u,v}] \;\geqslant\; \sum_{(u,v)\in \mathcal P} \frac{1}{2} \;=\; \frac{q}{2},\]
  where the inequality comes from~\cref{it:tww-1}. 

  \begin{claim}\label{clm:Z-tww}
    $\mathbb P[Z \geqslant \frac{q}{4}] \geqslant \frac{1}{3}$.
  \end{claim}
  \begin{proof}
  Indeed, by setting $p' := \mathbb P[Z \geqslant \frac{q}{4}]$, since $0 \leqslant Z \leqslant q$, we have $\frac{q}{2} \leqslant \mathbb E[Z] \leqslant (1-p')\frac{q}{4} + p' q$.
  Therefore $2 \leqslant (1-p') + 4p'$, hence $p' \geqslant \frac{1}{3}$.
  \end{proof}

  We say that an iteration of the while loop is a~\emph{success} if $Z \geqslant \frac{q}{4}$ holds and all the formed pairs (during this iteration) are good-enough.
  In case of a~success, at~least~$\frac{q}{4}$ good pairs of~$\mathcal P$ have both vertices in the same part of~$Q$.
  In particular, $\sum_{P \in Q} \lfloor |P|/2 \rfloor \geqslant \frac{q}{4}$, hence during such an iteration we form at~least~$\frac{q}{4}$ disjoint pairs.
  Therefore, the algorithm appends at~least~$\frac{q}{4}$ pairs to the sd-degeneracy sequence and deletes at~least~$\frac{q}{4}$ vertices at the end of the iteration.
  As $n'>2f(d)$, we have
  \[q \geqslant \left\lfloor \frac{n'}{f(d)} \right\rfloor \geqslant \frac{n'}{2f(d)}.\]
  Thus every success deletes at least $\frac{q}{4} \geqslant \frac{n'}{8f(d)}$ vertices, hence multiplies the current number of vertices by at most $1-1/(8f(d))$.

  \begin{claim}\label{clm:nb-successes-tww}
    The total number of successes is at most $8 f(d)\ln n$.
  \end{claim}
  \begin{proof}
    Let $n_s$ be the number of vertices right before the $(s+1)$st success.
    By the previous paragraph, for every $s\geqslant 0$ we have
    \[
      n_s \;\leqslant\; n_{s-1} \left(1-\frac{1}{8f(d)}\right)
      \;\leqslant\; n\left(1-\frac{1}{8f(d)}\right)^s
      \;\leqslant\; n\exp\!\left(-\frac{s}{8f(d)}\right).
    \]
    For $s \geqslant 8 f(d)\ln n$, the right-hand side is at most $n\exp(-\ln n)=1$, hence below $2f(d)$.
    Therefore, after $8 f(d)\ln n$ successes, the while loop condition $|V(G)|>2f(d)$ cannot hold any more.
    Thus there cannot be more successes.
  \end{proof}

  Let $t$ be the number of iterations of the while loop.

  \begin{claim}\label{clm:nb-iters-tww}
    $\mathbb P[t \geqslant 32(c+3)f(d)\ln n] \leqslant n^{-c-3}$.
  \end{claim}
  \begin{proof}
    Let $I_1,\ldots,I_t$ be the indicators of success: for every $i\in[t]$, $I_i=1$ if the $i$th iteration is a~success, and $I_i=0$ otherwise.
    By~\cref{clm:nb-successes-tww}, it holds that $\sum_{i\in[t]} I_i \leqslant 8 f(d)\ln n$.

    During one iteration, by~\cref{it:tww-2} and union bound over every unordered pair that is not good-enough, the probability that at~least one such pair lands in the same part of the partition $Q$ (for this iteration) is at~most
    \[
      \binom{n}{2}\cdot n^{-c-3} \;\leqslant\; \frac{n^2}{2}\cdot n^{-c-3} \;=\; \frac{1}{2}\,n^{-c-1}.
    \]
    Thus by~\cref{clm:Z-tww}, for every $i\in[t]$ and every history description $\mathcal H$ of the previous iterations, we have
    \[
      \mathbb P[I_i \mid \mathcal H]
      \;\geqslant\; \frac{1}{3} - \frac{1}{2}\,n^{-c-1}
      \;\geqslant\; \frac{1}{4},
    \]
    for large enough $n$.

    Let $t' := 32(c+3)f(d)\ln n$ and $\tau := \lfloor 8 f(d)\ln n\rfloor$.
    As $\sum_{i \in [t]} I_i \leqslant \tau$, the probability that $t \geqslant t'$ is at most the probability that a~binomial random variable $Y \sim B(t',\frac{1}{4})$ takes value at most~$\tau$.
    Since \[\mathbb E[Y] = \frac{t'}{4} = 8(c+3)f(d)\ln n \geqslant (c+3)\tau,\] we have
    $\tau \leqslant \frac{\mathbb E[Y]}{c+3} \leqslant \frac{\mathbb E[Y]}{2}$.
    By Chernoff's bound,
    \[
      \mathbb P\!\left[Y \leqslant \frac{\mathbb E[Y]}{2}\right]
      \;\leqslant\; \exp\!\left(-\frac{\mathbb E[Y]}{8}\right)
      \;\leqslant\; \exp\!\left(-(c+3)f(d)\ln n\right)
      \;=\; n^{-(c+3)f(d)}
      \;\leqslant\; n^{-c-3},
    \]
    using $f(d)\geqslant 1$.
  \end{proof}

  By~\cref{clm:nb-iters-tww}, the while loop has fewer than $32(c+3)f(d)\ln n$ iterations with probability at~least $1-n^{-c-3}$.
  Conditional on that, by the union bound and the estimate $\frac{1}{2}n^{-c-1}$ above, the probability that during at~least one iteration there exists a~pair that is not good-enough but lands in the same part of~$Q$ is at~most
  \[
    32(c+3)f(d)\ln n \cdot \frac{1}{2}\,n^{-c-1}
    \;\leqslant\; \frac{1}{2}\,n^{-c}
  \]
  for large enough $n$.
  Therefore, with probability at~least $1-n^{-c}$, the number of iterations is $O(f(d)\log n)$ (in particular, the algorithm terminates) and all formed pairs are good-enough.

  Finally, since each iteration runs in time $O(m+n)$ on an induced subgraph of the input and there are $O(f(d)\log n)$ iterations with probability at least $1-n^{-c}$, the running time is $O(f(d)(m+n)\log n)$ with this probability.
\end{proof}

\section{Sd-degeneracy sequences in graphs of low symmetric difference}\label{sec:sd-sd-degen}

We show how to efficiently find, for graphs of moderate symmetric difference, say, $O(n^\gamma)$, an sd-degeneracy sequence of width $O(n^\gamma \log n)$.
The algorithm runs in time $O(n^{3-2\gamma} \log n)$ with high probability.
To balance this running time with the subsequent $O(n^{2+\gamma} \log^2 n)$-time \textsc{All-Pairs Shortest Path} algorithm, we only present it for $\gamma = \frac{1}{3}$.
Note that the next theorem and \cref{cor:sd-degen-to-apsp} imply~\cref{thm:sym-diff}, since $O(n^{\frac{1}{3}} \log n \cdot n^2 \log n)=O(n^{\frac{7}{3}} \log^2 n)$.

\begin{theorem}\label{thm:sym-diff-sd-degen}
  For every constant $c > 0$, there is an algorithm that inputs an $n$-vertex graph $G$ of symmetric difference~$O(n^{\frac{1}{3}})$, and outputs an sd-degeneracy sequence of width $O(n^{\frac{1}{3}} \log n)$, in time $O(n^{\frac{7}{3}} \log n)$ with probability at~least~$1-n^{-c}$.
\end{theorem}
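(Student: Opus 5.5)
We reuse the algorithm from the proof of \cref{thm:vers-tww-to-sd} (\cref{alg:vers-tww-to-sd}) with new parameters. Let $D = O(n^{1/3})$ bound the symmetric difference of $G$. Set $g := D$ as the good threshold, $\gamma := 2D(c+3)\ln n$ as the good-enough threshold, and $p := \frac{1}{2D}$ as the sampling probability. The while loop now runs while $|V(G)| > n^{2/3}$ rather than $|V(G)|>2g$. When the loop exits, we finish by brute force.

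\textbf{Brute-force phase.} Once at most $n^{2/3}$ vertices remain, we greedily pick a pair of minimum symmetric difference in the current graph. Symmetric difference is hereditary, so every induced subgraph still has a pair of symmetric difference at most $D$. This gives width at most $D$ for the tail of the sequence.

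To implement this, compute all pairwise symmetric differences of the remaining $k \le n^{2/3}$ vertices once, in $O(k^3)=O(n^2)$ time. After deleting a vertex $u$, update each $\sd(x,y)$ in $O(1)$ by checking whether $u$ lies in exactly one of $N(x),N(y)$. That costs $O(k^2)$ per deletion, for $O(k^3)=O(n^2)$ total. Finding a minimum pair each time costs $O(k^2)$, again $O(n^2)$ total.

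\textbf{Main loop: correctness.} The probabilistic estimates carry over verbatim with $f(d)$ replaced by $D$.
\begin{itemize}
\item A good pair lands in the same part of $Q$ with probability at least $(1-p)^D \ge 1/2$.
\item A pair that is not good-enough does so with probability at most $n^{-c-3}$.
\end{itemize}
For the supply of good pairs: in a current graph on $n'$ vertices, repeatedly extracting a pair of symmetric difference at most $D$ and deleting one vertex does \emph{not} directly give disjoint pairs. Instead, I would argue as follows. Repeatedly find a good pair $(u,v)$ in the current induced subgraph $H$ and delete \emph{both} $u$ and $v$. This yields $\lfloor n'/2\rfloor$ disjoint pairs, each good in some induced subgraph $H$ of the current graph.

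These pairs are only good relative to $H$, not the current graph. The key point is that $\sd$ in the current graph can exceed $\sd_H$ by the number of earlier-deleted vertices distinguishing $u$ and $v$. This is the step I expect to be the main obstacle.

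The plan to resolve it is to switch to a fixed order. Consider the sequence of pairs above, and note that the claim only needs pairs landing together in $Q$ \emph{and} being good-enough in the current graph. Good-enough is checked explicitly by the algorithm, so pairs that are bad in the current graph are simply not appended; they are harmless for correctness. For progress, I would argue directly via the lower bound: since $\sd(G)\le D$, a greedy removal of a vertex from a pair with $\sd \le D$ is always possible. The expected number of vertices $w$ such that $w$ and some partner in the current graph are good and collide in $Q$ is at least a constant fraction $\Omega(n'/D)$ of $n'$. I would try to get this by taking a maximal matching in the ``good'' graph of the current graph, which by heredity of $\sd$ has size at least $n'/(2D+2)$: the unmatched vertices induce a subgraph whose good pairs must involve matched vertices, and a degree count bounds them.

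\textbf{Main loop: iteration count and running time.} Given $\Omega(n'/D)$ disjoint good pairs, the success analysis of \cref{clm:Z-tww,clm:nb-successes-tww,clm:nb-iters-tww} yields $O(D\log n)$ iterations with probability at least $1-n^{-c}$.

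Each iteration costs $O(m+n)$, but checking a formed pair's symmetric difference is the only part that needs care. Since $m=O(n^2)$, the loop costs $O(n^2\cdot D\log n)=O(n^{7/3}\log n)$. With the $O(n^2)$ final phase, the total is $O(n^{7/3}\log n)$ and the width is $O(D\log n)=O(n^{1/3}\log n)$.
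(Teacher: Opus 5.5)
\textbf{The gap is the supply of good pairs.} Your iteration bound needs $\Omega(n'/D)$ pairwise disjoint pairs whose symmetric difference is at most $D$ (or $O(D)$) \emph{in the current graph}. This is an analogue, for symmetric difference, of the versatility used in the twin-width case, and the maximal-matching sketch does not establish it. Let $M$ be a maximal matching of the good graph and $U$ the unmatched vertices. Heredity only gives pairs $u,v\in U$ with small $\sd_{G[U]}(u,v)$. The excess $\sd_G(u,v)-\sd_{G[U]}(u,v)$ comes from vertices of $V(M)$, but a single such vertex can distinguish arbitrarily many pairs of $U$, so no degree count bounds $|U|$ in terms of $|M|$.

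In fact the bound $n'/(2D+2)$ is false as stated. Take $D=0$ and the threshold graph obtained by alternately adding an isolated and a dominating vertex. It is a cograph, hence has symmetric difference $0$. Yet its only pair of symmetric difference $0$ is its first two vertices, so the good graph has a matching of size $1$, not $n'/2$. Even with a relaxed threshold, this statement is not something the paper proves. Had it been available, the same analysis would give $O(n^2\log n)$ time for \emph{bounded} symmetric difference, whereas the paper only obtains $n^{7/3}$-type bounds even there.

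\textbf{What does hold, and what it costs.} The truncated version of your own greedy works. In any induced subgraph $H$ with at least $2\lfloor n^{1/3}\rfloor$ vertices, repeat $\lfloor n^{1/3}\rfloor$ times: extract a pair of symmetric difference at most $s$ and delete both of its vertices. This yields disjoint pairs with symmetric difference at most $s+2\lfloor n^{1/3}\rfloor$ in $H$ itself, since each earlier deletion of two vertices raises it by at most $2$. This forces $g:=s+2n^{1/3}$, $p:=1/(2g)$ and $\gamma:=2(c+3)g\ln n$.

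With this supply, a success only deletes $\Omega(n^{1/3})$ vertices, so there are $\Theta(n^{2/3})$ iterations. Your $O(m+n)$ cost per iteration then only gives $O(n^{8/3})$. The second missing ingredient is a cheaper iteration:
\begin{itemize}
\item With high probability $|X|\le pn\log n=O(n^{2/3}\log n)$, so $Q$ can be built with $O(n|X|)=O(n^{5/3}\log n)$ adjacency-matrix lookups.
\item With high probability every formed pair is good-enough, so each $O(n)$ good-enough test can be charged to the deletion it causes, for $O(n^2)$ in total.
\end{itemize}
This yields $O(n^{2/3})\cdot O(n^{5/3}\log n)+O(n^2)=O(n^{7/3}\log n)$, and your brute-force tail phase becomes unnecessary.
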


\begin{proof}
  We fix a~positive constant $c > 0$.
  Let $\beta \geqslant 1$ be a~constant such that $s := \beta n^{\frac{1}{3}}$ upper-bounds the symmetric difference of the handled $n$-vertex graphs, and let $G$ be such a~graph.
  To ease some computations, we assume that $n$ is large enough compared to~$c$ and $\beta$.

  The following fact will replace twin-width versatility used in~\cref{thm:vers-tww-to-sd}.
  \begin{claim}\label{clm:many-good-pairs}
    For every induced subgraph~$H$ of~$G$ with at~least $2\lfloor n^{\frac{1}{3}} \rfloor$ vertices, $H$ admits $\lfloor n^{\frac{1}{3}} \rfloor$ pairwise disjoint good pairs.
  \end{claim}
  \begin{claimproof}
  As $H$ has symmetric difference at~most~$s$, it has a~pair $u_1 \neq v_1 \in V(H)$ such that $\sd_H(u_1,v_1) \leqslant s$.
  In turn, the graph $H_2 := H-\{u_1,v_1\}$ admits a~pair $u_2 \neq v_2 \in V(H_2)$ such that $\sd_{H_2}(u_2,v_2) \leqslant s$.
  Thus $\sd_H(u_2,v_2) \leqslant s+2$.
  Then, there is a~pair $u_3 \neq v_3 \in V(H) \setminus \{u_1,v_1,u_2,v_2\}$ such that $\sd_H(u_3,v_3) \leqslant s+4$, and so on.
  This gives $\lfloor n^{\frac{1}{3}} \rfloor$ pairwise disjoint pairs whose symmetric difference is at most $s+2\lfloor n^{\frac{1}{3}} \rfloor$, hence good.
  \end{claimproof}

  \subparagraph*{Algorithm.}
  The algorithm is exactly as~\cref{alg:vers-tww-to-sd} with $g := (\beta+2)n^{\frac{1}{3}} = s+2n^{\frac{1}{3}}$, $\gamma := 2(c+3)(s+2n^{\frac{1}{3}}) \ln n$, and $p := \frac{1}{2}(s+2n^{\frac{1}{3}})^{-1}=O(n^{-\frac{1}{3}})$.
  Again, for every induced subgraph $H$ of~$G$, a~pair $u \neq v \in V(H)$ is called \emph{good} (in $H$) if $\sd_H(u,v) \leqslant g$, and \emph{good-enough} if $\sd_H(u,v) \leqslant \gamma$ (for these new values of $g$ and $\gamma$).
  When $G$ shrinks, $n$ still denotes the initial number of vertices.
  In particular, $s$, $g$, $\gamma$, and $p$ do not change.
  
  \subparagraph*{Correctness and running time.}
  Every pair added to the sd-degeneracy sequence is good-enough in its current induced subgraph of~$G$.
  Thus if the algorithm terminates, it does indeed output an sd-degeneracy sequence of width~$O(n^{\frac{1}{3}} \log n)$.
  We also observe that each iteration is executed on an induced subgraph of $G$, hence a~graph of symmetric difference at~most~$s$.
  With high probability, at every iteration (we will eventually bound the number of iterations by $O(n^{\frac{2}{3}})$), the set $X$ has size at~most~$pn \log n$.
  For later use, we denote by $\mathcal F$ the (unlikely) failure event that at~least~one of the first $64 n^{\frac{2}{3}}$ iterations has a~set $X$ larger than $pn \log n$.
  By Chernoff's bound and union bound, for large enough $n$, we have $\mathbb P(\mathcal F) \leqslant e^{-n^{2/3}}$.

  Conditioned on $\mathcal F$ not holding, we will now show that, excluding the good-enough tests (which cost $O(n^2)$ in total), each iteration takes $O(n^{\frac{5}{3}} \log n)$ time.
  We assume that we have access to the adjacency matrix $M$ of the initial graph $G$.
  As the overall running time is $\Omega(n^2)$, we can do so without loss of generality.
  We represent the induced subgraphs of~$G$ lazily by maintaining a~set of the remaining vertices (and keeping $M$ as is).
  
  The set $X$ takes $O(n)$ time to build.
  The partition $Q$ can be built in time $O(n|X|)=O(n^{\frac{5}{3}} \log n)$: for each remaining vertex $v$, iterate through the vertices of $X$ to build the vector $\vec{v}[X] \in \{0,1\}^X$ of adjacencies w.r.t.~$X$.
  If this vector is measured for the first time (in the current iteration of the while loop), initialize a~new part of~$Q$ as $\{v\}$ at the address $\vec{v}[X]$, and add $\vec{v}[X]$ to the dictionary of existing vectors.
  Otherwise add $v$ to the preexisting part at address $\vec{v}[X]$.

  From the partition representation, forming the pairs can be done in $O(n)$ time.
  Checking that a~pair is good-enough also takes $O(n)$ time.
  We will see that, with high probability, we check $n-O(n^{\frac{1}{3}})$ pairs throughout the entire algorithm.
  Therefore these tests take overall time $O(n^2)$.
  
  We are left with proving that the number of iterations of the while loop is $O(n^{\frac{2}{3}})$ with probability at~least~$1-n^{-c}$.  
  Again, the following holds.
  \begin{enumerate}
  \item\label{it:sd-1} For every good pair $(u, v)$, we have \[\mathbb{P}_X[u\text{~and~}v\text{~are in the same part of~}Q] \geqslant \left(1-p\right)^g = \left(1-\frac{1}{2(s+2n^{\frac{1}{3}})}\right)^{s+2n^{\frac{1}{3}}} \geqslant \frac{1}{2}.\]
  \item\label{it:sd-2} For every $u \neq v$ that does \emph{not} form a~good-enough pair, it holds that \[\mathbb{P}_X[u\text{~and~}v\text{~are in the same part of $Q$}] \leqslant \left(1-p\right)^\gamma \leqslant \exp\left(-\frac{2(c+3)(s+2n^{\frac{1}{3}})\ln n}{2(s+2n^{\frac{1}{3}})}\right)\] \[=e^{-(c+3)\ln n} = n^{-c-3}.\]
  \end{enumerate}

At the beginning of an iteration, the current graph (to ease the notation, we keep denoting it~$G$) has at least $2g > 2 \lfloor n^{\frac{1}{3}} \rfloor$ vertices.
Thus, by~\cref{clm:many-good-pairs}, it admits $h := \lfloor n^{\frac{1}{3}} \rfloor$ disjoint good pairs: $(u_1,v_1), \ldots, (u_h,v_h)$.
Let $Z_i$ be the indicator such that $Z_i=1$ if $u_i$ and $v_i$ are in the same part of~$Q$, and $Z_i=0$ otherwise.
We set $Z := \sum_{i \in [h]} Z_i$.
Thus $\mathbb E[Z] = \sum_{i \in [h]} \mathbb E[Z_i] \geqslant \sum_{i \in [h]} \frac{1}{2} = \frac{h}{2}$, where the inequality follows from~\cref{it:sd-1}.

  We say that an iteration of the while loop is a~\emph{success} if $Z \geqslant \frac{h}{4}$ holds and all the formed pairs (during this iteration) are good-enough.
  In case of a~success, at~least~$\frac{h}{4}$ good pairs have both vertices in the same part of~$Q$.
  In particular, $\sum_{P \in Q} \lfloor \frac{|P|}{2} \rfloor \geqslant \frac{h}{4}$.
  So during such an iteration we form at~least~$\frac{h}{4}$ disjoint pairs.
  This implies that at~least $\frac{h}{4} = \Omega(n^{\frac{1}{3}})$ vertices are deleted at the end of a~success.
  Therefore, there are at~most~$8n^{\frac{2}{3}}$ successes.

  \begin{claim}
    $\mathbb P[\text{the number of iterations~}t\text{~of the while loop is at least~}64 n^{\frac{2}{3}}] \leqslant e^{-2n^{\frac{2}{3}}}$.
  \end{claim}
  \begin{claimproof}
  Let $I_1, \ldots, I_t$ be the indicators of success: for every $i \in [t]$, $I_i=1$ if the $i$th iteration of the while loop is a~success, and $I_i=0$ otherwise.
  By the previous paragraph, it holds that \[\sum_{i \in [t]} I_i \leqslant 8n^{\frac{2}{3}}.\]
  By~\cref{it:sd-2} and union bound over every unordered pair that is not good-enough, the probability that at~least one such pair lands in the same part of~$Q$ is at~most~$\frac{n^2}{2} \cdot n^{-c-3} = \frac{1}{2} n^{-c-1}$.
  Thus by~\cref{clm:Z-tww}, we have $\mathbb P[I_i~|~\mathcal H] \geqslant \frac{1}{3} - \frac{1}{2} n^{-c-1}$ for every $i \in [t]$, where $\mathcal H$ is a~description of what happened in the previous runs.
  Indeed, the lower bound of~\cref{clm:Z-tww} and upper bound of $\frac{1}{2} n^{-c-1}$ are history-independent.
  For large enough $n$, it holds that $\mathbb P[I_i~|~\mathcal H] \geqslant \frac{1}{4}$ for every $i \in [t]$.

  The probability that $t \geqslant 8 \cdot 8n^{\frac{2}{3}}$ is at~most the probability that a~binomial variable $Y \sim B(8t',\frac{1}{4})$ takes value at~most~$t' := \lfloor 8n^{\frac{2}{3}} \rfloor$.
  By Chernoff's bound, the latter is upper bounded by
  \[\mathbb P\left[Y \leqslant \frac{\mathbb E[Y]}{2}\right] \leqslant \exp\left(-\frac{\mathbb E[Y]}{8}\right) \leqslant \exp\left(-2n^{\frac{2}{3}}\right). \claimqedhere \]
  \end{claimproof}
  
  By union bound, the probability that $\mathcal F$ holds or that there are at least $64 n^{\frac{2}{3}}$ iterations of the while loop is at most $\varepsilon := e^{-n^{\frac{2}{3}}} + e^{-2n^{\frac{2}{3}}}$.
  Conditional on that not occurring, the probability that during at~least one iteration, at~least one pair that is not good-enough lands in the same part of the partition $Q$ of that iteration is at~most $64 n^{\frac{2}{3}} \cdot \frac{1}{2} n^{-c-1}$.
  For large enough~$n$, the probability that either of these three bad events holds is at~most \[\varepsilon + (1-\varepsilon) \cdot 64 n^{\frac{2}{3}} \cdot \frac{1}{2} n^{-c-1} \leqslant \frac{1}{2} n^{-c} + \frac{1}{2} n^{-c} = n^{-c}.\]
  Thus, with probability at~least $1-n^{-c}$, the algorithm terminates after fewer than $64 n^{\frac{2}{3}}$ iterations and every formed pair happens to be good-enough.
  (We double-check these pairs for the algorithm to be Las Vegas instead of mere Monte Carlo.)
  In particular, we form and test $n-O(n^{\frac{1}{3}})$ pairs throughout the entire algorithm.
  Furthermore, as $\neg \mathcal F$ holds, each iteration---deprived of the test of pairs which globally costs $O(n^2)$ time---takes $O(n^{\frac{5}{3}} \log n)$ time, hence the overall $64 n^{\frac{2}{3}} \cdot O(n^{\frac{5}{3}} \log n) + O(n^2)= O(n^{\frac{7}{3}} \log n)$ running time.
\end{proof}
  
\section{Matrix multiplication via interval biclique partition}\label{sec:mat-mult}

We denote by $\adj_\prec(G)$ the adjacency matrix of~$G$ along the linear order $\prec$ on~$V(G)$.
We first give a~simple, fast matrix-vector multiplication for $\adj_\prec(G)$ when $G$ comes with an interval biclique partition $(\mathcal B, \prec)$ with few bicliques.

\begin{theorem}\label{thm:ibp-vm-mult}
  There is an algorithm that, given an interval biclique partition $(\mathcal B, \prec)$ of an $n$-vertex graph $G$, and a~column vector $X \in \mathcal A^n$, computes $\adj_\prec(G)X$ over the additive group $\mathcal A$ in time $O(n+|\mathcal B|)$. 
\end{theorem}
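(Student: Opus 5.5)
We identify $V(G)$ with $[n]$ so that $\prec$ is the natural order, and write $M := \adj_\prec(G)$. Each biclique $B\in\mathcal B$ is given as $(a,b,c,d)$ with intervals $I=[a,b]$, $J=[c,d]$. Since $\mathcal B$ partitions $E(G)$ and every edge $ij$ contributes both entries $M_{ij}$ and $M_{ji}$, we have
\[M=\sum_{B=I\times J\in\mathcal B}\left(\mathbf 1_I\mathbf 1_J^T+\mathbf 1_J\mathbf 1_I^T\right),\]
where $\mathbf 1_I$ is the $\{0,1\}$ indicator vector of $I$. Write $Y:=MX$. A single term $\mathbf 1_I\mathbf 1_J^TX$ adds the scalar $\sigma_J:=\sum_{j\in J}X_j$ to every coordinate $Y_i$ with $i\in I$. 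Symmetrically, $\mathbf 1_J\mathbf 1_I^TX$ adds $\sigma_I$ to every coordinate indexed by $J$. So the task reduces to two standard steps: computing interval sums of $X$, and applying interval additions to $Y$.

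\textbf{Step 1: interval sums.} We compute prefix sums $S_0:=0$ and $S_k:=S_{k-1}+X_k$ for $k\in[n]$, in $O(n)$ additions in $\mathcal A$. Then $\sigma_{[c,d]}=S_d-S_{c-1}$ costs one subtraction, which is available since $\mathcal A$ is a group. This gives $\sigma_I$ and $\sigma_J$ for all bicliques in $O(|\mathcal B|)$ time.

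\textbf{Step 2: interval additions.} We use a difference array $D\in\mathcal A^{n+1}$, initialized to $0$. For each biclique $B=[a,b]\times[c,d]$ we perform four updates:
\[D_a \mathrel{+}= \sigma_J,\quad D_{b+1}\mathrel{-}=\sigma_J,\quad D_c\mathrel{+}=\sigma_I,\quad D_{d+1}\mathrel{-}=\sigma_I.\]
This costs $O(1)$ per biclique. Finally we set $Y_i:=\sum_{k\le i}D_k$ by a prefix-sum pass in $O(n)$ time.

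\textbf{Correctness.} $D$ is the discrete derivative of $Y$, and each update pair contributes exactly the intended constant on its interval and zero elsewhere, so the final prefix sums recover $Y$. Because the bicliques are edge-disjoint, each $1$-entry of $M$ is counted exactly once. Because the two sides of every biclique are disjoint, no diagonal entry is ever touched, which matches $M_{ii}=0$.

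Only additions and subtractions in $\mathcal A$ are used, and the total time is $O(n+|\mathcal B|)$. This amounts to the factorization $M=L_nSL_n$ hinted at in the introduction, with $S$ carrying at most $4|\mathcal B|$ nonzero entries from $\{-1,0,1\}$. The only points requiring care are the symmetric bookkeeping, so that both orientations of every edge are counted, and the boundary index $n+1$ in $D$.
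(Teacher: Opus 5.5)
Your proof is correct and is essentially the paper's: prefix sums of $X$, a difference array that receives $\pm$ the opposite side's interval sum for both orientations of every biclique, and a final prefix-sum pass. The paper symmetrizes $\mathcal B$ into $\mathcal B'$ and skips the update when $a_2=n$, where you do four updates per biclique and pad $D$ to index $n+1$.

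The only inaccuracy is in your closing aside about the factorization, which echoes the introduction, and it does not affect the theorem. Since both orientations enter $S=L_n^{-1}ML_n^{-1}$, $S$ can have up to $8|\mathcal B|$ nonzero entries, and these need not lie in $\{-1,0,1\}$. For instance, a single edge $\{2,3\}$ with $n\geq 4$ already gives seven nonzero entries and $S_{3,2}=2$.
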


\begin{proof}
  We set $M := \adj_\prec(G)$. 
  We start by computing the prefix sums of $X$.
  We set $X_{\leqslant 0} := 0$.
  For every $i$ going from 1 to $n$, we compute $X_{\leqslant i} := X_{\leqslant i-1} + X[i]$.
  This takes $O(n)$ time.
  We then symmetrize $\mathcal B$: for every $(a_1,a_2,b_1,b_2) \in \mathcal B$ (representing the biclique $[a_1,a_2] \times [b_1,b_2]$), we add $(b_1,b_2,a_1,a_2)$ to $\mathcal B$.
  This takes $O(|\mathcal B|)$ time.
  We denote by $\mathcal B'$ the resulting set of bicliques, with $|\mathcal B'| = 2|\mathcal B|$.

  We initialize a~\emph{difference vector} $D := (0, \ldots, 0)$ of length~$n$.
  Eventually, we wish to retrieve $MX$ as the prefix sums of~$D$.
  For every $(a_1,a_2,b_1,b_2) \in \mathcal B'$, we compute $x := X_{\leqslant b_2} - X_{\leqslant b_1 - 1}$, and update $D[a_1] := D[a_1] + x$, and if $a_2 < n$, $D[a_2+1] := D[a_2+1] - x$.
  This loop takes $O(|\mathcal B|)$ time.

  Finally, for $i$ going from 2 to $n$, we do: $D[i] := D[i] + D[i-1]$.
  This takes $O(n)$ time.
  We output $D$.
  The overall running time is $O(n+|\mathcal B|)$.

  \medskip

  We claim that the output $D$ is indeed equal to $MX$.
  Observe that \[(MX)[i] = \sum_{\substack{(a_1,a_2,b_1,b_2)\in\mathcal B'\\ a_1 \le i \le a_2}}\;\sum_{b_1 \le j \le b_2} X[j].\]
  $\sum_{b_1 \le j \le b_2} X[j] = X_{\leqslant b_2} - X_{\leqslant b_1 - 1}$ is indeed the quantity $x$ that we initially add to $D[a_1]$ and subtract from $D[a_2+1]$ (when $a_2 < n$).
  As a~result, when $D$ is output, the contribution of $(a_1,a_2,b_1,b_2)$ is indeed to add $x$ at every index between $a_1$ and $a_2$ (both included).
\end{proof}

We can now wrap up the announced matrix-multiplication algorithm.

\matmult*

\begin{proof}
 Let $G$ be any $n$-vertex graph in~$\mathcal C$, $\prec$ be any linear ordering on~$V(G)$, and $M := \adj_{\prec}(G)$.
 By \cref{thm:vers-tww-to-stm}, we get a~signed tree model of~$G$ with $O(n \log n)$ transversal pairs.
 This takes time $O((|E(G)|+n)\log n)=O(n^2 \log n)$ with high probability.
 By the first item of~\cref{thm:main}, we convert the signed tree model into an interval biclique partition $(\mathcal B,\prec')$ with $|\mathcal B|=O(n \log n)$ in time $O(n \log^2 n)$.

 We compute $M' := \adj_{\prec'}(G)$ in $O(n^2)$ time.
 Let $P$ be the permutation matrix (defined by $\prec, \prec'$) such that $M = PM'P^T$ where $P^T$ is the transpose (thus also the inverse) of~$P$.
 We compute the table of the permutation $\sigma$ (resp.~$\sigma^{-1}$) corresponding to $P$ (resp. to~$P^T$). 
 Let $N_1, N_2, \ldots, N_n$ be the vectors formed by the successive columns of~$N$.
 We will perform each of the $n$ matrix-vector multiplications $MN_i$ in $O(n \log n)$ time.
 
 For each $i \in [n]$, we compute $MN_i$ as $P(M'(P^TN_i))$.
 We use the sparse encoding of $P^T$ (i.e., $\sigma^{-1}$) to compute $N'_i := P^TN_i$ in $O(n)$ time.
 By~\cref{thm:ibp-vm-mult}, we then compute $N''_i := M'N'_i$ in $O(n+|\mathcal B|)=O(n \log n)$ time, using $\mathcal B$.
 We compute $PN''_i$ (equal to $MN_i$) in $O(n)$ time with the sparse encoding $\sigma$ of~$P$.
 We finally arrange the column vectors $MN_i$ in an~$n \times n$ matrix (equal to $MN$).

 The overall running time is $O(n^2 \log n)$.
\end{proof}

\section{Quadratic model checking for merge-width}\label{sec:mw}

In this section, we show that fast algorithms to compute distances can be used to improve the complexity
of the first-order model checking algorithm of Dreier and Toruńczyk which motivates merge-width~\cite{merge-width}.
For a~fixed first-order formula~$\phi$ and a~graph~$G$ on~$n$ vertices given with an appropriate witness of bounded merge-width,
this algorithm tests whether~$G$ satisfies~$\phi$ in time~$O(n^3)$.
At a~very high level, the algorithm follows a~construction sequence,
and computes at each step of this sequence the \emph{local type} of each vertex by dynamic programming.

Deep inside this algorithm, a~subroutine involves computing distances:
greedily finding a~maximal \emph{scattered} sets~$S$,
i.e., a~set of vertices pairwise at distance more than~$r$.
We argue that in the setting this subroutine is used, a~linear-size distance model can be computed using the results of~\cite{bannach2024dag}, resulting in a~speedup from~$O(n^2)$ to~$O(n)$ for this subroutine, and proportionally from~$O(n^3)$ to~$O(n^2)$ for the full algorithm.

While we give pointers to the specific places in~\cite{merge-width} where this scattered set problem appears,
we do so without going into details regarding the definitions and statements using it.
This section thus does not require familiarity with first-order logic and model checking.
\medskip

Call a~construction sequence \emph{positive} if it contains only positive resolves.
Applying \cref{lem:cstr-seq-to-stm} to a~positive merge sequence gives a~positive tree model.
Since a~positive tree model is a~special case of DAG compression, the technique of Bannach, Marwitz, and Tantau~\cite{bannach2024dag} (\cref{thm:DAG-comp}) can directly be applied; the results of \cref{sec:sssp} are not needed here. Thus,
\begin{lemma}\label{lem:pos-cstr-seq-to-dist-model}
    Given a~positive construction sequence for a~graph~$G$ with length $m$ one can compute a~distance model for~$G$ with 0--1 weights of size~$O(m)$, in time~$O(m)$.
\end{lemma}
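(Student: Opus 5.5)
The plan is to chain \cref{lem:cstr-seq-to-stm} with \cref{thm:DAG-comp}. We first apply \cref{lem:cstr-seq-to-stm} to the positive construction sequence. It has at most $m$ resolve operations, so in time $O(m+n)$ we obtain a tree model with at most $m+n$ transversal pairs. The tree nodes are the parts appearing in the sequence, and the transversal pairs are the resolves. Since only positive resolves occur, every transversal pair is positive, including those coming from self-resolves. Indeed, the loop-removal step of \cref{lem:tree-model-remove-loops} only adds pairs of the same sign as an ancestor loop, hence only positive ones. The result is therefore a positive tree model $(T,B)$ of $G$.

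Note that $n=O(m)$ whenever $G$ has at least two vertices and the sequence builds a tree: the sequence contains $n-1$ merges, or at least every vertex must be touched. If the sequence is not required to end with the trivial partition, we can simply complete it with the missing merges. This increases the size by $O(n)$, which is fine as long as $n=O(m)$. In the degenerate case where $G$ is edgeless, the distance model is trivial. To be safe, the claimed bound can be stated as $O(m)$ under the standing convention $m\ge n-1$.

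Next, we view the positive tree model as a DAG compression. Let $D$ be $T$ with all edges oriented from parent to child, so the sinks of $D$ are exactly the leaves, i.e.\ $V(G)$. Let the compressed edges be $B$. In a positive tree model, $uv\in E(G)$ if and only if some transversal edge $xy$ has $x\preceq_T u$ and $y\preceq_T v$. This is exactly the DAG-compression condition, since there are directed paths $x\to u$ and $y\to v$ in $D$. Here the lack of anti-edges is crucial: coverage need not be minimal, so crossings are harmless. This is also the only point needing care, namely checking that the ``lowest pair'' semantics of signed tree models collapses to plain existence when all pairs are positive. The size of this DAG compression is $|V(T)|+|E(T)|+|B|=O(n)+O(m+n)=O(m)$.

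Finally, \cref{thm:DAG-comp} turns this DAG compression into a distance model with 0--1 weights of size $O(m)$, in time $O(m)$. Each step is linear, so the total running time is $O(m)$.
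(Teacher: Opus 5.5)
Your proof is correct and follows the paper's argument exactly: apply \cref{lem:cstr-seq-to-stm} to obtain a positive tree model, note that a positive tree model is a special case of a DAG compression, and invoke \cref{thm:DAG-comp}. Your care about the size bound is warranted, since the paper tacitly assumes $n=O(m)$ (for instance, a sequence ending in the trivial partition). Your aside that every vertex must be touched is not quite right, because isolated vertices need not be, but the convention $m\ge n-1$ you fall back on is the correct fix.
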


Computing distances in a~graph given by a~positive construction sequence arises very naturally when considering merge-width.
Indeed, recall that for a~construction sequence $(\Pc_1,E_1,N_1),\dots,(\Pc_m,E_m,N_m)$,
the width is defined by considering at each step~$i$ the distances
in the graph $(V,E_i \cup N_i)$ with all resolved pairs as edges.
\begin{observation}\label{obs:gaifman-pos-cstr-seq}
    For any construction sequence $(\Pc_1,E_1,N_1),\dots,(\Pc_m,E_m,N_m)$ and any step~$i$,
    the graph $(V,E_i \cup N_i)$ is constructed by the \emph{positive} construction sequence
    $(\Pc_1,E_1 \cup N_1, \varnothing), \dots, (\Pc_i,E_i \cup N_i, \varnothing)$.
\end{observation}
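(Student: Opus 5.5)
The plan is to check step by step that the sequence $(\Pc_1,R_1,\varnothing),\dots,(\Pc_i,R_i,\varnothing)$, with $R_j := E_j \cup N_j$, is a valid positive construction sequence. I will use induction on $j \in [i]$, and the key point is that every operation of the original sequence, positive or negative, becomes a positive operation on the same parts. Since the last triple is $(\Pc_i,R_i,\varnothing)$, the constructed graph is by definition $(V,R_i) = (V,E_i \cup N_i)$, which is the claim.

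For the base case, $\Pc_1$ is the partition into singletons and $R_1 = E_1 \cup N_1 = \varnothing$, so $(\Pc_1,\varnothing,\varnothing)$ is a legitimate initial triple.

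For the inductive step, suppose $(\Pc_{j+1},E_{j+1},N_{j+1})$ is obtained from $(\Pc_j,E_j,N_j)$ by one operation, and show that $(\Pc_{j+1},R_{j+1},\varnothing)$ is obtained from $(\Pc_j,R_j,\varnothing)$ by one positive operation. I split into three cases according to the original operation.
\begin{itemize}
  \item \emph{Merge of $A,B$.} The resolved sets do not change, so $R_{j+1}=R_j$. The same merge of $A,B$ turns $\Pc_j$ into $\Pc_{j+1}$.
  \item \emph{Positive resolve of $A,B \in \Pc_j$.} Here $R_{j+1} = E_j \cup (AB \setminus N_j) \cup N_j = R_j \cup AB$. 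A positive resolve of the same $A,B$ in the new sequence, whose non-edge set is empty, yields $R_j \cup (AB \setminus \varnothing) = R_j \cup AB$.
  \item \emph{Negative resolve of $A,B \in \Pc_j$.} Symmetrically, $R_{j+1} = E_j \cup N_j \cup (AB \setminus E_j) = R_j \cup AB$. The same positive resolve of $A,B$ produces exactly this set, and the partition is unchanged in both sequences.
\end{itemize}
In all three cases the third component stays $\varnothing$, and the partitions $\Pc_j$ coincide throughout, so the new sequence is a positive construction sequence.

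I do not expect a real obstacle, since each case is a one-line set identity. The only point needing care is that $AB\setminus N_j$ and $AB\setminus E_j$ both merge back to $AB$ once united with $R_j$. This holds because $AB \cap N_j \subseteq R_j$ and $AB \cap E_j \subseteq R_j$.
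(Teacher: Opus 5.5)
Your proposal is correct and is exactly the routine verification the paper leaves implicit; it states this as an observation with no written proof. Copying every merge, and replacing each positive or negative resolve of $A,B$ by a positive resolve of the same parts, works because $R_j \cup (AB\setminus N_j) = R_j \cup (AB \setminus E_j) = R_j \cup AB$.
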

The distances in the same graph $(V,E_i \cup N_i)$ are equally important in the model checking algorithm of~\cite{merge-width}, where this graph is the \emph{Gaifman graph} of the relational structure considered.
Specifically, the subroutine they use is the following problem.
Call a~set~$S$ \emph{$r$-scattered} if for any $x,y \in S$, $\dist(x,y) > r$.
\defproblem{\textsc{$c$-Bounded $r$-Scattered Maximal Subset}}{%
    a~graph~$G$ and a~subset~$X$ of vertices.
}{%
    a~subset~$S \subseteq X$ which is (1) $r$-scattered, (2) of size at most~$c$, and (3) is inclusion-wise maximal with these properties.
}
This is an extremely simple problem! It asks to greedily pick vertices in~$X$ pairwise at distance more than~$r$,
and allows to stop as soon as~$c$ (a constant) of them have been found.
The algorithm of~\cite{merge-width} ingeniously uses solutions to this problem
in places where one might expect a~\emph{maximum} $r$-scattered to be needed, rather than an inclusion-wise maximal one.
Specifically, the solutions to this problem define the \emph{scatter types} of \cite[Section~4.2]{merge-width}.

The obvious greedy algorithm for \textsc{$c$-Bounded $r$-Scattered Maximal Subset} is as follows: pick any vertex~$x$ of~$X$ to add to~$S$,
explore the $r$-ball around~$x$ with a~BFS, remove any vertex encountered from~$X$,
and repeat, stopping when either~$X$ is empty or~$c$ vertices have been added.
This takes time $O(cm + n)$ in a~graph with~$n$ vertices and~$m$ edges.
Since~$c$ is a~constant in the merge-width model checking setting, we are not concerned with removing the dependency in~$c$.
However we can use small distance models to improve the running time to~$O(cn)$.

Recall that in the setting of \cref{thm:DAG-comp}, the distance models obtained for a~graph~$G$
are directed graphs with 0--1 edge weights.
In the directed setting, $S$ being $r$-scattered should be understood as:
for $x \neq y$ in~$S$, there is no directed path of length~$r$ from~$x$ to~$y$.
The previous greedy algorithm is easily adapted to this directed setting with 0--1 weights:
the only difference is that one should perform two BFSes from~$x$, one for out-edges and one for in-edges.
Finally, if~$H$ is a~distance model of~$G$, then a~subset $S \subseteq V(G)$ is by definition $r$-scattered in~$G$ if and only if it is $r$-scattered in~$H$.

We thus obtain the following.
\begin{lemma}\label{lem:scattered-set-dist-model}
    If~$G$ is a~graph given by a~distance model with 0--1 weights of size~$m$,
    then \textsc{$c$-Bounded $r$-Scattered Maximal Subset} in~$G$ can be solved in time~$O(cm)$.
\end{lemma}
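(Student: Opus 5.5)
We run the greedy algorithm for \textsc{$c$-Bounded $r$-Scattered Maximal Subset} directly on the distance model $H$ (a weighted digraph with 0--1 weights, $V(G) \subseteq V(H)$, and $|V(H)|+|E(H)| = O(m)$). We maintain the candidate set $X$ as a boolean array indexed by $V(H)$ together with a list of its remaining elements, so that picking and deleting elements costs amortized $O(1)$. In each round we pick any remaining $x \in X$, add it to $S$, and compute two truncated distance explorations from $x$. The first uses out-edges and finds all $y$ with $\dist_H(x,y) \le r$. The second uses in-edges, i.e., runs on the reverse of $H$, and finds all $y$ with $\dist_H(y,x) \le r$. Every vertex of $V(G)$ reached by either exploration is deleted from $X$. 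We stop when $X$ is empty or $|S| = c$.

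Since the weights are $0$ and $1$, each exploration is a 0--1 BFS with a deque: weight-$0$ edges push to the front, weight-$1$ edges push to the back. It runs in $O(|V(H)|+|E(H)|) = O(m)$ time, and we can prune it at distance $r$. There are at most $c$ rounds with two explorations each, and the bookkeeping on $X$ costs $O(|V(G)|) = O(m)$ in total. This gives $O(cm)$ overall. We must not reinitialize arrays of size $|V(H)|$ more than $O(c)$ times, and per-round resetting costs $O(m)$, which is fine.

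For correctness, first note that $S$ is $r$-scattered in $G$. Distance equivalence on $V(G)$ gives $\dist_G(x,y) = \dist_H(x,y)$ for all $x,y \in V(G)$. When $x$ is picked, every still-present $y$ with $\dist_H(x,y) \le r$ or $\dist_H(y,x) \le r$ is removed, so later picks are at distance more than $r$ from $x$ in both directions. Because $G$ is undirected, the two directed distances in $H$ are equal anyway; the second exploration is a safety measure that costs only a constant factor. Next, $S$ is maximal. Either we stopped at $|S| = c$, in which case no element can be added without violating the size bound, or $X$ became empty. In the latter case every vertex of the original $X$ not in $S$ was removed because it lies within distance $r$ of some element of $S$, so adding it would break scatteredness.

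The only point needing care is that the explorations must run on $H$ rather than on $G$. Intermediate vertices of $V(H) \setminus V(G)$ are traversed, but only vertices of $V(G)$ count as reached. Distances through zero-weight edges must be computed exactly, which the deque handles. With this in place, no step is a real obstacle; the lemma reduces to the cost accounting of $c$ rounds of linear-time 0--1 BFS.
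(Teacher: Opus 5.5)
Your proof is correct and follows the same route as the paper. Both run the greedy algorithm directly on the distance model $H$, with at most $c$ rounds of two truncated 0--1 BFS explorations (one along out-edges, one along in-edges) from each picked vertex, and use distance equivalence on $V(G)$ to transfer $r$-scatteredness and maximality back to $G$. Your extra details (the deque-based 0--1 BFS, the bookkeeping bound $|V(G)| \le m$, and the remark that the in-edge exploration is redundant for undirected $G$) only make explicit what the paper leaves implicit.
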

Let us summarize how this improves the complexity of the algorithm of~\cite{merge-width}, without diving into any of its details.
The main technical result used in their algorithm is the \emph{locality theorem} of \cite[Section~4]{merge-width}.
Its algorithmic statement is \cite[Lemma~4.2]{merge-width}.
While the latter is stated in full generality, in the algorithm of \cite[Section~5]{merge-width},
it is applied to the structures of the form $(\Pc_i,E_i,N_i)$ encountered in the given construction sequence,
with distances being understood in the graph $(V,E_i \cup N_i)$.
By \cref{obs:gaifman-pos-cstr-seq,lem:pos-cstr-seq-to-dist-model}, and the fact that construction sequences always have length linear in the number of vertices (\cref{lem:short-cstr-seq}),
one can thus compute in time~$O(n)$ distance models of size~$O(n)$ for these structures.

Now the complexity of \cite[Lemma~4.2]{merge-width} comes entirely from a~bounded number of calls to \cite[Observation~4.8]{merge-width}, which deals with computing scatter types.
As previously mentioned, these scatter types are themselves defined by the solutions to a~bounded number of calls to 
\textsc{$c$-Bounded $r$-Scattered Maximal Subset}, with constant~$c$.
Thus, assuming an~$O(n)$ size distance model to be given, the scatter type is computed in~$O(n)$ time by \cref{lem:scattered-set-dist-model}.
Thus, \cite[Lemma~4.2]{merge-width} can be implemented in time~$O(n)$ when the distance model is given,
improving upon the running time claimed in~\cite{merge-width} which was linear in the number of edges.

In practice, \cite[Lemma~4.2]{merge-width} is applied in the algorithm of \cite[Section~5]{merge-width}
to graphs with a~quadratic number of edges.
There, we thus improve the running time from~$O(n^2)$ to~$O(n)$.
One may check that this is the only bottleneck, and the merge-width algorithm \cite[Theorem~1.11]{merge-width}
thus improves proportionally from~$O(n^3)$ to~$O(n^2)$.

\end{document}